\newcounter{theorem}
\renewcommand{\thetheorem}{\arabic{section}.\arabic{theorem}}
\newenvironment{thm}[1]{\par
\begin{sloppypar}\refstepcounter{theorem}%
\noindent{\bf #1 \thetheorem.}\it{}}{\end{sloppypar}}
\newenvironment{theorem}{\begin{thm}{Theorem}}{\end{thm}}
\newenvironment{proposition}{\begin{thm}{Proposition}}{\end{thm}}
\newenvironment{corollary}{\begin{thm}{Corollary}}{\end{thm}}
\newenvironment{lemma}{\begin{thm}{Lemma}}{\end{thm}}
\newenvironment{defi}[1]{\par
\begin{sloppypar}\refstepcounter{theorem}%
\noindent{\bf #1 \thetheorem.}\rm{}}{\end{sloppypar}}
\newenvironment{definition}{\begin{defi}{Definition}}{\end{defi}}
\newenvironment{remark}{\begin{defi}{Remark}}{\end{defi}}
\newenvironment{hypothesis}{\begin{defi}{Hypothesis}}{\end{defi}}
\def\R{{\rm I\kern-.2em R}}
 \def\W{{\cal W}}   
\def\X{\mathcal X}
\def\V{\mathcal{V}}
\def\s0{\sigma_0}
\def\z{\mathfrak{z}}
\def\Rd{\mathbb{R}^d}
\def\bb1{{\rm{1}\hspace{-3pt}\mathbf{l}}}
\def\Ie0{[-\epsilon_0,\epsilon_0]}
\def\beq{\begin{equation}}
\def\eeq{\end{equation}}
\numberwithin{equation}{section}
\numberwithin{theorem}{section}
\title{Peierls substitution and magnetic pseudo-differential calculus}
\begin{document}

\maketitle

\begin{center}

\small{
Horia D. Cornean\footnote{Department of Mathematical Sciences,     Aalborg University, Fredrik Bajers Vej 7G, 9220 Aalborg, Denmark}, Viorel Iftimie\footnote{Institute of Mathematics of the Romanian Academy,
P.O. Box 1-764, RO-70700, Bucharest, Romania}, Radu Purice\footnote{Institute of Mathematics of the Romanian Academy, Research unit No. 3, 
P.O. Box 1-764, RO-70700, Bucharest, Romania, and
Laboratoire Europ{\'e}en Associ{\'e} CNRS Franco-Roumain Math-Mode
Bucharest
Romania}}

\end{center}

\vspace{0.5cm}

\begin{abstract}
We revisit the celebrated Peierls-Onsager substitution employing the magnetic pseudo-differential calculus for weak magnetic fields with no spatial decay conditions, when the non-magnetic symbols have a certain spatial periodicity. We show in great generality that the symbol of the magnetic band Hamiltonian admits a convergent expansion. Moreover, if the non-magnetic band Hamiltonian admits a localized composite Wannier basis, we show that the magnetic band Hamiltonian is unitarily equivalent to a Hofstadter-like magnetic matrix. In addition, if the magnetic field perturbation is slowly variable, then 
the spectrum of this matrix is close to the spectrum of a Weyl quantized, minimally coupled symbol.   
\end{abstract}

\tableofcontents

\section{Introduction and main results}

The mathematical description of the quantum theory of solids
is mainly based on the spectral and dynamical analysis of Schr\"odinger-type operators with
periodic coefficients. A central problem is concerned with the response of a periodic system when submitted to an external
magnetic field which does not
vanish at infinity. The main technical difficulty in dealing with such a problem comes
from the fact that the
perturbation represented by the magnetic field is no longer an analytic
perturbation. In particular, the
difference between the perturbed and the unperturbed observables is not 
small in operator norm for small values of the magnetic field intensity and a norm convergent 
perturbative approach is impossible.

The solution proposed by physicists (see section 7.3
in \cite{Pe}), the so-called {\it
Peierls-Onsager substitution}, is to isolate as main contribution the
magnetic
quantization of the Bloch eigenvalues and, modulo some gauge transformation, to
obtain a power series development with respect to the magnetic field intensity. In that framework, {\it magnetic quantization} means just to replace the
canonical moment variables with the magnetic ones by the so called {\it minimal
coupling procedure}. In mathematical terms this amounts to give a precise meaning to functions of the operator $\theta-A(i\nabla_\theta)$
for $\theta\in\mathbb{T}_*$ the $d$-dimensional torus in the momentum space (see
\cite{FrTe} and references therein). Moreover, one has to to
deal with an apparent lack of gauge covariance of this procedure. Although
a rather rich literature is devoted to the precise mathematical treatment of
this problem (see \cite{B, Bus, dNL,FrTe,  GRT,HS, IP14,Ne-RMP, PST}), a complete understanding is still
missing. 

The main objective of our paper is to show that the {\it magnetic pseudodifferential
calculus} developed in \cite{MP1,IMP1,IMP2} and its {\it integral kernel}
counterpart developed in \cite{Ne05} can give a precise mathematical framework and 
a gauge covariant formulation of the above physical proposal by -roughly speaking- replacing $E_n(k-A(x))$ of the Peierls-Onsager substitution method (here $E_n$ is the $n$-th Bloch eigenvalue of the unperturbed system) with the magnetic pseudodifferential operator 
$\mathfrak{Op}^A(E_n)$.
In this paper we shall mainly concentrate on the so-called {\it isolated
spectral island} situation (see Hypothesis \ref{is-sp-band}) and leave the case without a spectral gap for a
forthcoming paper.

Let us point out from the very beginning that our treatment is gauge
covariant. The magnetic fields are supposed to be smooth and bounded but no
condition of slow variation is imposed in general. Of course, under an extra hypothesis of slow variation, stronger results may be obtained and we shall compare them with those already existing in the literature. Let us also underline that the results
obtained cover a large class of Hamiltonians described by
periodic pseudodifferential
operators.

One of the conclusions of our analysis is that
once we can isolate a spectral island of a periodic Hamiltonian and define
symbols of the associated band operators
(Hamiltonian, projection, etc.), then their corresponding magnetic counterparts, while
being 
singular perturbations of the free ones, can be well approximated in the norm topology by the magnetic
quantization of the `free' symbols.  Moreover, when a composite Wannier basis is supposed to exist
for the unperturbed spectral projection \cite{CHN, FMP}, then a generalization of the results in \cite{Ne-RMP} is
obtained.

\subsection{The framework and some notation}

We shall consider a $d$-dimensional configuration space with $d\geq 2$ and use the
notation $\X\equiv\Rd$. Although its dual is
canonically isomorphic to $\Rd$ we shall prefer to use the notation $\X^*$
in order to emphasize the dual variable. We shall denote by
$\langle \cdot ,\cdot \rangle :\X^*\times\X\rightarrow\mathbb{R}$
the duality relation. We shall consider $\Xi:=\X\times\X^*$ as a symplectic space with the canonical symplectic form
$\sigma^\circ(X,Y):=\langle\xi,y\rangle-\langle\eta,x\rangle$ where $X=(x,\xi)$ and $Y=(y,\eta)$.

We shall consider an algebraic basis $\{e_1,\ldots, e_d\}$ of $\Rd$ and the 
lattice $\Gamma=\oplus_{j=1}^d\mathbb{Z}e_j$; it gives an injective
homomorphism $\mathbb{Z}^d\subset\X$ that allows us to view $\Gamma$ as a
discrete subgroup of $\X$.
We consider the quotient group $\X/\Gamma$ that is canonically isomorphic to
the $d$-dimensional torus $\mathbb{T}^d$ that we shall
denote by  $\mathbb{T}$ when the subgroup
$\Gamma$ is evident; let us denote by $\pi:\X\rightarrow\mathbb{T}$ the
canonical projection onto the quotient.
Let us fix an \textit{elementary cell}:
$$
E\,:=\,\left\{y=\sum\limits_{j=1}^dt_je_j\in\Rd\,\mid\,
-1/2\leq t_j<1/2,\
\forall j\in\{1,\ldots,d\}\right\},
$$
and the
unique decomposition it induces for any $x\in\X$: $x=[x]+\hat{x}$ with
$[x]\in\Gamma$ and $\hat{x}\in E$.
The dual lattice of $\Gamma$ is 
defined as
$$
\Gamma_*\,:=\,\left\{\gamma^*\in\X^*\,\mid\,\langle\gamma^*,\gamma\rangle /(2\pi)\in \mathbb{Z}
,\ \forall\gamma\in\Gamma\right\}.
$$
Considering the dual basis $\{e^*_1,\ldots,e^*_d\}\subset\X^*$ defined by
$\langle e^*_j,e_k\rangle =(2\pi)\delta_{jk}$, we have $\Gamma_*=\oplus_{j=1}^d\mathbb{Z}e^*_j$. We define
$\mathbb{T}_*:=\X^*/\Gamma_*$ and
$E_*$ by a definition
similar to that of $E$ but with respect to the dual basis $\{e^*_j\}_{1\leq
j\leq d}$. We notice that
$\mathbb{T}_*$ is isomorphic to the dual group of $\Gamma$ and $\Gamma_*$ is
isomorphic to the dual group of $\mathbb{T}$ (in the
sense of abelian locally compact groups); moreover
let ${\pi}_*:\X^*\rightarrow\mathbb{T}_*$ be
the canonical quotient projection. 
For any $\gamma\in\Gamma$, $\gamma^*\in\Gamma_*$ or $x\in\X$, and $\xi\in\X^*$
we shall denote by $\sigma_{\gamma}$, $\sigma_{\gamma^*}$, $\sigma_x$ and
$\sigma_\xi$ the
corresponding characters on $\mathbb{T}_*$, $\mathbb{T}$,  $\X$ and $\X^*$ respectively,
i.e.
$$
\sigma_\gamma(\theta):=e^{-i\langle\theta,\gamma\rangle },\quad\sigma_{\gamma^*}(\pi(x)):=e^
{-i\langle\gamma^*,\pi(x)\rangle},\quad\sigma_x(\xi)=\sigma_\xi(x):=e^{-i\langle\xi,x\rangle}.
$$

Given some finite dimensional real Euclidean vector space $\mathcal{V}$, we
shall denote by $\mathscr{S}(\mathcal{V})$ and $\mathscr{S}^\prime(\mathcal{V})$
the Schwartz space of test functions and resp. its dual, the space of tempered
distributions, on $\mathcal{V}$; we shall denote by
$\langle \cdot ,\cdot \rangle_{\mathcal{V}}:\mathscr{S}^\prime(\V)\times\mathscr{S}
(\V)\rightarrow\mathbb
{C}$ the canonical duality. We shall constantly use the notation
$< v> :=\sqrt{1+|v|^2}$ for any $v\in\V$. 
We shall consider the spaces $BC(\V)$ of bounded
continuous functions with the
$\|\cdot\|_\infty$ norm.
We shall denote by $C^\infty(\mathcal{V})$ the space of smooth functions on $\mathcal{V}$ and by $C^\infty_{\text{\sf
pol}}(\mathcal{V})$ and by $BC^\infty(\V)$ its subspace of smooth functions
that are polynomially bounded together with all their derivatives or smooth and bounded
together with all
their derivatives.

Let us recall a class of H\"{o}rmander type symbols on $\Xi$ that we shall use.
For any
$s\in\mathbb{R}$ and any $\rho\in[0,1]$ we denote by
\beq
S^s_\rho(\X):=\{F\in
C^\infty(\Xi)\mid\nu^{s,\rho}_{a,b}(F)<\infty,\forall(a,b)\in\mathbb{N}
^d\times\mathbb{
N}^d\}
\eeq 
where
$\nu^{s,\rho}_{a,b}(f):=\underset{(x,\xi)\in\Xi}{\sup}\left|\langle \xi\rangle ^{-s+\rho|b|}
\big(\partial^a_x\partial^b_\xi f\big)(x,\xi)\right|$, 
$\forall(a,b)\in\mathbb{N}^d\times\mathbb{N}^d$; we recall that $|b|:=\sum\limits_{j=1}^{d}b_j$. A symbol $F$ in $S^s_\rho(\X)$ is called {\it elliptic}  
if there exist two positive constants $R$ and $C$ such that $|F(x,\xi)|\geq C\langle \xi\rangle ^s$ for any
$(x,\xi)\in\Xi$ with $|\xi|\geq R$. 

Let us recall that for
$h\in S^m_1(\X)$ the \textit{Weyl quantization} associates the operator
$\mathfrak{Op}(h)$ defined on $\mathscr{S}(\X)$ (see formula \eqref{OpA} with $A=0$), and having a natural extension by duality to
$\mathscr{S}^\prime(\X)$.

We denote by $\tau$ the action by translations both of $\X$ and of $\X^*$
on
the tempered distributions defined on the respective spaces, i.e. $\big(\tau_zF\big)(x):=F(x+z)$ for any $F\in C(\X)$, and any $(z,x)\in\X\times\X$. We shall consider  symbols of H\"{o}rmander
type $S^m_{1}(\X)$ satisfying
$\big((\tau_\gamma\otimes\bb1)h\big)(x,\xi)=h(x,\xi)$ for
any $\gamma\in\Gamma$ and for any $(x,\xi)\in\Xi$
and we denote by $S^m_1(\X)_\Gamma$ the space of these symbols.
We shall use the following notations:
\begin{itemize}
\item $\mathscr{S}^\prime_\Gamma(\X)\ :=\
\left\{u\in\mathscr{S}^\prime(\X)\,\mid\,\tau_\gamma u=u,\
\forall\gamma\in\Gamma\right\}$, the space of $\Gamma$-periodic distributions on
$\X$.
\item $\mathscr{S}(\mathbb{T}):=C^\infty(\mathbb{T})$ with the usual Fr\'{e}chet
topology; $\mathscr{S}^\prime(\mathbb{T})$ is the dual of
$\mathscr{S}(\mathbb{T})$. We shall denote by
$\langle\cdot,\cdot\rangle_{\mathbb{T}}$ the natural bilinear map
defined by the duality relation on
$\mathscr{S}^\prime(\mathbb{T})\times\mathscr{S}(\mathbb{T})$.
\end{itemize}
\begin{remark}\label{R.A.9}
We have the identification
$
\mathscr{S}(\mathbb{T})\ \cong\
\mathscr{S}^\prime_\Gamma(\X)\cap C^\infty(\mathcal{X})
$
obtained by transporting a $\Gamma$-periodic function
to the quotient by $\Gamma$.
It is also well known that the spaces $\mathscr{S}^\prime_\Gamma(\X)$ and
$\mathscr{S}^\prime(\mathbb{T})$ can be identified by a natural
topological isomorphism.
\end{remark}

\begin{hypothesis}\label{Hyp-h}
 From now on we shall consider a fixed real elliptic symbol $h\in
S^m_1(\X)_\Gamma$ for some $m>0$.
\end{hypothesis}
Let us recall that for
a $h\in S^m_1(\X)$ which is real, elliptic, and with $m>0$, the operator
$\mathfrak{Op}(h)$ has
a self-adjoint extension $H$ with domain
$\mathscr{H}^m(\X):=\left\{f\in L^2(\X)\mid(\bb1-\Delta)^{m/2}f\in
L^2(\X)\right\}$. Let us consider its resolvent $(H-\z)^{-1}$ at
$\z\in\mathbb{C}\setminus\sigma(H)$; it is known that
it is a pseudodifferential operator with a symbol $r_\z(h)$ of class
$S^{-m}_\rho(\X)$.
Let us notice that if $h\in S^m_{1}(\X)_{\Gamma}$, the operator
$\mathfrak{Op}(h)$ commutes with the action of $\Gamma$ through translations
and thus we can define its restriction to test functions on the torus:
\beq
\left.\mathfrak{
Op}(h)\right|_{\mathscr{S}_{\Gamma}(\X)}
:\mathscr{S}(\mathbb{T})\rightarrow\mathscr{S}(\mathbb{T}).
\eeq
This allows us to consider its self-adjoint extension $H_{\mathbb{T}}$ defined on
$\mathscr{H}^m(\mathbb{T})\cong\mathscr{S}
^\prime_{\Gamma}(\X)\cap\mathscr{H}_{\rm loc}^m(\X)$ and acting in the
Hilbert space
$L^2(\mathbb{T})\cong\mathscr{S}
^\prime_{\Gamma}(\X)\cap L_{\rm loc}^2(\X)$. 

Given a Hilbert space $\mathcal{K}$ and two vectors $u$ and $v$ in $\mathcal{K}$, we shall denote by 
$|u\rangle\langle v|$ the rank 1 operator in $\mathcal{K}$ associated to it, i.e. the operator
given by $|u\rangle\langle v|\; w:=u\langle v,w\rangle_\mathcal{K}$, for any $w\in\mathcal{K}$.

We shall need the following notation for different types of Fourier transforms:
$$
\mathcal{F}_{\Xi}\phi(\overline{X}):=(2\pi)^{-d}\int_{\Xi}e^{i\sigma(X,Y)}
\phi(Y)\,dY,\ \forall\phi\in\mathscr{S}(\Xi);\qquad\overline{(x,\xi)}:=(\xi,x);
$$
$$
\mathcal{F}_{\X}\phi(\xi):=(2\pi)^{-d/2}\int_{\X}e^{-i\langle \xi,x\rangle }\phi(x)\,dx,\ \forall\phi\in\mathscr{S}(\X);
$$
$$
\mathcal{F}_{\Gamma}\underline{v}(\theta):=|E|^{-d/2}\underset{\gamma\in\Gamma}{
\sum}\underline{v}(\gamma)e^{-i\langle \theta,\gamma\rangle },\qquad\forall\underline{v}\in
l^1(\Gamma).
$$
For any function $\mathbb{T}_*\ni\theta\mapsto\varphi(\theta)\in\mathbb{C}$ we denote by 
$\widetilde{\varphi}(\xi):=\varphi(\pi_*(\xi))$ its periodic extension to $\X^*$.  

Finally let us recall the \textbf{Poisson formula} (\cite{Ho1}) that we shall use several times: 
\begin{equation}\label{II.4.7}
\frac{(2\pi)^d}{|E_*|}\underset{\gamma\in\Gamma}{\sum}
\delta_\gamma\ =\
\underset{\gamma^*\in\Gamma_*}{\sum}\sigma_{\gamma^*},
\end{equation}
as distributions in
$\mathscr{S}^\prime(\X)$ with series converging in the weak sense.

\subsection{The Bloch-Floquet transform}
\label{B-bands}

Let us recall some basic facts concerning the
Bloch-Floquet method and fix some notation that will be used in the following.
We consider
\beq
\mathscr{F}\cong\left\{\hat{F}\in L^2_{\text{\sf
loc}}(\X\times\X^*)\,\mid\,\tau_\gamma\hat{F}=\sigma_{\gamma}\hat{F}\,
\ \forall\gamma\in\Gamma , \tau_{\gamma^*}\hat{F}=\hat{F}
\ \forall\gamma^*\in\Gamma^*\right\}
\eeq
with the Hilbertian norm
$\left\|\hat{F}\right\|^2:=\int_{E}\int_{
E_*}\left|\hat{F}(x,\xi)\right|^2\,d\xi\,dx$ and the {\it the Bloch-Floquet}
unitary map
\beq
\mathscr{U}_\Gamma:L^2(\X)\rightarrow\mathscr{F},\quad
\big(\mathscr{U}_\Gamma
f\big)(x,\xi)=\underset{\gamma\in\Gamma}{\sum}
\sigma_\gamma(\xi)f(x-\gamma)
\eeq
with its inverse having the explicit form
\beq
\left(\mathscr{U}_\Gamma^{-1}\hat{F}\right)(x_0+\gamma)=\left|\mathbb{T
}_*\right|^{-1}\int_{\mathbb{T}_*}\sigma_\gamma(\theta)\hat{F}(x_0,
\theta)d\theta.
\eeq
\begin{remark}\label{tau-U-sigma}
 For any $\gamma\in\Gamma$ we notice that $\mathscr{U}_\Gamma\tau_\gamma=\sigma_{\gamma}\mathscr{U}_\Gamma$.
\end{remark}

\vspace{0.2cm} 

Let us briefly recall the {\it direct integral} structure of $\mathscr{F}$ that
is very useful in the study of periodic $\Psi$DO. Using Fubini Theorem we
consider each function in $\mathscr{F}$ as a function defined on $\mathbb{T}_*$
and taking values in $L^2_{\text{\sf loc}}(\X)$ and define for each
$\theta\in\mathbb{T}_*$ the complex linear space
\beq
\mathscr{F}_\theta:=\left\{f\in L^2_{\text{\sf loc}}(\X)\,\mid\,\tau_\gamma
f=\sigma_{\gamma}(\theta)f\equiv\sigma_{\theta}(\gamma)f\right\}.
\eeq
For any $\theta\in\mathbb{T}_*$ we define the Hilbertian norm
$\|f\|_{\mathscr{F}_\theta}^2=\int_E|f(x)|^2dx$ and we can define a family of
unitary operators
$\mathscr{V}_\theta:\mathscr{F}_\theta\overset{\simeq}{\rightarrow}
L^2(\mathbb{T})$ by the formula
$\mathscr{V}_\theta\hat{F}\big(\pi(x)\big):=\sigma_{-\theta}(x)\hat{F}(x,
\theta)$. 
Following the procedure in \cite{Dix} one can consider on the field of Hilbert
spaces $\{\mathscr{F}_\theta\}_{\theta\in\mathbb{T}_*}$ over the compact smooth
manifold $\mathbb{T}_*$ the measurable structure defined by the constant field
$L^2(\mathbb{T})$ and the field of unitaries
$\{\mathscr{V}_\theta\}_{\theta\in\mathbb{T}_*}$ and define the associated direct
integral of
Hilbert spaces $\int_{\mathbb{T}_*}^\oplus\mathscr{F}_\theta\,d\theta$ that will
be unitarily isomorphic to $\mathscr{F}$.

For any $f\in\mathscr{S}(\X)$ we can write
$$
\big(\mathscr{U}_\Gamma\mathfrak{Op}(h)f\big)(\theta,x)\ =\
\underset{\gamma\in\Gamma}{\sum}e^{-i<\theta,\gamma>}\big(\mathfrak{Op}
(h)f\big)(x-\gamma)=\Big[\left.\mathfrak{Op}(h)\right|_{\mathscr{F}
_\theta}\Big(\underset{\gamma\in\Gamma}{\sum}e^{-i<\theta,\gamma>}\big(\tau_{-\gamma}
f\big)\Big)\Big](x),
$$
with the series converging for the weak topology of tempered distributions.
Thus
\beq\label{op-h-theta}
\mathscr{U}_\Gamma\mathfrak{Op}(h)
\mathscr{U}_\Gamma^{-1}\ =\
\int_{\mathbb{T}_*}^\oplus\left.\mathfrak{Op}(h)\right|_{\mathscr{F}
_\theta}\,d\theta.
\eeq

\begin{definition}\label{defgama}
We consider the following spaces of tempered distributions on $\Xi$:
\begin{enumerate}
 \item
$
\mathscr{S}^\prime_\Gamma(\Xi):=\left\{F\in\mathscr{S}
^\prime(\Xi)\mid
\tau_\gamma F=\sigma_{\gamma}F,\forall\gamma\in\Gamma,\
\tau_\gamma^*F=F,\forall\gamma^*\in\Gamma_*\right\},
$
\item
$
\mathscr{F}_{(s)}:=\left\{F\in\mathscr{S}^\prime_\Gamma(\Xi)\mid
\big((\bb1-\Delta)^{s/2}\otimes\bb1\big)F\in\mathscr{F}\right\},
\ \forall s\in\mathbb{R}
$
\item
$
\mathscr{F}_{(s),\theta}:=\left\{f\in\mathscr{F}_\theta\mid
(\bb1-\Delta)^{s/2}f\in\mathscr{F}_\theta\right\},
\ \forall s\in\mathbb{R},\ \forall\theta\in\mathbb{T}_*.
$
\end{enumerate}
\end{definition}

The operator $\hat{H}:=\mathscr{U}_\Gamma H\mathscr{U}_\Gamma^{-1}$
acting in the Hilbert space $\mathscr{F}$ is self adjoint on the domain
$\mathscr{F}_{(m)}$. We notice that the operator
$\mathscr{U}_\Gamma\mathfrak{Op}(h)\mathscr{U}_\Gamma^{-1}$ takes
any smooth function
$\mathbb{T}_*\ni\theta\mapsto f(\theta)\in\mathscr{F}_{(m),\theta}$ into
the smooth function
$\mathbb{T}_*\ni\theta\mapsto\left.\mathfrak{Op}
(h)\right|_{\mathscr{F}_\theta}f(\theta)\in\mathscr{F}_{\theta}$. We can also consider $\hat{H}(\theta)$ as the self-adjoint operator defined by 
$\left.\mathfrak{Op}(h)\right|_{\mathscr{F}
_\theta}$ in the Hilbert space $\mathscr{F}_\theta$ with domain
$\mathscr{F}_{(m),\theta}$ and due to its ellipticity we have that
$\mathscr{F}_{(m),\theta}=\left\{f\in\mathscr{F}_\theta\mid\left(\left.
\mathfrak{Op}(h)\right|_{\mathscr{F}_\theta}\right)f\in\mathscr{F}
_\theta\right\}$.
On these subspaces the quadratic graph-norms
$
\|f\|_{(m),\theta}\ :=\ \left\|\big(\bb1-\Delta\big)^{m/2}f\right\|_{L^2(E)}^2
$
define a field of Hilbert spaces that are unitarily transformed into
$\mathcal{H}^m(\mathbb{T})$ through the unitaries
$\{\mathscr{V}_\theta\}_{\theta\in\mathbb{T}_*}$. Using once again the theory in
\cite{Dix} we can define the direct integral Hilbert space
$\int_{\mathbb{T}_*}^\oplus\mathscr{F}_{(m),\theta}\,d\theta$ that will be
unitarily isomorphic to $\mathscr{F}_{(m)}$. Due to the above arguments, we can
also define
the measurable field of bounded operators 
$
\left\{\hat{H}(\theta)\in\mathbb{B}\big(\mathscr{F}_{(m),\theta};\mathscr{F}
_\theta\big)\right\}_{\theta\in\mathbb{T}_*}
$
and the associated direct integral
$$
\hat{H}=\mathscr{U}_\Gamma
H\mathscr{U}_\Gamma^{-1}=\int_{\mathbb{T}_*}^\oplus
\hat{H}(\theta)d\theta\ \in\
\mathbb{B}\big(\mathscr{F}_{(m)};\mathscr{F}\big).
$$

One has the identity
$
\sigma_{-\theta}\left.\mathfrak{Op}(h)\right|_{\mathscr{F}
_\theta}\sigma_{\theta}=\left.\mathfrak{Op}(\tau_\theta h)\right|_{\mathscr{F}_0}
$ as operators
acting in $L^2(\mathbb{T})$. For the convenience of the reader we remind the following fundamental result:
\begin{description}\item[Bloch-Floquet Theorem.] Under Hypothesis \ref{Hyp-h} the self-adjoint operator $H$ has the following spectral properties:
 \begin{enumerate}
  \item There exist a family of continuous functions
$
\mathbb{T}_*\ni\theta\mapsto\lambda_j(\theta)\in\mathbb{R}$  indexed by $j\in\mathbb{N}^*
$
such that $\lambda_j(\theta)\leq\lambda_{j+1}(\theta)$ for every $j\in\mathbb{N}^*$ and $\theta\in\mathbb{T}_*$, and
$$
\sigma\big(\hat{H}(\theta)\big)=\underset{j\in\mathbb{N}^*}{\bigcup}\{\lambda_j(\theta)\}.
$$
\item There exists a family of measurable functions
$
\mathbb{T}_*\ni\theta\mapsto\phi_j(\theta)\in\mathscr{F}_{(m),\theta}$ indexed by $j\in\mathbb{N}^*
$
such that $\|\phi_j(\theta)\|_{\mathscr{F}_\theta}=1$ and
$$
\hat{H}(\theta)\phi_j(\theta)=\lambda_j(\theta)\phi_j(\theta),\qquad\forall j\in\mathbb{N}^*, \ \forall\theta\in\mathbb{T}_*.
$$ 
 \end{enumerate}
\end{description}

\begin{hypothesis}\label{is-sp-band}
There exists a bounded open interval $I\subset\mathbb{R}$ such that
$I\,\cap\,\sigma(H)\,=:\,\sigma_I(H)\neq \emptyset$ and 
$\text{\sf dist}\big(I,\sigma(H)\setminus I\big)=d_0>0$.
We say that $\sigma_I(H)$ is an \textit{isolated spectral island}, completely included in $I$. We notice that in this case there exist
$(j,N)\in\mathbb{N}^*\times\mathbb{N}^*$ such that
$\sigma_I(H)=\underset{j+1\leq k\leq
j+N}{\bigcup}\lambda_k\big(\mathbb{T}_*\big)$. We shall denote
by $J_I:=\{j+1,\ldots,j+N\}$ the set of indices of the eigenvalues contained
in the isolated spectral island $\sigma_I(H)\subset I$.
\end{hypothesis}

\vspace{0.5cm}

\begin{definition}\label{Is-sp-band-op}
Let us choose a smooth, closed and positively oriented simple contour $\mathcal{C}$ in the complex plain, which surrounds $I$ and does not intersect the spectrum 
$\sigma(H)$. We introduce the band projection and the band Hamiltonian:
\begin{equation}\label{Prop-bandOp}
\mathfrak{p}\ :=E_I(H)=\ \frac{i}{2\pi}\int_\mathcal{C}(H-\z)^{-1}d\z,\quad \mathcal{H}_I:=\mathfrak{p}\mathcal{H},\quad \mathfrak{H}\ :=\ \frac{i}{2\pi}\int_\mathcal{C}\z(H-\z)^{-1}d\z.
\end{equation}
For $ j\in\mathbb{N}^*$ we define {\it the Bloch eigenprojections}: $\mathfrak{p}_j:=\mathscr{U}_\Gamma^{-1}\left(\int_{\mathbb{T}_*}^\oplus
\hat{\mathfrak{p}}(\theta)d\theta\right)\mathscr{U}_\Gamma$ where $\hat{\mathfrak{p}}(\theta):=|\phi_j(\theta)\rangle\langle\phi_j(\theta)|$ are 1-dimensional orthogonal projections in $\mathscr{F}_\theta$.
\end{definition}

\vspace{0.5cm}

We shall learn from Corollary \ref{reg-symbol} that both operators $\mathfrak{p}$
and $\mathfrak{H}$ have associated symbols $S_{\mathfrak{p}}$
and resp. $S_{\mathfrak{H}}$ of class $S^{-\infty}(\X)$. In general, this is not the case for each individual 
$\mathfrak{p}_j$ (even for $j\in J_I$). Also, the following statements hold true:
\begin{equation}\label{Prop-0}
[H,\mathfrak{p}_j]=0,\quad HE_I(H)=\underset{j\in J_I}{\sum}\mathfrak{Op}(\lambda_j)\mathfrak{p}_j.
 \end{equation}

\subsection{Our main results}
\label{main-res}

Assume that Hypothesis \ref{Hyp-h} is satisfied. 
Recall that a magnetic field is described by a closed 2-form
$B\equiv\sum\limits_{j,k=1}^NB_{jk}(x)dx_j\wedge dx_k$ with $B_{jk}$ defining an
antisymmetric matrix that verifies the equation ${\rm
d}B=0$ (here ${\it d}$ denotes the exterior derivative on differential forms). Let us
point out that we make no periodicity or slow variation assumption on the magnetic
field. 

\begin{hypothesis}\label{Hyp-B}
 We shall only consider $B_{jk}\in BC^\infty(\X)$ for any $j<k$.
 \end{hypothesis}
 
 \vspace{0.5cm}
 
We shall consider the limit of {\it weak} magnetic fields,
controlled by a small parameter $\epsilon\in[0,\epsilon_0]$ (see Hypothesis \ref{Hyp-V-magn} for more precise details). Roughly speaking  we shall consider a family of magnetic fields
$\{B_\epsilon\}_{\epsilon\in[0,\epsilon_0]}$ of the form
$
B_\epsilon:=\epsilon\,B^0_\epsilon
$, where the components of $B^0_\epsilon$ belong to $
BC^\infty\big(\X\big)$ uniformly with respect to $\epsilon\in[0,\epsilon_0]$. There exists an $1$-form
$A:=\sum\limits_{j=1}^{N}A_j(x)dx_j$ called {\it a vector potential} such that
$B={\rm d}A$. This vector potential is not unique and any
$A^\prime:=A+{\rm d}f$ with $f\in C^2(\X)$ also verifies  the equation $B={\rm
d}A^\prime$. It is well known that under our hypothesis on $B$ one can always
choose $A$ such that $A_j\in C^\infty_{\text{\sf pol}}(\X)$.  We shall fix a family of vector potentials 
$\{A^0_\epsilon\}_{\epsilon\in (0,\epsilon_0]}$ having components in a bounded subset of 
$C^\infty_{\text{\sf pol}}(\X)$ such that $B^0_\epsilon=dA^0_\epsilon$; 
then $B_\epsilon=dA_\epsilon$ for $A_\epsilon:=\epsilon A^0_\epsilon$.
    
The basic mathematical construction we use is the '{\it twisted
pseudodifferential calculus}' (see \eqref{OpA}) introduced in \cite{MP1,MPR,MPR1} (or \cite{Ne05} for its integral kernel version)
and developped
in \cite{IMP1,IMP2,ML}, that associates a '{\it quantized operator}' $\mathfrak{Op}^A(F)$ to any
H\"{o}rmander type symbol $F\in S^m_\rho(\X)$. We shall use the shorthand notation $\mathfrak{Op}^\epsilon(F):=\mathfrak{Op}^{A_\epsilon}(F)$.

Recalling the results in \cite{IMP1}, let $H^\epsilon$ be the self-adjoint
extension of $\mathfrak{Op}^\epsilon(h)$ with the domain given by the magnetic
Sobolev space. Using the results in
\cite{AMP, IP14} or \cite{B, CP-1, CP-2} we know that the resolvent set is stable for small variations of $\epsilon$. 
More precisely, given $I$ as in Hypothesis \ref{is-sp-band}, there exists  
$\epsilon_0>0$ small enough, such that 
$\sigma(H^\epsilon)\cap I\ne\emptyset$ and 
$\text{\sf dist}\big(I,\sigma(H^\epsilon)\setminus I\big)\geq d_0/2$ for any
$\epsilon\in[0,\epsilon_0]$. Moreover, the Hausdorff distance between $\sigma(H^\epsilon)\cap I$ and $\sigma_I(H)$ goes to zero with $\epsilon$.

The main question we are concerned with is the following:
{\it if $\epsilon\in[0,\epsilon_0]$ and $\epsilon_0>0$ is small enough, can we
replace $H$ by $H^\epsilon$ in \eqref{Prop-0} putting in the
right hand side a power series in $\epsilon$ with leading term similarly defined 
but with $\mathfrak{Op}^\epsilon$ instead of $\mathfrak{Op}$?} Our first result is contained in Theorem \ref{T-A} and gives a
partial answer to the above question. In order to state it, we shall need the
following technical result:

\begin{proposition}\label{rez-dev}
For every $\z\in \rho(H^\epsilon)$ let $r^\epsilon_\z(h)\in S^{-m}_1(\X)$ denote
the symbol of $(H^\epsilon-\z)^{-1}$ (as defined in \cite{IMP2}). If $\z$
belongs to a 
compact subset $K$ of
$\mathbb{C}\setminus\sigma(H)$, there exists
$\epsilon_0>0$ such that for $\epsilon\in[0,\epsilon_0]$ we
have that:
\begin{enumerate}
\item $K\subset\mathbb{C}\setminus\sigma(H^\epsilon)$;
\item the following development, convergent in the topology of the $C^*$-norm
$\|\cdot\|_{W,\epsilon}$ (induced from
$\mathbb{B}(\mathcal{H})$ as defined in \eqref{iulie4}) is true, uniformly with respect to
$(\epsilon,\z)\in[0,\epsilon_0]\times K$:
$$
r^\epsilon_\z(h)\ =\
\underset{n\in\mathbb{N}}{\sum}\epsilon^nr_n(h;\epsilon,\z),\quad
r_0(h;\epsilon,\z)=r^0_\z(h),\quad r_n(h;\epsilon,\z)\in S^{-(m+2n)}_1(\X);
$$
\item the map $K\ni \z\mapsto
r^\epsilon_\z(h)\in
S^{-m}_1(\X)$ is continuous for the Fr\'{e}chet topology on $S^{-m}_1(\X)$
uniformly in $\epsilon\in[0,\epsilon_0]$.
\end{enumerate}
\end{proposition}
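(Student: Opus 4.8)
The plan is to construct $r^\epsilon_\z(h)$ directly as a two‑sided Moyal inverse of $h-\z$ for the magnetic twisted product $\sharp^{B_\epsilon}$, perturbing off the non‑magnetic resolvent symbol; the construction is modelled on that of \cite{IMP2}, the new content being the uniformity in $\epsilon$ and the expansion. Recall the classical facts for $B=0$: since $h$ is real and elliptic of order $m>0$ and $\z\in K\subset\mathbb{C}\setminus\sigma(H)$, the symbol $r^0_\z(h)\in S^{-m}_1(\X)$ exists, is the two‑sided Moyal inverse of $h-\z$ for the non‑magnetic product, and depends holomorphically on $\z$ in the Fr\'echet topology of $S^{-m}_1(\X)$; the periodicity of $h$ plays no role here. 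The first step is to evaluate $(h-\z)\,\sharp^{B_\epsilon}\,r^0_\z(h)$ via the oscillatory integral formula for the magnetic twisted product of \cite{IMP1,IMP2} and to expand the triangle‑flux phase $\exp\big(-i\epsilon\int B^0_\epsilon\big)$ in powers of $\epsilon$. Each additional factor of the field, once the edge vectors of the flux triangle are integrated by parts against the non‑magnetic phase and thereby converted into $\xi$‑derivatives distributed over the two symbols, costs two orders of decay in $\xi$; since the $\epsilon^0$ term is exactly $(h-\z)\,\sharp^{0}\,r^0_\z(h)=\bb1$, this yields
\begin{equation*}
(h-\z)\,\sharp^{B_\epsilon}\,r^0_\z(h)\ =\ \bb1\,+\,\epsilon\,\rho_\epsilon(\z),\qquad \rho_\epsilon(\z)\in S^{-2}_1(\X),
\end{equation*}
with the seminorms of $\rho_\epsilon(\z)$ bounded uniformly for $(\epsilon,\z)\in[0,\epsilon_0]\times K$ (using that the components of $B^0_\epsilon$, resp.\ $A^0_\epsilon$, stay in bounded subsets of $BC^\infty(\X)$, resp.\ $C^\infty_{\text{\sf pol}}(\X)$).

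The second step is to invert $\bb1+\epsilon\,\rho_\epsilon(\z)$ inside the magnetic Moyal algebra. As $\rho_\epsilon(\z)\in S^{-2}_1(\X)\subset S^0_1(\X)$ with uniformly bounded seminorms, the Calder\'on--Vaillancourt theorem for the magnetic calculus gives $\|\mathfrak{Op}^\epsilon(\rho_\epsilon(\z))\|_{\mathbb{B}(\mathcal{H})}\le C_0$ uniformly in $(\epsilon,\z)$, so after shrinking $\epsilon_0$ we may assume $\|\epsilon\,\rho_\epsilon(\z)\|_{W,\epsilon}\le 1/2$ on $[0,\epsilon_0]\times K$; then $\bb1+\epsilon\,\rho_\epsilon(\z)$ has the Moyal inverse $\sum_{k\ge 0}(-\epsilon)^k\,\rho_\epsilon(\z)^{\sharp^{B_\epsilon}k}$, convergent in $\|\cdot\|_{W,\epsilon}$ uniformly on $[0,\epsilon_0]\times K$, and by the composition theorem of \cite{IMP1,IMP2} each $\rho_\epsilon(\z)^{\sharp^{B_\epsilon}k}\in S^{-2k}_1(\X)$ with uniformly controlled seminorms. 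Setting
\begin{equation*}
r^\epsilon_\z(h)\ :=\ r^0_\z(h)\,\sharp^{B_\epsilon}\,\big(\bb1+\epsilon\,\rho_\epsilon(\z)\big)^{\sharp^{B_\epsilon},-1}\ \in\ S^{-m}_1(\X),
\end{equation*}
associativity of $\sharp^{B_\epsilon}$ gives $(h-\z)\,\sharp^{B_\epsilon}\,r^\epsilon_\z(h)=\bb1$ and, by the symmetric computation on the left, $r^\epsilon_\z(h)\,\sharp^{B_\epsilon}\,(h-\z)=\bb1$; read at the level of operators on the magnetic Sobolev domain this shows $\z\in\rho(H^\epsilon)$ with $(H^\epsilon-\z)^{-1}=\mathfrak{Op}^\epsilon(r^\epsilon_\z(h))$, which is assertion (1) — alternatively one may invoke \cite{AMP,IP14}.

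For assertion (2) I would regroup this product by total powers of $\epsilon$. The flux‑phase expansion yields $\rho_\epsilon(\z)=\sum_{l\ge 0}\epsilon^l\,\rho^{(l)}_\epsilon(\z)$ with $\rho^{(l)}_\epsilon(\z)\in S^{-2(l+1)}_1(\X)$ uniformly bounded (the $\epsilon^l$ coefficient carrying $l+1$ factors of the field), and the same mechanism applied to each twisted power $\rho_\epsilon(\z)^{\sharp^{B_\epsilon}k}$ and to the outer product $r^0_\z(h)\,\sharp^{B_\epsilon}(\cdot)$ shows that every extra field factor beyond the base count lowers the order by two; summing all terms of total $\epsilon$‑degree $n$ gives
\begin{equation*}
r^\epsilon_\z(h)\ =\ \underset{n\in\mathbb{N}}{\sum}\epsilon^n\,r_n(h;\epsilon,\z),\qquad r_0(h;\epsilon,\z)=r^0_\z(h),\qquad r_n(h;\epsilon,\z)\in S^{-(m+2n)}_1(\X),
\end{equation*}
with the seminorms of $r_n(h;\epsilon,\z)$ bounded uniformly in $(\epsilon,\z)\in[0,\epsilon_0]\times K$ (the residual $\epsilon$‑dependence of $r_n$ entering only through $B^0_\epsilon$ and $A^0_\epsilon$) and the series converging in $\|\cdot\|_{W,\epsilon}$ uniformly on $[0,\epsilon_0]\times K$ (after a further shrinking of $\epsilon_0$ to absorb a geometric bound on $\|\mathfrak{Op}^\epsilon(r_n)\|_{\mathbb{B}(\mathcal{H})}$). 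Finally, assertion (3) follows by propagating holomorphy in $\z$: $r^0_\z(h)$ is holomorphic into $S^{-m}_1(\X)$, hence so is $\rho_\epsilon(\z)$, the Neumann series converges locally uniformly in $\z$, and the finitely many composition maps used are continuous for the Fr\'echet topologies, so $\z\mapsto r^\epsilon_\z(h)$ is continuous — in fact holomorphic — into $S^{-m}_1(\X)$, uniformly in $\epsilon\in[0,\epsilon_0]$.

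The step I expect to be the main obstacle is the quantitative control of the magnetic twisted product under the weak‑field scaling: showing rigorously that in the flux‑phase expansion every extra power of $\epsilon$ — equivalently, every extra factor of $B^0_\epsilon$ entering through the triangle flux — genuinely gains two orders of decay in the momentum variable, with all symbol seminorm and operator‑norm estimates uniform over $\epsilon\in[0,\epsilon_0]$ and $\z\in K$, and at the same time reconciling convergence in the $C^*$‑norm $\|\cdot\|_{W,\epsilon}$ with convergence in the Fr\'echet seminorms of the H\"ormander classes. This is a parameter‑dependent refinement of the symbol calculus of \cite{IMP1,IMP2}; granting it, the Neumann inversion, the regrouping by powers of $\epsilon$, and the holomorphy in $\z$ are all soft.
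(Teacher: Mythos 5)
For assertions (1) and (2) your construction is essentially the paper's own: the identity $(h-\z)\sharp^{\epsilon}r^0_\z(h)=1+\epsilon\,\rho_\epsilon(\z)$ with $\rho_\epsilon(\z)$ bounded in $S^{-2}_1(\X)$ uniformly in $(\epsilon,\z)$ is exactly \eqref{II.1.9}, obtained there from Proposition \ref{L.II.1.1} (integration by parts converting the flux factor into $\xi$-derivatives), and the Neumann inversion giving $r_n(h;\epsilon,\z)=(-1)^n r^0_\z(h)\sharp^\epsilon\big(r_\epsilon(h,r^0_\z(h))\big)^{\sharp^\epsilon n}\in S^{-(m+2n)}_1(\X)$ is the same; point (1) is also legitimately obtained either by your two-sided parametrix or by citing \cite{AMP,CP-1} as the paper does. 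Your additional regrouping by ``total powers of $\epsilon$'' after expanding the flux phase to all orders is unnecessary — the statement allows $r_n$ to depend on $\epsilon$ — and it introduces a convergence question for the infinite phase expansion (the paper only uses the finite Taylor expansion with remainder, cf.\ \eqref{iulie5}) that you do not address; the plain Neumann grouping already yields the stated development.

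The genuine gap is assertion (3). You deduce Fr\'echet continuity in $\z$ by ``propagating holomorphy'' through the Neumann series, but that series is only shown to converge in the $C^*$-norm $\|\cdot\|_{W,\epsilon}$; continuity of finitely many composition maps for the Fr\'echet topologies says nothing about the infinite sum, and your seminorm bounds on the individual terms $\rho_\epsilon(\z)^{\sharp^\epsilon k}$ may grow with $k$, so Fr\'echet convergence is not established — indeed you yourself flag the reconciliation of the two topologies as the unresolved obstacle. The paper closes precisely this hole with the commutator criterion of \cite{IMP2} (Theorem 5.2): the Fr\'echet topology of $S^{-m}_1(\X)$ is generated by the seminorms $\psi\mapsto\big\|\mathfrak{s}_{m+q}\sharp^\epsilon\big(\mathfrak{ad}^\epsilon_{u_1}\cdots\mathfrak{ad}^\epsilon_{u_p}\mathfrak{ad}^\epsilon_{\mu_1}\cdots\mathfrak{ad}^\epsilon_{\mu_q}[\psi]\big)\big\|_{W,\epsilon}$, which converts operator-norm estimates into Fr\'echet estimates; combined with $\mathfrak{ad}^\epsilon_X[r^\epsilon_\z(h)]=-r^\epsilon_\z(h)\sharp^\epsilon\mathfrak{ad}^\epsilon_X[h]\sharp^\epsilon r^\epsilon_\z(h)$, the resolvent identity $r^\epsilon_\z(h)-r^\epsilon_{\z'}(h)=(\z'-\z)\,r^\epsilon_\z(h)\sharp^\epsilon r^\epsilon_{\z'}(h)$ and the uniform boundedness of $\{r^\epsilon_\z(h)\}$ in $S^{-m}_1(\X)$ coming from point (2), this gives continuity in $\z$ for the Fr\'echet topology uniformly in $\epsilon\in[0,\epsilon_0]$. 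Without this device (or some substitute proving Fr\'echet convergence of the series), assertion (3) is not proved. A milder instance of the same issue is your claim that the Neumann limit $s^\epsilon(\z)$ is a symbol of class $S^0_1(\X)$ with uniform seminorms: termwise membership plus $C^*$-norm convergence does not give this; here one should lean on the a priori fact, quoted in the statement from \cite{IMP2}, that $r^\epsilon_\z(h)\in S^{-m}_1(\X)$.
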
 

\vspace{0.5cm}

\begin{theorem}\label{T-A}
 Under Hypothesis \ref{Hyp-h}, \ref{is-sp-band} and \ref{Hyp-V-magn} there exists $\epsilon_0>0$ small enough such that for any $\epsilon\in[0,\epsilon_0]$ and for any $n\in\mathbb{N}^*$
 we have that:
 \begin{enumerate}
  \item $H^\epsilon E_I(H^\epsilon)=\mathfrak{Op}^\epsilon\big(S_{\mathfrak{H}}\big)+\underset{1\leq k\leq n-1}{\sum}\epsilon^k\mathfrak{Op}^\epsilon\big(v^\epsilon_k\big)
  +\epsilon^n\mathfrak{Op}^\epsilon\big(R^\epsilon_I(h;n)\big)$ with:
  \begin{enumerate}
   \item $S_{\mathfrak{H}}(x,\xi):=\frac{(2\pi)^{d}}{|E_*|}\int_\X e^{-i<\xi,y>}\left[\int_{\mathbb{T}_*}\left(\underset{j\in J_I}{\sum}
   \lambda_j(\theta)\phi_j(x+y/2,\theta)\overline{\phi_j(x-y/2,\theta)}\right)d\theta\right]dy$,
   \item $v^\epsilon_k:=-(2\pi i)^{-1}\int_{\mathcal{C}}\z r_k(h;\epsilon,\z)d\z$,
   \item $R^\epsilon_I(h;n):=-(2\pi i)^{-1}\int_{\mathcal{C}}\z\left(\underset{k\geq n}{\sum}\epsilon^kr_k(h;\epsilon,\z)\right)d\z$.
  \end{enumerate}
\item there exists an orthogonal projection $P^\epsilon_{I,n}$ in $L^2(\X)$ such that
\begin{enumerate}
 \item $\|E_I(H^\epsilon)-P^\epsilon_{I,n}\|\leq C_n\epsilon^n,\quad\|H^\epsilon[E_I(H^\epsilon)-P^\epsilon_{I,n}]\|\leq C_n(h)\epsilon^n$,
 \item $\|[H^\epsilon,P^\epsilon_{I,n}]\|\leq C_n(h)\epsilon^n$,
 \item $P^\epsilon_{I,n}=\mathfrak{Op}^\epsilon\big(S_\mathfrak{p}\big)+\underset{1\leq k\leq n-1}{\sum}
 \epsilon^k\mathfrak{Op}^\epsilon\big(u^\epsilon_k\big)
  +\epsilon^n\mathfrak{Op}^\epsilon\big(R^\epsilon_I(p;n)\big)$
 with
  \begin{enumerate}
   \item $S_\mathfrak{p}(x,\xi):=\frac{(2\pi)^{d}}{|E_*|}\int_\X e^{-i<\xi,y>}\left[\int_{\mathbb{T}_*}\left(\underset{j\in JI}{\sum}
   \phi_j(x+y/2,\theta)\overline{\phi_j(x-y/2,\theta)}\right)d\theta\right]dy$,
   \item $u^\epsilon_k:=-(2\pi i)^{-1}\int_{\mathcal{C}} r_k(h;\epsilon,\z)d\z$,
   \item $R^\epsilon_I(p;n)\in S^{-nm}_1(\X)_\Gamma$ uniformly for $\epsilon\in[0,\epsilon_0]$.
  \end{enumerate}
\end{enumerate}
 \end{enumerate}
\end{theorem}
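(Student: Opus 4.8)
\medskip

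\noindent\textit{Proof proposal.}
The plan is to transport the classical Riesz--integral construction of band operators to the symbol level through the magnetic quantization $\mathfrak{Op}^\epsilon$, using Proposition \ref{rez-dev} as the only genuinely new input. First one fixes $\epsilon_0$ so that $\mathcal{C}$ is an admissible contour for $H^\epsilon$, $\epsilon\in[0,\epsilon_0]$: Proposition \ref{rez-dev}(1) gives $\mathcal{C}\subset\rho(H^\epsilon)$, and (taking $\mathcal{C}$ inside a $d_0/4$--neighbourhood of $I$, using the spectral stability recalled before the statement) $\mathcal{C}$ encircles exactly $\sigma(H^\epsilon)\cap I$, so that $H^\epsilon E_I(H^\epsilon)=\tfrac{i}{2\pi}\int_{\mathcal{C}}\z\,(H^\epsilon-\z)^{-1}\,d\z$. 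Writing $(H^\epsilon-\z)^{-1}=\mathfrak{Op}^\epsilon(r^\epsilon_\z(h))$ and using the continuity $\z\mapsto r^\epsilon_\z(h)\in S^{-m}_1(\X)$ from Proposition \ref{rez-dev}(3) together with the continuity of $\mathfrak{Op}^\epsilon:S^{-m}_1(\X)\to\mathbb{B}(\mathcal{H})$, one commutes the quantization with the contour integral. Inserting the expansion $r^\epsilon_\z(h)=\sum_{k\ge0}\epsilon^k r_k(h;\epsilon,\z)$ of Proposition \ref{rez-dev}(2) --- convergent uniformly on $[0,\epsilon_0]\times\mathcal{C}$ in $\|\cdot\|_{W,\epsilon}$, each $\z\mapsto r_k(h;\epsilon,\z)$ being continuous into $S^{-(m+2k)}_1(\X)$ by the recursive construction of the $r_k$ --- lets one integrate the resulting operator series term by term over the compact contour and pull $\mathfrak{Op}^\epsilon$ through each scalar integral, giving $H^\epsilon E_I(H^\epsilon)=\sum_{k\ge0}\epsilon^k\mathfrak{Op}^\epsilon(v^\epsilon_k)$ with the $v^\epsilon_k$ of 1(b) (note $\tfrac{i}{2\pi}=-(2\pi i)^{-1}$). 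Splitting off $k=0$, the range $1\le k\le n-1$, and the tail $\sum_{k\ge n}\epsilon^k\mathfrak{Op}^\epsilon(v^\epsilon_k)$ --- which, factoring $\epsilon^n$, equals $\epsilon^n\mathfrak{Op}^\epsilon(R^\epsilon_I(h;n))$, a convergent symbol series of order $\le-(m+2n)$ uniformly in $\epsilon$ --- proves assertion 1 once we note $r_0(h;\epsilon,\z)=r^0_\z(h)$, so that $v^\epsilon_0$ is the Weyl symbol of $\mathfrak{H}=\tfrac{i}{2\pi}\int_{\mathcal{C}}\z(H-\z)^{-1}d\z$; by Corollary \ref{reg-symbol} this symbol lies in $S^{-\infty}(\X)$, and substituting the Bloch--Floquet diagonalization $\mathfrak{H}=\mathscr{U}_\Gamma^{-1}\big(\int^\oplus_{\mathbb{T}_*}\sum_{j\in J_I}\lambda_j(\theta)\,|\phi_j(\theta)\rangle\langle\phi_j(\theta)|\,d\theta\big)\mathscr{U}_\Gamma$ into the kernel--to--Weyl--symbol formula $F(x,\xi)=\int_\X e^{-i\langle\xi,y\rangle}\mathcal{K}(x+\tfrac{y}{2},x-\tfrac{y}{2})\,dy$ produces exactly the expression in 1(a)(i). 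Hence $v^\epsilon_0=S_{\mathfrak{H}}$.

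Running the identical computation with $\tfrac{i}{2\pi}\int_{\mathcal{C}}(H^\epsilon-\z)^{-1}d\z=E_I(H^\epsilon)$ gives $E_I(H^\epsilon)=\sum_{k\ge0}\epsilon^k\mathfrak{Op}^\epsilon(u^\epsilon_k)$, $u^\epsilon_0=S_{\mathfrak{p}}$. Set $\tilde P^\epsilon_{I,n}:=\mathfrak{Op}^\epsilon(S_{\mathfrak{p}})+\sum_{1\le k\le n-1}\epsilon^k\mathfrak{Op}^\epsilon(u^\epsilon_k)$, so that $E_I(H^\epsilon)-\tilde P^\epsilon_{I,n}=\epsilon^n\mathfrak{Op}^\epsilon(R^\epsilon_I(p;n))$ with $R^\epsilon_I(p;n):=\sum_{j\ge0}\epsilon^j u^\epsilon_{n+j}$ a convergent symbol series of order $\le-(m+2n)\le-m$, uniform in $\epsilon$, in the class of 2(c)(iii). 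The symbols $S_{\mathfrak{p}}$ and all $u^\epsilon_k$ are real --- $E_I(H^\epsilon)$ being a self-adjoint projection and, with $\mathcal{C}$ symmetric about $\R$, the symmetry $r^\epsilon_{\bar\z}(h)=\overline{r^\epsilon_\z(h)}$ descending to each $r_k$ --- so $\tilde P^\epsilon_{I,n}$ is self-adjoint, and $(\tilde P^\epsilon_{I,n})^2-\tilde P^\epsilon_{I,n}=(\tilde P^\epsilon_{I,n}-E_I(H^\epsilon))(\tilde P^\epsilon_{I,n}+E_I(H^\epsilon)-\mathbf{1})$ has norm $\le C_n\epsilon^n$ by the magnetic Calder\'on--Vaillancourt bound; thus $\sigma(\tilde P^\epsilon_{I,n})\subset\{|t|\le c_n\epsilon^n\}\cup\{|t-1|\le c_n\epsilon^n\}$ for $\epsilon_0$ small. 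I would then take $P^\epsilon_{I,n}:=\tfrac{1}{2\pi i}\oint_{|z-1|=1/2}(z\mathbf{1}-\tilde P^\epsilon_{I,n})^{-1}\,dz$, a well-defined orthogonal projection (the circle stays at distance $\ge1/4$ from $\sigma(\tilde P^\epsilon_{I,n})$), with $\|P^\epsilon_{I,n}-\tilde P^\epsilon_{I,n}\|\le C_n\epsilon^n$ by the usual near-idempotent estimate, hence $\|E_I(H^\epsilon)-P^\epsilon_{I,n}\|\le C_n\epsilon^n$.

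For the estimates involving the unbounded $H^\epsilon$ the decisive point is that the remainder symbols have negative order. Since $R^\epsilon_I(p;n)$ has order $\le-m$, the operator $\mathfrak{Op}^\epsilon(R^\epsilon_I(p;n))$ maps $L^2$ continuously into $\mathrm{Dom}(H^\epsilon)$ (the magnetic Sobolev space), uniformly in $\epsilon$, and $H^\epsilon\mathfrak{Op}^\epsilon(R^\epsilon_I(p;n))=\mathfrak{Op}^\epsilon(h\,\sharp^{B_\epsilon}R^\epsilon_I(p;n))$ has order $\le0$, hence is $L^2$-bounded uniformly; so $\|H^\epsilon(E_I(H^\epsilon)-\tilde P^\epsilon_{I,n})\|\le C_n(h)\epsilon^n$. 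Passing from $\tilde P^\epsilon_{I,n}$ to $P^\epsilon_{I,n}$ via the second resolvent identity, $P^\epsilon_{I,n}-E_I(H^\epsilon)=\tfrac{1}{2\pi i}\oint_{|z-1|=1/2}(z\mathbf{1}-\tilde P^\epsilon_{I,n})^{-1}(\tilde P^\epsilon_{I,n}-E_I(H^\epsilon))(z\mathbf{1}-E_I(H^\epsilon))^{-1}dz$; the middle factor maps $L^2$ into $\mathrm{Dom}(H^\epsilon)$ with graph norm $O(\epsilon^n)$ uniformly on the contour, the outer factors being uniformly bounded on the relevant spaces, so $\|H^\epsilon(E_I(H^\epsilon)-P^\epsilon_{I,n})\|\le C_n(h)\epsilon^n$, the second estimate in 2(a); the commutator bound 2(b) then follows by taking adjoints, since $[H^\epsilon,E_I(H^\epsilon)]=0$ gives $[H^\epsilon,P^\epsilon_{I,n}]=H^\epsilon D-D^*H^\epsilon$ with $D:=P^\epsilon_{I,n}-E_I(H^\epsilon)$ self-adjoint. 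Finally, for 2(c): on $|z-1|=1/2$ the resolvent $(z\mathbf{1}-\tilde P^\epsilon_{I,n})^{-1}$ is the magnetic quantization of a symbol in $S^0_1(\X)$ depending continuously on $z$ (resolvents of magnetic $\Psi$DOs stay in the calculus, cf. \cite{IMP2}), and since $0$ lies outside the contour and $\tilde P^\epsilon_{I,n}$ itself has symbol of order $\le-m$ one gets $P^\epsilon_{I,n}=\mathfrak{Op}^\epsilon(S_{P^\epsilon_{I,n}})$ with $S_{P^\epsilon_{I,n}}$ of order $\le-m$; expanding this $\sharp^{B_\epsilon}$-inverse symbol in powers of $\epsilon$ about $(z\mathbf{1}-S_{\mathfrak{p}})^{\sharp^{B_\epsilon}(-1)}$ (equivalently, reading $S_{P^\epsilon_{I,n}}$ off $P^\epsilon_{I,n}=(P^\epsilon_{I,n})^2=(P^\epsilon_{I,n})^*$ together with 2(a)) identifies $R^\epsilon_I(p;n)=\epsilon^{-n}\big(S_{P^\epsilon_{I,n}}-S_{\mathfrak{p}}-\sum_{1\le k\le n-1}\epsilon^k u^\epsilon_k\big)$ as a uniformly bounded family in the class of 2(c)(iii), its leading periodic block inherited from $S_{\mathfrak{p}}$.

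The main obstacles are two. The first is the double interchange in the first step --- pulling $\mathfrak{Op}^\epsilon$ simultaneously through the contour integral and through the infinite $\epsilon$-series --- which is precisely why one needs the \emph{uniform}-in-$(\epsilon,\z)$ convergence and Fr\'echet continuity supplied by Proposition \ref{rez-dev}, not pointwise versions, together with the magnetic Calder\'on--Vaillancourt bounds from \cite{IMP1,IMP2}. The second, and more delicate, is controlling the unbounded $H^\epsilon$ against the $O(\epsilon^n)$ remainders in 1 and in 2(a)--2(b): this works only because those remainder symbols have negative order $\le-(m+2n)$, so that composing with $h\in S^m_1(\X)$ still yields symbols of order $\le0$, quantized by uniformly bounded operators --- losing track of symbol orders here would break the estimate $\|H^\epsilon(E_I(H^\epsilon)-P^\epsilon_{I,n})\|\le C_n(h)\epsilon^n$. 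The subordinate points --- admissibility of $\mathcal{C}$ for $H^\epsilon$, reality of the $u^\epsilon_k$, and stability of the magnetic $\Psi$DO calculus under taking resolvents of negative-order elements --- are handled by the spectral-stability results cited before the statement and by \cite{IMP1,IMP2}.
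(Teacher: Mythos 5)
Your route is, in outline, the paper's own: point 1 is obtained exactly as in Proposition \ref{PO-1} (Riesz integral for $H^\epsilon E_I(H^\epsilon)$, insertion of the expansion of $r^\epsilon_\z(h)$ from Proposition \ref{rez-dev}, identification of the leading term with $S_{\mathfrak{H}}$ through the Bloch--Floquet kernel), and point 2 follows the same Nenciu-type scheme as Proposition \ref{PO-1-proj}: truncate, get an almost idempotent self-adjoint operator, take its Riesz projection around the spectral cluster at $1$, and compare. Two steps, however, are asserted rather than proved, and they are precisely the places where the paper's proof does something different.

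First, you declare $\tilde P^\epsilon_{I,n}$ self-adjoint because ``the symmetry $r^\epsilon_{\bar\z}(h)=\overline{r^\epsilon_\z(h)}$ descends to each $r_k$''. This is not automatic: the coefficients $r_k(h;\epsilon,\z)=(-1)^k r^0_\z(h)\sharp^\epsilon\big(r_\epsilon(h,r^0_\z(h))\big)^{\sharp^\epsilon k}$ are not canonical (they depend on $\epsilon$ themselves), and complex conjugation reverses the non-commutative product $\sharp^\epsilon$, so $\overline{r_k(h;\epsilon,\z)}=(-1)^k\big(r_\epsilon(r^0_{\bar\z}(h),h)\big)^{\sharp^\epsilon k}\sharp^\epsilon r^0_{\bar\z}(h)$, which is not visibly equal to $r_k(h;\epsilon,\bar\z)$; hence the reality of the $u^\epsilon_k$ is unproven. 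Without self-adjointness of $\tilde P^\epsilon_{I,n}$ your contour integral produces only an idempotent, not an orthogonal projection. The paper avoids this by explicitly symmetrizing the truncated symbol (the symbol $\overset{\circ}{S}{}_{\mathfrak{p}^\epsilon}^{(N)}$ in the proof of Proposition \ref{PO-1-proj}) before quantizing; the same repair works in your scheme, since $\tilde P^\epsilon_{I,n}-(\tilde P^\epsilon_{I,n})^*=\big(E_I(H^\epsilon)-\tilde P^\epsilon_{I,n}\big)^*-\big(E_I(H^\epsilon)-\tilde P^\epsilon_{I,n}\big)$ is $O(\epsilon^n)$. (Incidentally, your identity $(\tilde P)^2-\tilde P=(\tilde P-E)(\tilde P+E-\bb1)$ misses the commutator $[\tilde P,E]$, though the $O(\epsilon^n)$ bound follows anyway from $\|\tilde P-E\|\leq C\epsilon^n$.)

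Second, you treat the tails $\sum_{j\geq0}\epsilon^j u^\epsilon_{n+j}$ and $R^\epsilon_I(h;n)$ as ``convergent symbol series of order $\leq-(m+2n)$ uniformly in $\epsilon$''. Proposition \ref{rez-dev} only gives convergence of the resolvent expansion in the operator norm $\|\cdot\|_{W,\epsilon}$, not in the Fr\'echet topology of any symbol class, so this order statement --- on which your $H^\epsilon$-weighted estimates in 2(a)--(b) entirely rest --- is not justified as written. What one actually has is that the tail equals $\epsilon^{n}$ times a family bounded in $S^{-m}_1(\X)$ (being the difference of $r^\epsilon_\z(h)$ and the finite truncation), and the smallness of $h\sharp^\epsilon(\text{tail})$ must be extracted as in Remark \ref{rem-est-rest}(2), by factoring the tail through $r^0_\z$ so that $h\sharp^\epsilon r^0_\z$ is of order zero and the whole product is $O(\epsilon^{n+1})$ in $\|\cdot\|_{W,\epsilon}$; this is exactly how the paper arrives at \eqref{L-p-epsilon-m} and at the estimates $\|H^\epsilon(E_I(H^\epsilon)-P^\epsilon_{I,n})\|\leq C_n(h)\epsilon^n$ and $\|[H^\epsilon,P^\epsilon_{I,n}]\|\leq C_n(h)\epsilon^n$. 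With these two repairs your argument coincides with the paper's proof.
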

\begin{corollary}
If the spectral island $\sigma_I$ is the image of an $N$-fold degenerate isolated Bloch band $\lambda:\mathbb{T}_*\rightarrow\mathbb{R}$ we have that
$$
H^\epsilon E_I(H^\epsilon)=\mathfrak{Op}^\epsilon(\lambda)E_I(H^\epsilon)+\underset{1\leq k\leq n-1}{\sum}\epsilon^k\mathfrak{Op}^\epsilon\big(v^\epsilon_k\big)
  +\epsilon^n\mathfrak{Op}^\epsilon\big(R^\epsilon_I(h;n)\big).
$$
\end{corollary}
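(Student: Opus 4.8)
The plan is to deduce the corollary from Theorem \ref{T-A}(1): in the $N$-fold degenerate situation the leading operator $\mathfrak{Op}^\epsilon(S_{\mathfrak{H}})$ differs from $\mathfrak{Op}^\epsilon(\lambda)E_I(H^\epsilon)$ only by an $O(\epsilon)$ operator admitting its own $\epsilon$-development in magnetic quantizations of symbols of negative order, and this difference is absorbed into the coefficients $v^\epsilon_k$ and the remainder $R^\epsilon_I(h;n)$, which stay of the type stated in Theorem \ref{T-A}.

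The first step is a purely non-magnetic algebraic identity. Since $\sigma_I=\lambda(\mathbb{T}_*)$ with $\lambda$ an $N$-fold degenerate \emph{isolated} band, one has $\lambda_j(\theta)=\lambda(\theta)$ for every $j\in J_I$ and $\theta\in\mathbb{T}_*$; as the gap is stable, the fibered Riesz projection $\hat{\mathfrak{p}}_{J_I}(\theta)=\frac{i}{2\pi}\int_{\mathcal{C}}(\hat{H}(\theta)-\z)^{-1}d\z$ and the fibered operator $\hat{\mathfrak{H}}(\theta)=\frac{i}{2\pi}\int_{\mathcal{C}}\z(\hat{H}(\theta)-\z)^{-1}d\z=\lambda(\theta)\hat{\mathfrak{p}}_{J_I}(\theta)$ depend analytically on $\theta$, so $\lambda(\theta)=N^{-1}\mathrm{Tr}\,\hat{\mathfrak{H}}(\theta)\in C^\infty(\mathbb{T}_*)$, its $\Gamma_*$-periodic extension $\widetilde{\lambda}$ is in $BC^\infty(\X^*)$, and $\mathfrak{Op}^\epsilon(\lambda):=\mathfrak{Op}^{A_\epsilon}(\widetilde{\lambda})$ is a bounded magnetic pseudodifferential operator with symbol in $S^0_0(\X)_\Gamma$. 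Inserting $\lambda_j\equiv\lambda$ in the second relation of \eqref{Prop-0} gives $HE_I(H)=\sum_{j\in J_I}\mathfrak{Op}(\lambda_j)\mathfrak{p}_j=\mathfrak{Op}(\lambda)\sum_{j\in J_I}\mathfrak{p}_j=\mathfrak{Op}(\lambda)\,\mathfrak{p}$, hence (Weyl quantization being injective on symbols) the \emph{exact} identity $S_{\mathfrak{H}}=\widetilde{\lambda}\,\sharp\,S_{\mathfrak{p}}$ in $S^{-\infty}(\X)_\Gamma$, $\sharp$ being the non-magnetic Moyal product.

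Next, for $\epsilon>0$: by Definition \ref{Is-sp-band-op} and Proposition \ref{rez-dev}, $E_I(H^\epsilon)=\mathfrak{Op}^\epsilon(S^\epsilon_{\mathfrak{p}})$ with $S^\epsilon_{\mathfrak{p}}=-(2\pi i)^{-1}\int_{\mathcal{C}}r^\epsilon_\z(h)\,d\z=S_{\mathfrak{p}}+\sum_{k=1}^{n-1}\epsilon^k u^\epsilon_k+\epsilon^n R^\epsilon_I(p;n)$, the expansion converging in $\|\cdot\|_{W,\epsilon}$ uniformly in $\epsilon$. Using the magnetic composition $\mathfrak{Op}^\epsilon(F)\mathfrak{Op}^\epsilon(G)=\mathfrak{Op}^\epsilon(F\,\sharp^{B_\epsilon}G)$ and the identity $S_{\mathfrak{H}}=\widetilde{\lambda}\,\sharp\,S_{\mathfrak{p}}$ of the previous step,
\[
\mathfrak{Op}^\epsilon(\lambda)E_I(H^\epsilon)-\mathfrak{Op}^\epsilon(S_{\mathfrak{H}})=\mathfrak{Op}^\epsilon\big(\widetilde{\lambda}\,\sharp^{B_\epsilon}(S^\epsilon_{\mathfrak{p}}-S_{\mathfrak{p}})\big)+\mathfrak{Op}^\epsilon\big(\widetilde{\lambda}\,\sharp^{B_\epsilon}S_{\mathfrak{p}}-\widetilde{\lambda}\,\sharp\,S_{\mathfrak{p}}\big).
\]
The first term is $O(\epsilon)$ because $S^\epsilon_{\mathfrak{p}}-S_{\mathfrak{p}}=O(\epsilon)$; the second is $O(\epsilon)$ because $B_\epsilon=\epsilon B^0_\epsilon$ with $B^0_\epsilon$ uniformly bounded, so $\sharp^{B_\epsilon}$ differs from $\sharp$ by a series in powers of $\epsilon$ whose zeroth order term is $\sharp$ — the asymptotic development of the magnetic Moyal product used in the earlier sections. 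Combining this development with the uniform bounds of Proposition \ref{rez-dev}, and observing that composing $\widetilde{\lambda}\in S^0_0$ with the rapidly $\xi$-decaying symbols $S_{\mathfrak{p}}\in S^{-\infty}$, $u^\epsilon_k\in S^{-(m+2k)}_1$ keeps a negative $\xi$-order, the right-hand side rewrites as $\sum_{k=1}^{n-1}\epsilon^k\mathfrak{Op}^\epsilon(w^\epsilon_k)+\epsilon^n\mathfrak{Op}^\epsilon(\widetilde R^\epsilon_n)$ with periodic $w^\epsilon_k$ of negative order and $\widetilde R^\epsilon_n$ uniformly bounded in $\epsilon$. Plugging this into Theorem \ref{T-A}(1) and renaming $v^\epsilon_k-w^\epsilon_k$, $R^\epsilon_I(h;n)-\widetilde R^\epsilon_n$ as the new $v^\epsilon_k$, $R^\epsilon_I(h;n)$ produces the claimed identity. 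The single delicate point I anticipate is the bookkeeping here: checking that the $\epsilon$-development of $\widetilde{\lambda}\,\sharp^{B_\epsilon}G$ — in particular the composition with the merely bounded, non-$\xi$-decaying symbol $\widetilde{\lambda}$ — is uniform in $\epsilon$ and leaves every coefficient in an admissible Hörmander class; but this is exactly the control already furnished by the magnetic calculus of \cite{IMP1,IMP2} and by Proposition \ref{rez-dev}, so no genuinely new estimate is required.
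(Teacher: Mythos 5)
Since the paper states this corollary without proof, the only comparison possible is with the route it implicitly intends, and yours is exactly that: Theorem \ref{T-A}(1) combined with the exact non-magnetic identity $S_{\mathfrak{H}}=\widetilde{\lambda}\,\sharp^{0}\,S_{\mathfrak{p}}$ (from \eqref{Prop-0} with $\lambda_j\equiv\lambda$ for $j\in J_I$), the resolvent expansion of Proposition \ref{rez-dev}, and the $\epsilon$-expansion of $\sharp^{B_\epsilon}$ from Proposition \ref{L.II.1.1} and \eqref{iulie5}, all uniform in $\epsilon\in[0,\epsilon_0]$. Your reading that the $v^\epsilon_k$ and $R^\epsilon_I(h;n)$ appearing in the corollary are redefined objects of the same type (not literally those of Theorem \ref{T-A}) is the only tenable one, since otherwise the statement would assert the exact equality $\mathfrak{Op}^\epsilon(S_{\mathfrak{H}})=\mathfrak{Op}^\epsilon(\lambda)E_I(H^\epsilon)$, which fails at order $\epsilon$; and the delicate point you flag, composing with the non-decaying symbol $\widetilde{\lambda}\in S^0_0(\X)_\Gamma$, is indeed covered because Proposition \ref{comp-symb} and Corollary \ref{L.II.1.0} hold for every $\rho\in[0,1]$, including $\rho=0$, with bounds uniform in $\epsilon$ since $B_\epsilon=\epsilon B^0_\epsilon$ stays in a bounded set of $BC^\infty$. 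The proposal is correct.
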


\vspace{0.5cm}

The second result is about the case in which the projection associated to the
non-magnetic spectral island admits a basis of localized Wannier functions (see
formula \eqref{Wannier-def} for their definition). Then we can associate to the
band Hamiltonian a smooth $N\times N$ matrix-valued function
$\{\mu_{jk}(\theta)\}$ defined on the dual torus $\mathbb{T}_*$ (see Definition \ref{D-M}); let us denote
by $\widetilde{\mu}$ its periodic extension to $\X^*$ and by
$\{(\widehat{\mu_{jk}})_\gamma\}_{\gamma\in\Gamma}$ its Fourier coefficients defining a sequence in $l^2(\Gamma)$
having rapid decay. 

We shall use the notation
\beq\label{iulie21}
\widetilde{\Lambda}^A(x,y)\ :=\ e^{-i\int_{[x,y]}A},
\eeq
where the integral is taken along the oriented segment $[x,y]$. For every fixed pair $1\leq j,k\leq N$ we consider the operator 
$\mathfrak{Op}^\epsilon_\Gamma(\widetilde{\mu}_{jk})$ defined in $l^2(\Gamma)$ by the infinite matrix 
\beq\label{iulie20}
\left[\mathfrak{Op}^\epsilon_\Gamma(\widetilde{\mu}_{jk})\right]_{\alpha\beta}\
:=\ \widetilde{\Lambda}^\epsilon(\alpha,\beta)\widehat{\mu_{jk}}_{\alpha-\beta}.
\eeq
The $N\times N$ matrix with entries 
$\mathfrak{Op}^\epsilon_\Gamma(\widetilde{\mu}_{jk})$ for $1\leq j,k\leq N$ defines an operator in $l^2(\Gamma)^N$ that we denote by
$\mathfrak{Op}^\epsilon_\Gamma(\widetilde{\mu})$.

\begin{theorem}\label{teoremadoi}
 Under Hypothesis \ref{Hyp-h}, \ref{is-sp-band} and \ref{Hyp-V-magn},  if the isolated spectral band at zero magnetic field admits an orthonormal basis consisting of composite Wannier functions (see Hypothesis \ref{H.II.8.1}), and if $\epsilon_0$ is small enough, then there exists an orthonormal {\it magnetic  localized} basis 
 $\{\W^\epsilon_{\gamma,j}\}_{(\gamma,j)\in\Gamma\times J_I}$ such that:
 \begin{enumerate}
  \item $E_{I}(H^\epsilon)=\underset{\gamma\in\Gamma}{\sum}\underset{j\in J_I}{\sum}
  |\W^\epsilon_{\gamma,j}\rangle\langle\W^\epsilon_{\gamma,j}|$,
  \item $\underset{\gamma\in \Gamma}{\sup}\, \underset{x\in\X}{\sup}\langle x-\gamma\rangle ^m|\W^\epsilon_{\gamma,j}(x)|<\infty,\ \forall(\gamma,j)\in\Gamma\times J_I$,
  \item there exists a positive constant $C<\infty$ such that 
$$
  \left\|H^\epsilon
E_{I}(H^\epsilon)\,-\,\sum_{(\alpha,\beta)\in\Gamma\times\Gamma}\
\sum_{(j,k)\in J_I\times J_I}\
\left[\mathfrak{Op}^\epsilon_\Gamma(\widetilde{\mu}_{jk})\right]_{\alpha\beta}
\left|\W^\epsilon_{\alpha,j}\rangle\langle\W^\epsilon_{\beta,k}\right|\right\|_{
\mathbb { B } (L^2(\X))}\ \leq\ C\epsilon.
$$
 \end{enumerate}
\end{theorem}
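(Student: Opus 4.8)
\emph{Strategy.} The plan is to obtain $\{\W^\epsilon_{\gamma,j}\}$ by Peierls-twisting the non-magnetic composite Wannier basis, projecting onto $\mathrm{Ran}\,E_I(H^\epsilon)$ and orthonormalizing by the L\"owdin (symmetric) procedure; all the estimates rest on a single mechanism. The distribution kernel of any magnetic operator $\mathfrak{Op}^\epsilon(F)$ equals $\widetilde{\Lambda}^\epsilon(x,y)$ (notation \eqref{iulie21}, with $A=A_\epsilon$) times the kernel of $\mathfrak{Op}(F)$, and along a broken path one has the cocycle identity $\widetilde{\Lambda}^\epsilon(x,y)\widetilde{\Lambda}^\epsilon(y,z)=\widetilde{\Lambda}^\epsilon(x,z)\,e^{-i\Phi_\epsilon(x,y,z)}$, where $\Phi_\epsilon(x,y,z)$ is the flux of $B_\epsilon=\epsilon B^0_\epsilon$ through the triangle $x,y,z$, so $|\Phi_\epsilon(x,y,z)|\le\epsilon\,\|B^0_\epsilon\|_\infty\,\mathrm{area}(x,y,z)\le\epsilon\,\|B^0_\epsilon\|_\infty\langle x-z\rangle\langle y-z\rangle$; every error below comes from replacing such an $e^{-i\Phi_\epsilon}$ by $1$, the area being always dominated by the localization of the Wannier functions and of the operator kernels. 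Writing the composite Wannier basis of Hypothesis \ref{H.II.8.1} as $w_{\gamma,j}=\tau_{-\gamma}w_{0,j}$ with $w_{0,j}$ rapidly decaying, set $\phi^\epsilon_{\gamma,j}:=\widetilde{\Lambda}^\epsilon(\cdot,\gamma)\,w_{\gamma,j}$ (so $|\phi^\epsilon_{\gamma,j}|=|w_{\gamma,j}|$) and $\psi^\epsilon_{\gamma,j}:=E_I(H^\epsilon)\phi^\epsilon_{\gamma,j}$. The whole proof reduces to two estimates: (i) $\|(1-E_I(H^\epsilon))\phi^\epsilon_{\gamma,j}\|_{L^2(\X)}\le C\epsilon$ uniformly in $(\gamma,j)$, with $(1-E_I(H^\epsilon))\phi^\epsilon_{\gamma,j}$ still localized around $\gamma$; and (ii) $\sum_{(\gamma,j)}|\phi^\epsilon_{\gamma,j}\rangle\langle\phi^\epsilon_{\gamma,j}|=\mathfrak{Op}^\epsilon(S_{\mathfrak p})+O(\epsilon)$ in $\mathbb{B}(L^2(\X))$.

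\emph{The two key estimates.} For (i), by Theorem \ref{T-A}(2) (with $n=1$) one has $E_I(H^\epsilon)=\mathfrak{Op}^\epsilon(S_{\mathfrak p})+O(\epsilon)$ in norm, so it suffices to treat $\mathfrak{Op}^\epsilon(S_{\mathfrak p})\phi^\epsilon_{\gamma,j}$; by the kernel formula this equals $\int\widetilde{\Lambda}^\epsilon(x,y)K_{\mathfrak p}(x,y)\widetilde{\Lambda}^\epsilon(y,\gamma)w_{\gamma,j}(y)\,dy$, where $K_{\mathfrak p}$ — the kernel of $\mathfrak{Op}(S_{\mathfrak p})=E_I(H)$ — decays rapidly since $S_{\mathfrak p}\in S^{-\infty}(\X)$ (Corollary \ref{reg-symbol}); the cocycle identity turns it into $\widetilde{\Lambda}^\epsilon(\cdot,\gamma)\,\mathfrak{Op}(S_{\mathfrak p})w_{\gamma,j}$ plus an $L^2$-error that the decay of $K_{\mathfrak p}$ in $x-y$ and of $w_{\gamma,j}$ in $y-\gamma$ bounds by $C\epsilon$ uniformly in $\gamma$, and $\mathfrak{Op}(S_{\mathfrak p})w_{\gamma,j}=E_I(H)w_{\gamma,j}=w_{\gamma,j}$; running the same computation between two sites and tracking the $x-\gamma$ versus $y-\gamma'$ decays yields also the localization assertion. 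For (ii), the kernel of the rank-one sum is $\sum_{\gamma,j}\widetilde{\Lambda}^\epsilon(x,\gamma)\widetilde{\Lambda}^\epsilon(\gamma,y)\,w_{\gamma,j}(x)\overline{w_{\gamma,j}(y)}$, and replacing $\widetilde{\Lambda}^\epsilon(x,\gamma)\widetilde{\Lambda}^\epsilon(\gamma,y)$ by $\widetilde{\Lambda}^\epsilon(x,y)$ — the discarded flux through $x,\gamma,y$ being small because $w_{\gamma,j}(x)$ and $w_{\gamma,j}(y)$ both force $x,y$ near $\gamma$ — produces $\widetilde{\Lambda}^\epsilon(x,y)\,K_{E_I(H)}(x,y)$, which is exactly the kernel of $\mathfrak{Op}^\epsilon(S_{\mathfrak p})$, with an error kernel that is $O(\epsilon)$ and rapidly decaying off the diagonal, hence $O(\epsilon)$ in operator norm by a Schur bound.

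\emph{Consequences (statements (1) and (2)).} Estimate (i) gives $G^\epsilon_{(\alpha,j),(\beta,k)}:=\langle\psi^\epsilon_{\alpha,j},\psi^\epsilon_{\beta,k}\rangle=\langle\phi^\epsilon_{\alpha,j},E_I(H^\epsilon)\phi^\epsilon_{\beta,k}\rangle=\delta_{\alpha\beta}\delta_{jk}+O(\epsilon\langle\alpha-\beta\rangle^{-M})$ for every $M$ (the diagonal block being exactly $\delta_{jk}$, a degenerate triangle carrying no flux), so $G^\epsilon=1+R^\epsilon$ with $\|R^\epsilon\|_{\mathbb{B}(l^2(\Gamma\times J_I))}\le C\epsilon$ and $R^\epsilon$ rapidly decaying off the diagonal; hence $(G^\epsilon)^{-1/2}=1+O(\epsilon)$ stays in the same class of rapidly off-diagonally decaying matrices (a Banach algebra). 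Set $\W^\epsilon_{\alpha,j}:=\sum_{(\beta,k)}[(G^\epsilon)^{-1/2}]_{(\beta,k),(\alpha,j)}\psi^\epsilon_{\beta,k}$: it is orthonormal and contained in $\mathrm{Ran}\,E_I(H^\epsilon)$, so $\sum|\W^\epsilon_{\gamma,j}\rangle\langle\W^\epsilon_{\gamma,j}|\le E_I(H^\epsilon)$; equality (statement (1)) follows since if $0\ne\psi\in\mathrm{Ran}\,E_I(H^\epsilon)$ were orthogonal to all $\W^\epsilon$, hence to all $\psi^\epsilon$, hence (as $E_I(H^\epsilon)\psi=\psi$) to all $\phi^\epsilon$, then by (ii) $\mathfrak{Op}^\epsilon(S_{\mathfrak p})\psi=\big(\sum|\phi^\epsilon\rangle\langle\phi^\epsilon|\big)\psi+O(\epsilon)\|\psi\|=O(\epsilon)\|\psi\|$, while $\psi=E_I(H^\epsilon)\psi=\mathfrak{Op}^\epsilon(S_{\mathfrak p})\psi+O(\epsilon)\|\psi\|$ by Theorem \ref{T-A}(2), forcing $\|\psi\|\le C\epsilon\|\psi\|$, impossible for small $\epsilon$. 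Statement (2): $\psi^\epsilon_{\gamma,j}$ is the $\langle x-\gamma\rangle^{-m}$-localized $\phi^\epsilon_{\gamma,j}$ convolved against the kernel of $E_I(H^\epsilon)=\tfrac{i}{2\pi}\int_{\mathcal C}(H^\epsilon-\z)^{-1}d\z=\mathfrak{Op}^\epsilon\!\big(\tfrac{i}{2\pi}\int_{\mathcal C}r^\epsilon_\z(h)\,d\z\big)$, whose symbol lies in $S^{-m}_1(\X)$ by Proposition \ref{rez-dev}, so that kernel decays off the diagonal faster than any polynomial, uniformly in $\epsilon$; applying finally the rapidly decaying $(G^\epsilon)^{-1/2}$ preserves this.

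\emph{The matrix identity (statement (3)) and the main obstacle.} Since $\{\W^\epsilon\}$ is an orthonormal basis of $\mathrm{Ran}\,E_I(H^\epsilon)$ and $E_I(H^\epsilon)$ commutes with $H^\epsilon$, the operator $H^\epsilon E_I(H^\epsilon)$ is represented in that basis exactly by $M^\epsilon_{(\alpha,j),(\beta,k)}=\langle\W^\epsilon_{\alpha,j},H^\epsilon E_I(H^\epsilon)\W^\epsilon_{\beta,k}\rangle$, so the norm in (3) equals $\|M^\epsilon-[\mathfrak{Op}^\epsilon_\Gamma(\widetilde\mu)]\|_{\mathbb{B}(l^2(\Gamma\times J_I))}$ and, by the Schur test, it suffices to bound $|M^\epsilon_{(\alpha,j),(\beta,k)}-\widetilde{\Lambda}^\epsilon(\alpha,\beta)\widehat{\mu_{jk}}_{\alpha-\beta}|$ by $C_M\,\epsilon\,\langle\alpha-\beta\rangle^{-M}$ for one $M>d$, after peeling off a piece whose represented operator is $O(\epsilon)$ in norm outright. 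By Theorem \ref{T-A}(1) (with $n=1$), $H^\epsilon E_I(H^\epsilon)=\mathfrak{Op}^\epsilon(S_{\mathfrak H})+\epsilon\,\mathfrak{Op}^\epsilon(R^\epsilon_I(h;1))$; the second summand represents $\epsilon\,E_I(H^\epsilon)\mathfrak{Op}^\epsilon(R^\epsilon_I(h;1))E_I(H^\epsilon)$, of norm $\le C\epsilon$, and is discarded. For the first, replace in turn $\W^\epsilon$ by $\psi^\epsilon$ and then by $\phi^\epsilon$ inside the matrix element — each step costing $O(\epsilon\langle\alpha-\beta\rangle^{-M})$ thanks to the rapid off-diagonal decay of $(G^\epsilon)^{-1/2}-1$ and of the kernel of $\mathfrak{Op}^\epsilon(S_{\mathfrak H})$ (since $S_{\mathfrak H}\in S^{-\infty}(\X)$), together with the $O(\epsilon)$ bound and localization in (i) — and then apply the kernel-plus-flux computation once more to get $\langle\phi^\epsilon_{\alpha,j},\mathfrak{Op}^\epsilon(S_{\mathfrak H})\phi^\epsilon_{\beta,k}\rangle=\widetilde{\Lambda}^\epsilon(\alpha,\beta)\langle w_{\alpha,j},\mathfrak{H}w_{\beta,k}\rangle+O(\epsilon\langle\alpha-\beta\rangle^{-M})$; by Definition \ref{D-M}, $\langle w_{\alpha,j},\mathfrak{H}w_{\beta,k}\rangle=\widehat{\mu_{jk}}_{\alpha-\beta}$ and $\widetilde{\Lambda}^\epsilon(\alpha,\beta)\widehat{\mu_{jk}}_{\alpha-\beta}=[\mathfrak{Op}^\epsilon_\Gamma(\widetilde\mu_{jk})]_{\alpha\beta}$ by \eqref{iulie20}, which closes the Schur estimate. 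The genuinely delicate point is the uniform, localized version of estimate (i) — equivalently (ii): the quantitative statement that the segment Peierls phase is the correct local gauge turning non-magnetic band data into almost-magnetic band data. Making it rigorous requires marrying the kernel formula for $\mathfrak{Op}^\epsilon$ with sharp domination of triangle fluxes by the composite-Wannier localization, with all constants uniform in $\epsilon\in[0,\epsilon_0]$ and in the lattice; without it, neither almost-orthonormality, nor completeness, nor the leading matrix entry would come out with an $O(\epsilon)$ error. Everything else is bookkeeping built on this estimate and on Theorem \ref{T-A}.
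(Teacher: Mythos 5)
Your proposal is correct in substance and rests on exactly the same mechanism as the paper (Peierls-twisted Wannier functions $\widetilde{\Lambda}^\epsilon(\cdot,\gamma)w_j(\cdot-\gamma)$, triangle-flux estimates dominated by Wannier localization, Theorem \ref{T-A} with $n=1$ for the $O(\epsilon)$ symbol-level replacements, a Gram-matrix orthonormalization in a Schur/rapid-off-diagonal-decay class, and the final matrix-element computation producing $\widetilde{\Lambda}^\epsilon(\alpha,\beta)\widehat{\mu_{jk}}_{\alpha-\beta}$), but it differs in how the orthonormal basis of $\mathrm{Ran}\,E_I(H^\epsilon)$ is produced. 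The paper orthonormalizes the twisted functions \emph{first} (their Gram matrix is computed explicitly, Lemma \ref{Gramm-Wannier}, with diagonal block exactly $\delta_{jk}$), forms the projection $\widetilde{\mathfrak p}^\epsilon$ onto their span, proves $\|\widetilde{\mathfrak p}^\epsilon-E_I(H^\epsilon)\|\le C\epsilon$, and then transports the basis into $\mathrm{Ran}\,E_I(H^\epsilon)$ by the Sz.-Nagy/Kato intertwining unitary, so that completeness (your statement (1)) is automatic; you instead project with $E_I(H^\epsilon)$ first, L\"owdin-orthonormalize inside the range, and prove completeness by a separate (and valid) contradiction argument using your estimate (ii) together with Theorem \ref{T-A}(2). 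Your route buys a more self-contained construction with no intertwiner, at the price of two points that need a touch more care than you give them: the Gram matrix you must control is that of $E_I(H^\epsilon)\phi^\epsilon_{\gamma,j}$, not of the $\phi^\epsilon_{\gamma,j}$ themselves, so its diagonal block is only $\delta_{jk}+O(\epsilon^2)$ rather than exact, and the claimed entrywise bound $O(\epsilon\langle\alpha-\beta\rangle^{-M})$ for the remainder requires a \emph{localized} version of your estimate (i); replacing $E_I(H^\epsilon)$ by $\mathfrak{Op}^\epsilon(S_{\mathfrak p})$ is justified only in operator norm by Theorem \ref{T-A}(2), so either you argue at the kernel level that $(1-E_I(H^\epsilon))\phi^\epsilon_{\gamma,j}$ stays localized (using that $r^\epsilon_\z(h)\in S^{-m}_1$ uniformly, as you do for statement (2)), or you split the remainder and bound the non-localized piece directly in $\mathbb B(l^2)$ via a Cotlar--Stein/Schur argument applied to the synthesis operator of the $\phi^\epsilon$'s --- both repairs are available with the tools you already invoke. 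With these small points tightened, your proof of statement (3) (peel off the norm-$O(\epsilon)$ remainder of Theorem \ref{T-A}(1), pass from $\W^\epsilon$ to $\psi^\epsilon$ to $\phi^\epsilon$ at cost $O(\epsilon)$ in matrix norm, then do the flux computation) coincides with the paper's concluding estimate \eqref{iulie12}.
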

 
\begin{corollary}\label{iulie33}
 Under our assumptions, if $\epsilon_0$ is small
enough, then the magnetic band Hamiltonian $H^\epsilon E_{I}(H^\epsilon)$ is
isospectral with $\mathfrak{Op}^\epsilon_\Gamma(\widetilde{\mu})$ up to an
error of order $\epsilon$, i.e. the Hausdorff distance between their spectra is
of order $\epsilon$. 
\end{corollary}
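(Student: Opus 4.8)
This corollary is essentially a translation of Theorem~\ref{teoremadoi} into spectral language, so the plan is short and I do not expect a serious obstacle: all the analytic content has already been spent in proving that theorem. First I would introduce the natural unitary $U^\epsilon:l^2(\Gamma)^N\to\mathcal{H}_I^\epsilon:=E_I(H^\epsilon)L^2(\X)$ sending, for each $(\alpha,j)\in\Gamma\times J_I$, the canonical basis vector of $l^2(\Gamma)^N$ indexed by $(\alpha,j)$ to the magnetic Wannier function $\W^\epsilon_{\alpha,j}$. By item~(1) of Theorem~\ref{teoremadoi} the family $\{\W^\epsilon_{\alpha,j}\}$ is an orthonormal basis of $\mathcal{H}_I^\epsilon$ (the resolution $E_I(H^\epsilon)=\sum_{\alpha,j}|\W^\epsilon_{\alpha,j}\rangle\langle\W^\epsilon_{\alpha,j}|$ is precisely its completeness in $\mathcal{H}_I^\epsilon$), so $U^\epsilon$ is well defined and unitary onto $\mathcal{H}_I^\epsilon$. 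At this point I would also record that $\mathfrak{Op}^\epsilon_\Gamma(\widetilde{\mu})$ is a bounded self-adjoint operator on $l^2(\Gamma)^N$: boundedness by a Schur-test estimate using $|\widetilde{\Lambda}^\epsilon(\alpha,\beta)|=1$ (because $A_\epsilon$ has real components) together with the rapid decay, hence summability, of the Fourier coefficients $\{\widehat{\mu_{jk}}_\gamma\}_\gamma$; self-adjointness from the Hermiticity of the matrices $\{\mu_{jk}(\theta)\}$ (see Definition~\ref{D-M}) and $\overline{\widetilde{\Lambda}^\epsilon(\alpha,\beta)}=\widetilde{\Lambda}^\epsilon(\beta,\alpha)$. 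In particular $\sigma\big(\mathfrak{Op}^\epsilon_\Gamma(\widetilde{\mu})\big)$ is a compact subset of $\R$, so the comparison of spectra makes sense.

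Next, from the defining formula~\eqref{iulie20} for the matrix elements of $\mathfrak{Op}^\epsilon_\Gamma(\widetilde{\mu})$ in the canonical basis, conjugation by $U^\epsilon$ yields the exact identity
$$
U^\epsilon\,\mathfrak{Op}^\epsilon_\Gamma(\widetilde{\mu})\,(U^\epsilon)^*\ =\ \sum_{(\alpha,\beta)\in\Gamma\times\Gamma}\ \sum_{(j,k)\in J_I\times J_I}\ \left[\mathfrak{Op}^\epsilon_\Gamma(\widetilde{\mu}_{jk})\right]_{\alpha\beta}\ \left|\W^\epsilon_{\alpha,j}\rangle\langle\W^\epsilon_{\beta,k}\right|
$$
(the right-hand side, regarded on $L^2(\X)$, being the extension of the left-hand side by zero off $\mathcal{H}_I^\epsilon$). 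Since $E_I(H^\epsilon)$ is a spectral projection of $H^\epsilon$ and $I$ is bounded, $H^\epsilon E_I(H^\epsilon)$ is a bounded self-adjoint operator that leaves $\mathcal{H}_I^\epsilon$ invariant and restricts there to $H^\epsilon|_{\mathcal{H}_I^\epsilon}$; with respect to $L^2(\X)=\mathcal{H}_I^\epsilon\oplus(\mathcal{H}_I^\epsilon)^\perp$ it is block-diagonal, equal to $H^\epsilon|_{\mathcal{H}_I^\epsilon}$ on the first summand and to $0$ on the second. By the identity just displayed, item~(3) of Theorem~\ref{teoremadoi} then reads $\big\|H^\epsilon E_I(H^\epsilon)-U^\epsilon\,\mathfrak{Op}^\epsilon_\Gamma(\widetilde{\mu})\,(U^\epsilon)^*\big\|_{\mathbb{B}(L^2(\X))}\le C\epsilon$, and since this difference also annihilates $(\mathcal{H}_I^\epsilon)^\perp$, the same bound holds for the difference of the compressions to $\mathcal{H}_I^\epsilon$. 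Both compressions being bounded and self-adjoint, the elementary spectral-stability estimate (self-adjoint operators at norm distance $\le\delta$ have Hausdorff-$\delta$-close spectra), combined with $\sigma\big(U^\epsilon\,\mathfrak{Op}^\epsilon_\Gamma(\widetilde{\mu})\,(U^\epsilon)^*\big)=\sigma\big(\mathfrak{Op}^\epsilon_\Gamma(\widetilde{\mu})\big)$, gives $d_{\mathrm H}\big(\sigma(H^\epsilon|_{\mathcal{H}_I^\epsilon}),\,\sigma(\mathfrak{Op}^\epsilon_\Gamma(\widetilde{\mu}))\big)\le C\epsilon$, which is the desired estimate for the band Hamiltonian.

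The only remaining point is a matter of bookkeeping. On all of $L^2(\X)$ the operator $H^\epsilon E_I(H^\epsilon)$ carries the extra eigenvalue $0$, with eigenspace containing $(\mathcal{H}_I^\epsilon)^\perp\neq\{0\}$ (as $H^\epsilon$ is unbounded), so $\sigma\big(H^\epsilon E_I(H^\epsilon)\big)=\{0\}\cup\sigma\big(H^\epsilon|_{\mathcal{H}_I^\epsilon}\big)$. This zero is a trivial artifact of compressing the full Hamiltonian to the band subspace; the statement should be understood as comparing the genuine band operators $H^\epsilon|_{\mathcal{H}_I^\epsilon}$ and $\mathfrak{Op}^\epsilon_\Gamma(\widetilde{\mu})$ acting in $\mathcal{H}_I^\epsilon$ and $l^2(\Gamma)^N$ respectively, or, if one prefers operators acting in $L^2(\X)$, one simply adjoins the point $0$ to both spectra. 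Apart from this, the argument is routine; the only steps requiring a line of justification are the boundedness and self-adjointness of $\mathfrak{Op}^\epsilon_\Gamma(\widetilde{\mu})$ recorded in the first paragraph.
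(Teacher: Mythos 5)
Your proof is correct and takes essentially the same route as the paper, which obtains the corollary directly from the norm estimate in point (3) of Theorem \ref{teoremadoi} (i.e. \eqref{iulie12}), combined implicitly with the unitary identification of $l^2(\Gamma)^N$ with the range of $E_I(H^\epsilon)$ via the magnetic Wannier basis and the standard fact that self-adjoint operators at norm distance $\delta$ have spectra at Hausdorff distance at most $\delta$. Your additional observations (Schur-test boundedness and self-adjointness of $\mathfrak{Op}^\epsilon_\Gamma(\widetilde{\mu})$, and the bookkeeping of the artificial eigenvalue $0$ coming from the orthogonal complement of $E_I(H^\epsilon)L^2(\X)$) only make explicit details that the paper leaves implicit.
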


\vspace{0.5cm}

Our last result deals with the case of a slowly varying magnetic perturbation and makes the connection with the `usual' magnetic Weyl calculus via minimal coupling. 
\begin{definition}\label{Hyp-V-magn-slow}
We say that a magnetic field $B_\epsilon(x)$ is \textit{slowly varying} if it can be derived from a magnetic vector potential $A_\epsilon(x)=A(\epsilon x)$ where $A$ is a smooth vector potential having bounded derivatives of all strictly positive orders. In this case 
$$
B_\epsilon(x)\ =\ \epsilon\; dA(\epsilon x),\qquad\epsilon\in[0,\epsilon_0].
$$
\end{definition}

\begin{theorem}\label{teorema3}
Besides the assumptions of Theorem \ref{teoremadoi}, we consider a slowly varying magnetic field. Denote by $\mathfrak{Op}(\mu^{\epsilon})$ the usual Weyl quantization of the matrix-valued symbol $\mu^{\epsilon}_{jk}(x,\xi):=\mu_{jk}(\xi-A_\epsilon(x))$, acting on $L^2(\X)^N$. Then the Hausdorff distance between the spectrum of $H^\epsilon E_{I}(H^\epsilon)$ and $\mathfrak{Op}(\mu^{\epsilon})$ is of order $\epsilon$.
\end{theorem}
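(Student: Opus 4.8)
The plan is to combine Corollary~\ref{iulie33} with a direct comparison of the discrete magnetic matrix $\mathfrak{Op}^\epsilon_\Gamma(\widetilde\mu)$ acting on $l^2(\Gamma)^N$ with the minimally coupled Weyl operator $\mathfrak{Op}(\mu^\epsilon)$ acting on $L^2(\X)^N$. Since $\{\mu_{jk}(\theta)\}$ is a smooth Hermitian matrix on $\mathbb{T}_*$ whose Fourier coefficients $\{(\widehat{\mu_{jk}})_\gamma\}_{\gamma\in\Gamma}$ decay rapidly, every operator appearing below is bounded and self-adjoint, so the Hausdorff distance between two of their spectra is controlled by the operator norm of their difference. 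Thus it suffices to prove that the Hausdorff distance between $\sigma\big(\mathfrak{Op}^\epsilon_\Gamma(\widetilde\mu)\big)$ and $\sigma\big(\mathfrak{Op}(\mu^\epsilon)\big)$ is $O(\epsilon)$: Theorem~\ref{teorema3} then follows from the triangle inequality for the Hausdorff distance, since Corollary~\ref{iulie33} already places $\sigma\big(H^\epsilon E_{I}(H^\epsilon)\big)$ within $O(\epsilon)$ of $\sigma\big(\mathfrak{Op}^\epsilon_\Gamma(\widetilde\mu)\big)$.

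The key structural observation is that $\mathfrak{Op}(\mu^\epsilon)$ is block diagonal along the cosets of $\Gamma$ in $\X$. Indeed, $\mu^\epsilon_{jk}(x,\xi)=\mu_{jk}(\xi-A_\epsilon(x))$ is $\Gamma_*$-periodic in $\xi$, so expanding in the momentum variable and using the Poisson formula \eqref{II.4.7} shows that the Schwartz kernel of $\mathfrak{Op}(\mu^\epsilon)$ is supported on $\{(x,y):x-y\in\Gamma\}$, with
$$\big(\mathfrak{Op}(\mu^\epsilon)u\big)(x)\ =\ \underset{\gamma\in\Gamma}{\sum}\ e^{-i\langle A_\epsilon(x-\gamma/2),\,\gamma\rangle}\,\widehat{\mu}_{\gamma}\,u(x-\gamma)$$
up to the fixed normalising constant carried by $\mathcal{F}_\Gamma$ (and up to the harmless $\gamma\leftrightarrow-\gamma$ convention), where $\widehat{\mu}_\gamma$ denotes the matrix with entries $(\widehat{\mu_{jk}})_\gamma$. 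Under the unitary identification $L^2(\X)^N\cong L^2\big(E;\,l^2(\Gamma)^N\big)$, $f\mapsto\big(x_0\mapsto(\gamma\mapsto f(x_0+\gamma))\big)$, the operator becomes decomposable, $\mathfrak{Op}(\mu^\epsilon)\cong\int_E^\oplus T_{x_0}\,dx_0$, where $T_{x_0}\in\mathbb{B}\big(l^2(\Gamma)^N\big)$ has matrix entries $[T_{x_0}]_{\alpha\beta}=e^{-i\langle A_\epsilon(x_0+(\alpha+\beta)/2),\,\alpha-\beta\rangle}\,\widehat{\mu}_{\alpha-\beta}$, is norm-bounded uniformly in $x_0$ (Schur test, using $\{\widehat{\mu}_\gamma\}\in l^1(\Gamma)$) and depends smoothly on $x_0\in E$. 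By the standard direct-integral description of the spectrum together with continuity in $x_0$, $\sigma\big(\mathfrak{Op}(\mu^\epsilon)\big)=\overline{\bigcup_{x_0\in E}\sigma(T_{x_0})}$.

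Next I would compare $T_{x_0}$ with $\mathfrak{Op}^\epsilon_\Gamma(\widetilde\mu)$, whose entries are $\widetilde{\Lambda}^\epsilon(\alpha,\beta)\,\widehat{\mu}_{\alpha-\beta}=e^{-i\int_{[\alpha,\beta]}A_\epsilon}\,\widehat{\mu}_{\alpha-\beta}$. Writing $\gamma:=\alpha-\beta$, the difference of the two phases at position $(\alpha,\beta)$ equals, up to sign, $\langle A_\epsilon(x_0+\beta+\gamma/2),\gamma\rangle-\int_0^1\langle A_\epsilon(\beta+t\gamma),\gamma\rangle\,dt$, which I split into $\langle A_\epsilon(x_0+\beta+\gamma/2)-A_\epsilon(\beta+\gamma/2),\gamma\rangle$ plus the midpoint-quadrature remainder $\langle A_\epsilon(\beta+\gamma/2),\gamma\rangle-\int_0^1\langle A_\epsilon(\beta+t\gamma),\gamma\rangle\,dt$. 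This is exactly where the \emph{slow variation} hypothesis $A_\epsilon(x)=A(\epsilon x)$ enters: because $DA$ and $D^2A$ are bounded, the first summand is $O(\epsilon\,|x_0|\,|\gamma|)=O(\epsilon|\gamma|)$ uniformly for $x_0\in E$, while the midpoint remainder is $O(\|D^2A_\epsilon\|_\infty|\gamma|^3)=O(\epsilon^2|\gamma|^3)$. Using $|e^{i\varphi_1}-e^{i\varphi_2}|\le|\varphi_1-\varphi_2|$ we get $\big|[T_{x_0}-\mathfrak{Op}^\epsilon_\Gamma(\widetilde\mu)]_{\alpha\beta}\big|\le C\big(\epsilon|\gamma|+\epsilon^2|\gamma|^3\big)\,|\widehat{\mu}_\gamma|=:f_\epsilon(\alpha-\beta)$, and since $\{\widehat{\mu}_\gamma\}$ has rapid decay, $f_\epsilon\in l^1(\Gamma)$ with $\|f_\epsilon\|_{l^1(\Gamma)}=O(\epsilon)$; the Schur test then yields $\|T_{x_0}-\mathfrak{Op}^\epsilon_\Gamma(\widetilde\mu)\|_{\mathbb{B}(l^2(\Gamma)^N)}\le C\epsilon$ uniformly in $x_0$. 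By self-adjointness, $\sigma(T_{x_0})$ and $\sigma\big(\mathfrak{Op}^\epsilon_\Gamma(\widetilde\mu)\big)$ are then within Hausdorff distance $C\epsilon$ for every $x_0\in E$, hence so are $\sigma\big(\mathfrak{Op}(\mu^\epsilon)\big)$ and $\sigma\big(\mathfrak{Op}^\epsilon_\Gamma(\widetilde\mu)\big)$, and the triangle inequality with Corollary~\ref{iulie33} completes the proof. One may equivalently route this comparison through $\mathfrak{Op}^\epsilon(\widetilde\mu)$, the magnetic Weyl quantization of the $x$-independent $\Gamma_*$-periodic symbol $\widetilde\mu(\xi)$, which has $\mathfrak{Op}^\epsilon_\Gamma(\widetilde\mu)$ as its $x_0=0$ fibre and differs from $\mathfrak{Op}(\mu^\epsilon)$ by only $O(\epsilon^2)$ in norm, this being the precise sense in which the Hofstadter matrix is the minimally coupled operator; but the direct fibrewise comparison above is shorter.

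I expect the main obstacle to be not any single estimate but the bookkeeping in the fibration: identifying the kernel of the minimally coupled Weyl operator and tracking the normalising constants coming from $\mathcal{F}_\X$, $\mathcal{F}_\Gamma$ and the Poisson formula \eqref{II.4.7}, establishing measurability and uniform boundedness of $x_0\mapsto T_{x_0}$, and invoking the exact statement $\sigma\big(\mathfrak{Op}(\mu^\epsilon)\big)=\overline{\bigcup_{x_0\in E}\sigma(T_{x_0})}$. The conceptual content, worth stating explicitly, is that it is precisely the slow variation $A_\epsilon(x)=A(\epsilon x)$ --- equivalently, that a translate by any fixed $x_0\in E$ changes $A_\epsilon$ only by $O(\epsilon)$ --- that makes all fibres $T_{x_0}$ spectrally $O(\epsilon)$-indistinguishable from the single discrete magnetic matrix $\mathfrak{Op}^\epsilon_\Gamma(\widetilde\mu)$; without it, distinct cosets could carry genuinely different spectra and no such clean comparison would hold.
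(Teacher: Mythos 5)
Your argument is correct and is essentially the paper's own proof: both reduce via Corollary \ref{iulie33} to comparing the Hofstadter matrix $\mathfrak{Op}^\epsilon_\Gamma(\widetilde{\mu})$ with $\mathfrak{Op}(\mu^\epsilon)$, compute that the kernel of the minimally coupled Weyl operator is supported on $\Gamma$-cosets with midpoint phases $e^{-i\langle A_\epsilon((x+x')/2),\,\cdot\rangle}$, and control the two phase discrepancies (midpoint quadrature error along $[\alpha,\beta]$ and the shift by the bounded cell variable) by the slow-variation bounds on $DA$ together with the rapid decay of $\widehat{\mu}_\gamma$. Your fibrewise comparison $\int_E^\oplus T_{x_0}\,dx_0$ versus the paper's tensoring of the midpoint-phase matrix with $\bb1_{L^2(E)}$ is only a repackaging of the same decomposition and the same $O(\epsilon)$ estimates.
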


\section{The non-magnetic case}

\subsection{Kernels and symbols of $\Gamma$-periodic
operators}\label{Op-decomp}

Let us consider a general $\Gamma$-periodic bounded operator
$\mathfrak{T}\in\mathbb{B}\big(L^2(\X)\big)$ (i.e. a bounded operator that
commutes with all the translations $\{\tau_\gamma\}_{\gamma\in\Gamma}$). Then Remark \ref{tau-U-sigma} implies that its
Bloch-Floquet transform is a decomposable operator and we can write
\beq
\hat{\mathfrak{T}}:=\mathscr{U}_\Gamma\mathfrak{T}
\mathscr{U}_\Gamma^{-1}=\int^\oplus_{\mathbb{T}_*}\hat{\mathfrak{T}}(\theta)d\theta,\qquad
\hat{\mathfrak{T}}(\theta)\in\mathbb{B}\big(\mathscr{F}_\theta\big),\ \forall\theta\in\mathbb{T}_*.
\eeq

We recall that the elements of $\mathscr{F}_\theta$ are
completely determined by their restriction to $E\subset\X$ and this restriction
defines a unitary isomorphism $\mathscr{F}_\theta\cong L^2(E)$. Thus
each fibre operator $\hat{\mathfrak{T}}(\theta)$ has an associated distribution
kernel
$\hat{T}(\theta)\in\mathscr{D}^\prime(\X\times\X)$
that leaves $\mathscr{F}_\theta$ invariant. Thus it verifies the relations:
\beq\label{theta-condition-distr}
\big(\tau_\alpha\otimes\tau_\beta\big)\hat{T}(\theta)\ =\
e^{-i<\theta,\alpha-\beta>}\hat{T}(\theta),\quad\forall\theta\in\mathbb{T},
\,\forall(\alpha,\beta)\in\Gamma^2.
\eeq
We shall
always suppose that the map
$\mathbb{T}_*\ni\theta\mapsto
\hat{\mathfrak{T}}(\theta)\in\mathbb{B}\big(\mathscr{F}_\theta\big)$ is bounded
and measurable but much stronger hypothesis will be necessary in the sequel. We
shall mainly be interested in situations in which the integral kernel of 
$\hat{T}(\theta)$ is of class $BC^\infty\big(\X\times\X\big)$ and
\eqref{theta-condition-distr} implies that they satisfy the relation:
\beq\label{theta-condition}
\hat{T}(\theta,x+\alpha,y+\beta)\ =\
e^{-i<\theta,\alpha-\beta>}\hat{T}(\theta,x,y),\quad\forall(\theta,x,y)\in\mathbb{T}
\times\X^2 ,
\,\forall(\alpha,\beta)\in\Gamma^2,
\eeq
\begin{hypothesis}\label{H-F-dec-op}
 Suppose given a bounded $\Gamma$-periodic operator
$\mathfrak{T}$ such that in the Bloch-Floquet representation it defines a family
of integral kernels $\hat{T}(\theta)$ of class
$BC^\infty\big(\X\times\X\big)$ and such that the map
$\mathbb{T}_*\ni\theta\mapsto
\hat{T}(\theta)\in BC^\infty\big(\X\times\X\big)$ is bounded and measurable.
\end{hypothesis}

\vspace{0.5cm}

Let us compute the integral kernel of $\mathfrak{T}=\mathscr{U}_\Gamma^{-1}\hat{\mathfrak{T}}\mathscr
{U}_\Gamma$. Recall that every $x\in \X$ can be uniquely written as $[x]+\hat{x}$ with $[x]\in \Gamma$ and $\hat{x}\in E$. 
For any $f\in\mathscr{S}(\X)$ and $x\in\X$ we have
\begin{align}
\big(\mathfrak{T}f\big)(x)&=|E_*|^{-1}\int_{\mathbb{T}_*}d\theta
\big(\hat{\mathfrak{T}}
(\theta)\mathscr{U}_\Gamma
f\big)(\theta,x)=|E_*|^{-1}\int_{\mathbb{T}_*}d\theta\int_E
d\hat{y}\,\hat{T}(\theta;x,\hat{y})\big(\mathscr{U}_\Gamma
f\big)(\theta,\hat{y})\nonumber \\
&=|E_*|^{-1}\underset{\gamma\in\Gamma}{\sum}\int_{\mathbb{T}_*}d\theta\int_E
d\hat{y}\,\hat{T}(\theta;\hat{x},\hat{y})e^{-i<\theta,[x]-\gamma>}
f(\hat{y}+\gamma)\nonumber \\
&=
\underset{\gamma\in\Gamma}{\sum}\int_E
d\hat{y}\,\check{T}([x]-\gamma;\hat{x},\hat{y})f(\hat{y}+\gamma),\nonumber \\
\check{T}
(\alpha;\hat{x},\hat{y})&:=|E_*|^{-1}\int_{E_*}d\theta\,\hat{T}(\theta;\hat{x},\hat{y
})e^{-i<\theta,\alpha>}.\label{kernel-1}
\end{align}
Thus $\mathfrak{T}$ has the integral kernel:
\beq\label{T-int-kernel}
K_{\mathfrak{T}}(x,y):=\check{T}([x]-[y],\hat{x},\hat{y}):=|E_*|^{-1}\int_{E_*}d\theta\,\hat{T}(\theta;\hat{x},\hat{y
})e^{-i<\theta,[x]-[y]>}.
\eeq
Notice that this kernel may have some decay in the variable $x-y$ depending on
the regularity of
the map $\mathbb{T}_*\ni\theta\mapsto
\hat{T}(\theta)\in BC^\infty\big(\X\times\X\big)$, which we shall assume to be $C^\infty$. Indeed, by partial integration in \eqref{kernel-1} we obtain that $<x-y>^N K_{\mathfrak{T}}(x,y)$ is globally bounded 
for every $N\geq 1$.  


Let us write down the symbol of the operator $\mathfrak{T}$
\begin{align}\label{symb-FKernel}
S_{\mathfrak{T}}(z,\zeta)&=(2\pi)^{d}\int_\X
e^{-i\langle \zeta,v\rangle}K_{\mathfrak{T}}(z+v/2,z-v/2)dv\nonumber \\
&=(2\pi)^{d}\int_\X
e^{-i\langle \zeta,v\rangle}\left(|E_*|^{-1}\int_{\mathbb{T}_*}d\theta
\,\hat{T}\big(\theta;z+(v/2),z-(v/2)\big)\right)dv.
\end{align}

\begin{proposition}\label{reg-symbol}
 If $\mathfrak{T}$ is a bounded $\Gamma$-invariant operator satisfying Hypothesis \ref{H-F-dec-op}, 
 then its symbol $S_{\mathfrak{T}}$ belongs to $S^{-\infty}(\X)$.
\end{proposition}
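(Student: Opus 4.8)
The plan is to reduce the statement to the rapid off-diagonal decay of $K_{\mathfrak T}$ \emph{and of all its derivatives}, and then to read the symbol estimates off the partial Fourier transform \eqref{symb-FKernel}. First I would record that, in spite of the way it is written in \eqref{T-int-kernel}, $K_{\mathfrak T}$ is in fact a \emph{smooth} function on $\X\times\X$. Indeed, applying the covariance relation \eqref{theta-condition} with $\alpha=[x]$, $\beta=[y]$ turns the right-hand side of \eqref{T-int-kernel} into
$$
K_{\mathfrak T}(x,y)\ =\ |E_*|^{-1}\int_{E_*}\hat T(\theta;x,y)\,d\theta ,
$$
where the piecewise-defined quantities $[x],[y],\hat x,\hat y$ have disappeared. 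Since $\theta\mapsto\hat T(\theta)\in BC^\infty(\X\times\X)$ is bounded and $C^\infty$ (the regularity assumed right after \eqref{T-int-kernel}), differentiation under the integral sign gives $K_{\mathfrak T}\in BC^\infty(\X\times\X)$ together with
$$
\big(\partial_x^c\partial_y^d K_{\mathfrak T}\big)(x,y)\ =\ |E_*|^{-1}\int_{E_*}\big(\partial_x^c\partial_y^d\hat T\big)(\theta;x,y)\,d\theta,\qquad c,d\in\mathbb{N}^d .
$$

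Next I would observe that, for fixed $c,d$, the kernel $\partial_x^c\partial_y^d\hat T$ still satisfies all the hypotheses imposed on $\hat T$: it is of class $BC^\infty$, it depends $C^\infty$-ly (indeed $\Gamma_*$-periodically) on $\theta$, and it still satisfies \eqref{theta-condition}, since constant translations in $x,y$ commute with $\partial_x^c\partial_y^d$. Hence the integration by parts in $\theta\in\mathbb{T}_*$ performed on the Fourier-coefficient integral \eqref{kernel-1} right after \eqref{T-int-kernel} applies verbatim to $\partial_x^c\partial_y^d\hat T$ in place of $\hat T$ and yields, for every $N\in\mathbb{N}$,
$$
\sup_{x,y\in\X}\ \langle x-y\rangle^{N}\,\big|\big(\partial_x^c\partial_y^d K_{\mathfrak T}\big)(x,y)\big|\ <\ \infty .
$$
I would then transport this to $G(z,v):=K_{\mathfrak T}(z+v/2,z-v/2)$, the function integrated in \eqref{symb-FKernel}: by the chain rule $\partial_z^a\partial_v^b G$ is a constant-coefficient linear combination of terms $(\partial_x^c\partial_y^d K_{\mathfrak T})(z+v/2,z-v/2)$ with $|c|+|d|=|a|+|b|$, and since $x-y=v$ under this substitution,
$$
\sup_{z,v\in\X}\ \langle v\rangle^{N}\,\big|\big(\partial_z^a\partial_v^b G\big)(z,v)\big|\ <\ \infty\qquad\text{for all }a,b\in\mathbb{N}^d,\ N\in\mathbb{N} .
$$

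To conclude, since $S_{\mathfrak T}(z,\zeta)=(2\pi)^d\int_\X e^{-i\langle\zeta,v\rangle}G(z,v)\,dv$ by \eqref{symb-FKernel}, I would differentiate under the integral (legitimate by the last bound) and integrate by parts $|c|$ times in $v$, using $\zeta^c e^{-i\langle\zeta,v\rangle}=i^{|c|}\partial_v^c e^{-i\langle\zeta,v\rangle}$ and the vanishing of boundary terms, to get for all multi-indices $a,b,c$
$$
\zeta^c\,\big(\partial_z^a\partial_\zeta^b S_{\mathfrak T}\big)(z,\zeta)\ =\ (-i)^{|c|}(2\pi)^d\int_\X e^{-i\langle\zeta,v\rangle}\,\partial_v^c\!\big[(-iv)^b(\partial_z^a G)(z,v)\big]\,dv .
$$
By the Leibniz rule and the bounds on $G$ the integrand is dominated by $C\langle v\rangle^{|b|}\langle v\rangle^{-|b|-d-1}=C\langle v\rangle^{-d-1}$, uniformly in $(z,\zeta)$, so $\sup_{z,\zeta}\langle\zeta\rangle^{N}|\partial_z^a\partial_\zeta^b S_{\mathfrak T}(z,\zeta)|<\infty$ for all $N,a,b$. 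This means $\nu^{s,\rho}_{a,b}(S_{\mathfrak T})<\infty$ for every $s\in\mathbb{R}$, $\rho\in[0,1]$ and $(a,b)\in\mathbb{N}^d\times\mathbb{N}^d$, i.e. $S_{\mathfrak T}\in S^{-\infty}(\X)=\bigcap_{s}S^s_\rho(\X)$; in fact $S_{\mathfrak T}\in S^{-\infty}(\X)_\Gamma$, since $K_{\mathfrak T}(x+\gamma,y+\gamma)=K_{\mathfrak T}(x,y)$ forces $z\mapsto S_{\mathfrak T}(z,\zeta)$ to be $\Gamma$-periodic.

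The one genuinely non-formal point is the observation underlying the second paragraph: that $K_{\mathfrak T}$ is smooth — the apparent cell-boundary discontinuities in \eqref{T-int-kernel} cancel by covariance — and that all of its $x,y$-derivatives inherit the rapid off-diagonal decay already established for $K_{\mathfrak T}$ itself. This rests entirely on the $C^\infty$ dependence of $\theta\mapsto\hat T(\theta)$, which is part of the standing assumptions; everything else is routine bookkeeping with the Fourier transform in $v$.
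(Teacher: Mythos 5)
Your argument is correct and is exactly the route the paper intends: its one-line proof points to \eqref{symb-FKernel} together with the off-diagonal decay of $K_{\mathfrak{T}}$, and your write-up simply makes explicit the two points the paper leaves implicit (smoothness of $K_{\mathfrak{T}}$ via the covariance relation \eqref{theta-condition}, and the fact that the $\theta$-integration-by-parts decay argument applies to all $x,y$-derivatives of the kernel) before performing the standard Fourier bookkeeping in $v$. No changes needed.
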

\begin{proof}
The result follows from \eqref{symb-FKernel} and from the decay properties of $K_{\mathfrak{T}}(z+v/2,z-v/2)$ seen as a function of $v$.
\end{proof}

\begin{proposition}\label{rho}
 Any function $\rho\in C(\mathbb{T}_*)$ defines an operator
$\mathbf{M}_\rho\in\mathbb{B}(\mathscr{F})$ given by multiplication with
$\rho(\theta)$ in each
fiber space $\mathscr{F}_\theta$. Then, denoting by
$\tilde{\rho}\in C_\Gamma(\X^*)$ the periodic extension of $\rho$ to $\X^*$ we
have the equality:
$$
\mathscr{U}_\Gamma^{-1}\mathbf{M}_\rho\mathscr{U}_\Gamma\ =\
\mathfrak{Op}(\tilde{\rho}).
$$
\end{proposition}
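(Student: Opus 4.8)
The plan is to reduce the asserted equality to the special case of the characters $\rho=\sigma_\gamma$, $\gamma\in\Gamma$, by a density argument, and then to recognize both sides as the translation operator $\tau_\gamma$. First I would observe that both sides are bounded operators depending continuously (indeed isometrically) on $\rho\in C(\mathbb{T}_*)$ equipped with $\|\cdot\|_\infty$. On the Bloch--Floquet side, $\mathbf{M}_\rho=\int^\oplus_{\mathbb{T}_*}\rho(\theta)\,\mathds{1}_{\mathscr{F}_\theta}\,d\theta$ is a well-defined decomposable operator with $\|\mathbf{M}_\rho\|_{\mathbb{B}(\mathscr{F})}=\|\rho\|_\infty$, and since $\mathscr{U}_\Gamma$ is unitary, $\mathscr{U}_\Gamma^{-1}\mathbf{M}_\rho\mathscr{U}_\Gamma$ is bounded on $L^2(\X)$ with the same norm. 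On the other side, the symbol $\widetilde{\rho}$ does not depend on $x$, so $\mathfrak{Op}(\widetilde{\rho})$ is a convolution (Fourier multiplier) operator; since $\widetilde{\rho}$ is bounded it extends to a bounded operator on $L^2(\X)$ with norm $\le\|\rho\|_\infty$ — this is also what assigns a meaning to $\mathfrak{Op}(\widetilde{\rho})$ when $\rho$ is merely continuous, i.e. outside the H\"ormander classes. Consequently it suffices to prove the identity on a dense subspace of $C(\mathbb{T}_*)$.

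By Stone--Weierstrass (equivalently, by Fej\'er summation on $\mathbb{T}_*$) the finite linear combinations of the characters $\{\sigma_\gamma\}_{\gamma\in\Gamma}$ are dense in $C(\mathbb{T}_*)$ for $\|\cdot\|_\infty$, so by linearity and the continuity just noted it is enough to check $\mathscr{U}_\Gamma^{-1}\mathbf{M}_{\sigma_\gamma}\mathscr{U}_\Gamma=\mathfrak{Op}(\widetilde{\sigma_\gamma})$ for each $\gamma\in\Gamma$. For the left-hand side, $\mathbf{M}_{\sigma_\gamma}$ is by definition multiplication by the function $\theta\mapsto\sigma_\gamma(\theta)$ in each fibre, which according to Remark \ref{tau-U-sigma} is exactly $\mathscr{U}_\Gamma\tau_\gamma\mathscr{U}_\Gamma^{-1}$; hence $\mathscr{U}_\Gamma^{-1}\mathbf{M}_{\sigma_\gamma}\mathscr{U}_\Gamma=\tau_\gamma$. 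For the right-hand side, $\widetilde{\sigma_\gamma}(\xi)=e^{-i\langle\xi,\gamma\rangle}$ is independent of $x$, and inserting it into \eqref{OpA} with $A=0$ and carrying out the resulting $\xi$- and $y$-integrations (which collapse to a Dirac mass) identifies $\mathfrak{Op}(\widetilde{\sigma_\gamma})$ with $\tau_\gamma$ as well. Thus the two sides coincide on every $\sigma_\gamma$, hence on their linear span, and therefore — by density and the uniform bounds of the previous paragraph — on all of $C(\mathbb{T}_*)$.

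There is no serious obstacle here; the only points that need care are (a) assigning a meaning to $\mathfrak{Op}(\widetilde{\rho})$ for non-smooth $\rho$, handled above by viewing it as the operator-norm limit of the quantizations of trigonometric polynomials, equivalently as a bounded $L^2$ Fourier multiplier, and (b) bookkeeping of the sign conventions in the definitions of $\sigma_\gamma$, of $\mathscr{U}_\Gamma$, and of the Weyl quantization \eqref{OpA}, so that the elementary computations above genuinely produce $\tau_\gamma$ on both sides rather than $\tau_\gamma$ on one side and $\tau_{-\gamma}$ on the other. As an alternative to the density step one may argue directly: compute the fibre operator of $\mathfrak{Op}(\widetilde{\rho})$ over $\theta$ by letting it act on the plane waves that span $\mathscr{F}_\theta$ (whose momenta form a single $\Gamma_*$-coset), and use the $\Gamma_*$-periodicity of $\widetilde{\rho}$ to see that this fibre operator is the scalar $\rho(\theta)$; then $\mathscr{U}_\Gamma\mathfrak{Op}(\widetilde{\rho})\mathscr{U}_\Gamma^{-1}=\int^\oplus_{\mathbb{T}_*}\rho(\theta)\,\mathds{1}\,d\theta=\mathbf{M}_\rho$.
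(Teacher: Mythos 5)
Your argument takes a genuinely different route from the paper's: the paper proves the identity by a single direct computation, writing $\big(\mathscr{U}_\Gamma^{-1}\mathbf{M}_\rho\mathscr{U}_\Gamma f\big)(x)=|E_*|^{-1}\int_{\mathbb{T}_*}\rho(\theta)\sum_{\gamma\in\Gamma}e^{-i\langle\theta,\gamma\rangle}f(x-\gamma)\,d\theta$ and unfolding the torus integral into an $\X^*$-integral against the periodic extension $\tilde\rho$ (a Fourier-series/Poisson step), whereas you reduce, via linearity, the uniform bounds $\|\mathscr{U}_\Gamma^{-1}\mathbf{M}_\rho\mathscr{U}_\Gamma\|=\|\rho\|_\infty$ and $\|\mathfrak{Op}(\tilde\rho)\|\le\|\rho\|_\infty$, and density of trigonometric polynomials in $C(\mathbb{T}_*)$, to the characters $\sigma_\gamma$, handling the left side by Remark \ref{tau-U-sigma} and the right side as a Fourier multiplier. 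Structurally this is sound and arguably more transparent; it also makes explicit how $\mathfrak{Op}(\tilde\rho)$ is to be understood for merely continuous $\rho$, a point the paper leaves implicit.

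However, the one step you defer, your item (b) on sign bookkeeping, is precisely where the argument does not close as written. With the paper's stated conventions ($\sigma_\gamma(\theta)=e^{-i\langle\theta,\gamma\rangle}$, $(\tau_z f)(x)=f(x+z)$, and \eqref{OpA} with $A=0$) one indeed gets $\mathscr{U}_\Gamma^{-1}\mathbf{M}_{\sigma_\gamma}\mathscr{U}_\Gamma=\tau_\gamma$, but $\widetilde{\sigma_\gamma}(\xi)=e^{-i\langle\xi,\gamma\rangle}$ gives $\big(\mathfrak{Op}(\widetilde{\sigma_\gamma})u\big)(x)=u(x-\gamma)=(\tau_{-\gamma}u)(x)$, so the two sides do \emph{not} coincide on characters; carried out honestly, your scheme proves $\mathscr{U}_\Gamma^{-1}\mathbf{M}_\rho\mathscr{U}_\Gamma=\mathfrak{Op}\big(\tilde\rho(-\cdot)\big)$, i.e. the stated identity only after reflecting the symbol (equivalently, after flipping the sign convention in $\sigma_\gamma$ or defining $\tilde\rho(\xi):=\rho(\pi_*(-\xi))$). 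The same reflection is silently absorbed in the last equality of the paper's own proof, where after substituting $y=x+z$ in \eqref{OpA} the phase should be $e^{-i\langle\xi,z\rangle}$ rather than the printed $e^{i\langle\xi,z\rangle}$; so the mismatch is a conventions/typo issue in the statement rather than a defect peculiar to your method. Your alternative fibrewise argument hits the same point: with the paper's convention, $\mathscr{F}_\theta$ is spanned by plane waves with momenta in $-\theta+\Gamma_*$, so the fibre of $\mathfrak{Op}(\tilde\rho)$ over $\theta$ is the scalar $\rho(-\theta)$, not $\rho(\theta)$. To complete your proof you must actually perform this check and either adjust the convention (e.g. $\sigma_\gamma(\theta)=e^{+i\langle\theta,\gamma\rangle}$) or state the conclusion with the reflected symbol; as written, the decisive sign is asserted rather than verified.
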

\begin{proof}
For any $f\in\mathscr{S}(\X)$ we have
\begin{align*}
&\left(\mathscr{U}_\Gamma^{-1}\left(\int_{\mathbb{T}_*}
^\oplus\rho(\theta)d\theta\right)\mathscr{U}_\Gamma
f\right)(x)=|E_*|^{-1}\int_{
\mathbb{T}_*}\left(\rho(\theta)\underset{\gamma\in\Gamma}{\sum}e^{-i<\theta,
\gamma>}f(x-\gamma)\right)d\theta\\
&=|E_*|^{-1}\underset{\gamma\in\Gamma}{\sum}\int_{
\mathbb{T}_*}\left(\rho(\theta)e^{-i<\theta,
\gamma>}f(x-\gamma)\right)d\theta=|E_*|^{-1}\underset{\gamma\in\Gamma}{\sum}
\left(\int_{
\mathbb{T}_*}\rho(\theta)e^{-i<\theta,
\gamma>}d\theta\right)f(x-\gamma)\\
&=(2\pi)^{-d}\int_\X\int_{
\X^*}e^{i<\xi,z>}\tilde{\rho}(\xi)f(x+z)d\theta
dz=\big(\mathfrak{Op}(\tilde{\rho})f\big)(x).
\end{align*}
\end{proof}

\subsection{The symbols of the `band operators'.}
\label{SS-spband}
Recall that the non-magnetic operator $H$ has an isolated spectral island $\sigma_I$, see Hypothesis \ref{is-sp-band}. Let us list some  properties of the band operators appearing in Definition \ref{Is-sp-band-op}.  

\begin{proposition}\label{comm-band-op}~
The following facts hold true:

\noindent 1. We have that
$
\big[H,\mathfrak{p}\big]\ =\ \big[\mathfrak{H},\mathfrak{p}\big]\ =\ 0
$.

\noindent 2. Using Proposition \ref{reg-symbol} we may conclude that both operators $\mathfrak{p}$
and $\mathfrak{H}$ have associated symbols $S_{\mathfrak{p}}$ and
resp. $S_{\mathfrak{H}}$ of class $S^{-\infty}(\X)$. 

\noindent 3. Using the Bloch-Floquet transform (see subsection \ref{B-bands}) we can write the band operators as
direct integrals of some fibre operators given in terms of the Bloch eigenvalues
and eigenvectors:
\beq\label{BF-band-Ham}
\mathfrak{H}\ =\
\mathscr{U}_\Gamma^{-1}\left(\int_{\mathbb{T}_*}^\oplus\underset
{j\in J_I}{\sum}\lambda_j(\theta)\left|\phi_j(\theta)\rangle\langle
\phi_j(\theta)\right|\,d\theta\right)\mathscr{U}_\Gamma,
\eeq
\beq\label{BF-band-proj}
\mathfrak{p}\ =\
\mathscr{U}_\Gamma^{-1}\left(\int_{\mathbb{T}_*}^\oplus\underset
{j\in J_I}{\sum}\left|\phi_j(\theta)\rangle\langle
\phi_j(\theta)\right|\,d\theta\right)\mathscr{U}_\Gamma\ \equiv\ 
\mathscr{U}_\Gamma^{-1}\left(\int_{\mathbb{T}_*}^\oplus\hat{\mathfrak{p}}(\theta
)d\theta\right)\mathscr{U}_\Gamma,
\eeq
and we have the following formulae for their integral kernels and symbols:
\beq\label{KHI}
K_\mathfrak{H}(x,y)=\underset{k\in J_I}{\sum}
\left(|E_*|^{-1}\int_{\mathbb{T}_*}
\lambda_k(\theta)\phi_k(x,\theta)
\overline{\phi_k(y,\theta)}\,d\theta\right),
\eeq
\beq
K_\mathfrak{p}(x,y)=\underset{k\in
J_I}{\sum}\left(|E_*|^{-1}\int_{\mathbb{T}_*}\phi_k(x,\theta)
\overline{\phi_k(y,\theta)}\,d\theta\right),
\eeq
\beq\label{SH-0}
S_\mathfrak{H}(z,\zeta)=\underset{k\in
J_I}{\sum}\left((2\pi)^d|E_*|^{-1}\int_{\X}e^{-i<\zeta,v>}\int_{\mathbb{T}_*}
\lambda_k(\theta)\phi_k(z+v/2,\theta)
\overline{\phi_k(z-v/2,\theta)}\,d\theta\right)dv,
\eeq
\beq
S_\mathfrak{p}(z,\zeta)=\underset{k\in
J_I}{\sum}\left((2\pi)^d|E_*|^{-1}\int_{\X}e^{-i<\zeta,v>}\int_{\mathbb{T}_*}
\phi_k(z+v/2,\theta)
\overline{\phi_k(z-v/2,\theta)}\,d\theta\right)dv.
\eeq
\end{proposition}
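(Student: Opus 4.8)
The plan is to establish the three assertions in turn, drawing on the holomorphic functional calculus for $H$, on Proposition~\ref{reg-symbol}, and on the fibration of Subsection~\ref{B-bands}. For assertion~1 I would argue by the functional calculus: both $\mathfrak{p}$ and $\mathfrak{H}$ are integrals over the compact contour $\mathcal{C}$ of the map $\z\mapsto(H-\z)^{-1}$, which is norm-continuous into $\mathbb{B}(L^2(\X))$ and takes values in operators sending $L^2(\X)$ into the domain of $H$; hence $H$ may be pulled through the integrals, and since $[H,(H-\z)^{-1}]=0$ for every $\z\in\mathcal{C}$ this gives $[H,\mathfrak{p}]=[H,\mathfrak{H}]=0$. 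The residue identity $\tfrac{i}{2\pi}\oint_{\mathcal{C}}\z(\lambda-\z)^{-1}\,d\z=\lambda$ for $\lambda$ inside $\mathcal{C}$ and $=0$ for $\lambda$ outside (recall $\mathcal{C}$ encloses exactly $\sigma_I(H)$) shows $\mathfrak{H}=HE_I(H)$, so $[\mathfrak{H},\mathfrak{p}]=0$ as well.

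For assertion~2 I would verify that $\mathfrak{p}$ and $\mathfrak{H}$ satisfy Hypothesis~\ref{H-F-dec-op} and then invoke Proposition~\ref{reg-symbol}. Both operators are bounded ($\mathfrak{p}$ is an orthogonal projection and $\|\mathfrak{H}\|\le\sup_{\lambda\in\overline{I}}|\lambda|<\infty$ by the spectral theorem) and $\Gamma$-periodic, since $h\in S^m_1(\X)_\Gamma$ makes $\mathfrak{Op}(h)$, hence its resolvent and the contour integrals, commute with the translations $\tau_\gamma$. In the Bloch--Floquet representation \eqref{op-h-theta} one gets $\hat{\mathfrak{p}}(\theta)=\tfrac{i}{2\pi}\oint_{\mathcal{C}}(\hat{H}(\theta)-\z)^{-1}\,d\z=E_I(\hat{H}(\theta))$ and $\hat{\mathfrak{H}}(\theta)=\hat{H}(\theta)\hat{\mathfrak{p}}(\theta)$ (well defined and finite-rank, the range of $\hat{\mathfrak p}(\theta)$ lying in the domain of $\hat H(\theta)$); by Hypothesis~\ref{is-sp-band} the eigenvalues of $\hat{H}(\theta)$ inside $I$ are exactly $\{\lambda_k(\theta):k\in J_I\}$, so with the orthonormal Bloch vectors of the Bloch--Floquet Theorem, $\hat{\mathfrak{p}}(\theta)=\sum_{k\in J_I}|\phi_k(\theta)\rangle\langle\phi_k(\theta)|$ and $\hat{\mathfrak{H}}(\theta)=\sum_{k\in J_I}\lambda_k(\theta)|\phi_k(\theta)\rangle\langle\phi_k(\theta)|$, both of rank $N$. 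Their fiber kernels are $\hat{p}(\theta;x,y)=\sum_{k\in J_I}\phi_k(x,\theta)\overline{\phi_k(y,\theta)}$ and $\hat{h}(\theta;x,y)=\sum_{k\in J_I}\lambda_k(\theta)\phi_k(x,\theta)\overline{\phi_k(y,\theta)}$; elliptic regularity for $\hat{H}(\theta)\phi_k(\theta)=\lambda_k(\theta)\phi_k(\theta)$ gives $\phi_k(\cdot,\theta)\in C^\infty(\X)$, and the covariance $\tau_\gamma\phi_k(\theta)=\sigma_\gamma(\theta)\phi_k(\theta)$ makes $|\partial_x^a\phi_k(\cdot,\theta)|$ $\Gamma$-periodic, hence globally bounded, so these kernels lie in $BC^\infty(\X\times\X)$ uniformly in $\theta$. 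For the $C^\infty$ dependence on $\theta$ I would conjugate by $\sigma_{\pm\theta}$ onto the fixed space $L^2(\mathbb{T})$, where the family becomes $\theta\mapsto\mathfrak{Op}(\tau_\theta h)|_{\mathscr{F}_0}$, smooth in $\theta$, and use that Hypothesis~\ref{is-sp-band} keeps $\mathcal{C}$ off every fiber spectrum; differentiating the Riesz formula shows $\theta\mapsto\hat{\mathfrak{p}}(\theta),\hat{\mathfrak{H}}(\theta)$ are smooth in the relevant operator norms, and combined with elliptic a priori estimates this upgrades to $C^\infty$ dependence of the kernels on $\theta$. Hypothesis~\ref{H-F-dec-op} being verified, Proposition~\ref{reg-symbol} delivers $S_{\mathfrak{p}},S_{\mathfrak{H}}\in S^{-\infty}(\X)$.

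For assertion~3 the direct integral formulas \eqref{BF-band-Ham}--\eqref{BF-band-proj} are just the fiber identities above transported back by $\mathscr{U}_\Gamma$, using that functional calculus commutes with \eqref{op-h-theta} and that $\mathfrak{H}=HE_I(H)$. For the kernel formula \eqref{KHI} I would substitute $\hat{h}(\theta;\hat{x},\hat{y})$ into \eqref{T-int-kernel} and use the covariance $\phi_k(x,\theta)=e^{-i\langle\theta,[x]\rangle}\phi_k(\hat{x},\theta)$ together with its conjugate to see that $\phi_k(\hat{x},\theta)\overline{\phi_k(\hat{y},\theta)}\,e^{-i\langle\theta,[x]-[y]\rangle}=\phi_k(x,\theta)\overline{\phi_k(y,\theta)}$, whereupon the integral over $E_*$ becomes one over $\mathbb{T}_*$; the computation for $K_{\mathfrak{p}}$ is identical with $\lambda_k$ replaced by $1$. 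Finally \eqref{SH-0} and the analogous formula for $S_{\mathfrak{p}}$ follow by inserting these kernel formulas into \eqref{symb-FKernel} and interchanging the $v$- and $\theta$-integrations.

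The main obstacle is the smooth dependence on $\theta$ at the level of $BC^\infty(\X\times\X)$-valued maps, which is what allows Proposition~\ref{reg-symbol} to be applied with its full conclusion. The individual rank-one projections $|\phi_k(\theta)\rangle\langle\phi_k(\theta)|$, and the vectors $\phi_k(\theta)$ themselves, need not even be continuous in $\theta$ when the bands $\lambda_k$ ($k\in J_I$) cross or when the associated Bloch bundle is topologically nontrivial --- which is precisely why $\mathfrak{p}_j$ in general has no $S^{-\infty}$ symbol --- so one is forced to work with the full cluster projection, represent it by the Riesz integral, and control the fiber resolvents not merely in $\mathbb{B}(L^2(\mathbb{T}))$ but in the Sobolev--operator scale $\mathbb{B}(\mathscr{H}^s(\mathbb{T}),\mathscr{H}^{s'}(\mathbb{T}))$, uniformly over the compact torus $\mathbb{T}_*$, so as to dominate all the $x$-, $y$- and $\theta$-derivatives of the kernels.
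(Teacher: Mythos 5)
Your argument is correct and follows exactly the route the paper leaves implicit: the proposition is stated there without a separate proof, being an assembly of the Riesz functional calculus for $H$, the Bloch--Floquet fibration \eqref{op-h-theta}, and the kernel/symbol formulas \eqref{T-int-kernel}, \eqref{symb-FKernel} together with Proposition \ref{reg-symbol}, which is precisely what you spell out (fiberwise Riesz projection, rank-$N$ cluster kernels, quasi-periodicity to pass from $E\times E$ to $\X\times\X$, then insertion into \eqref{symb-FKernel}). The only point to phrase carefully is the $\theta$-regularity: what the integration by parts in \eqref{kernel-1} actually needs (and what your conjugation/resolvent argument delivers) is smoothness of $(\theta,\hat x,\hat y)\mapsto\hat T(\theta;\hat x,\hat y)$ with uniform bounds for $\hat x,\hat y$ in the compact cell $E$, since literal $C^\infty$ dependence in the $BC^\infty(\X\times\X)$ norm would be spoiled by the quasi-periodicity twist $e^{-i\langle\theta,x-y\rangle}$ over all of $\X\times\X$.
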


\vspace{0.5cm}

We can define for any $j\in J_I$ the Bloch orthogonal projections
$\hat{\mathfrak{p}}_j(\theta):=\left|\phi_j(\theta)\rangle\langle
\phi_j(\theta)\right|$ and
$\mathfrak{p}_j:=\mathscr{U}_\Gamma^{-1}\left(\int_{\mathbb{T}_*}
^\oplus\hat {\mathfrak{p}}_j(\theta)d\theta\right)\mathscr{U}_\Gamma$.
Let us notice that in general these one dimensional projections have much less
regularity in the variable $\theta\in\mathbb{T}_*$ than the finite sums
above, and thus these objects are
not very useful without some stronger hypothesis.

In the case of an isolated spectral band we have that 
 $\mathfrak{p}\ =\ \underset{j\in
J_I}{\sum}\mathfrak{p}_j$ and also
\begin{equation}\label{II.6.8}
HE_I(H)\,=\,\underset{j\in
J_I}{\sum}\mathfrak{Op}(\lambda_j)\mathfrak{p}_jE_I(H),
\end{equation}
\begin{equation}\label{II.6.9}
e^{-itH}E_I(H)\,=\,\underset{j\in
J_I}{\sum}\mathfrak{Op}\big(e^{-it\lambda_j}\big)\mathfrak{p}_jE_I(H).
\end{equation}

\subsection{The composite Wannier basis.}
\label{band-smooth-basis}

\begin{hypothesis}\label{H.II.8.1}
Under our Hypothesis \ref{is-sp-band} of an isolated spectral band, we suppose
also that the subspace
$\mathcal{F}^I_\theta:=\hat{\mathfrak{p}}(\theta)\mathscr{F}_\theta$ (see Definition \ref{Is-sp-band-op}) has an orthonormal 
basis 
$\psi_j(\theta)\in\mathscr{F}_\theta\cap BC^\infty(\X)$ indexed by $j\in
J_I$ and such that the map
$\mathbb{T}_*\ni\theta\mapsto\psi_j(\theta)\in BC^\infty(\X)$ is smooth for any
$j\in J_I$.
We do not suppose that the vectors $\{\psi_j(\theta)\}$ of the
basis are eigenvectors of $\hat{H}(\theta)$; but we notice that they are finite linear combinations of such eigenvectors and thus smooth functions of $x\in\X$.
\end{hypothesis}

\vspace{0.5cm}

We have that $\psi_j(\theta)\in\hat{\mathfrak{p}}(\theta)\mathscr{F}_\theta\subset \mathcal{D}\big(\hat{H}(\theta)\big)$. This smooth basis allows us to
construct a so-called {\it composite Wannier basis} for the range of $\mathfrak{p}=E_I(H)$. 
\begin{proposition}\label{iulie30}
Define the functions
\beq\label{Wannier-def}
w_j:=\mathscr{U}_\Gamma^{-1}\psi_j\in\mathscr{S}(\X),\ \forall
j\in J_I;\qquad
w_j(x)=|E_*|^{-1}\int_{\mathbb{T}_*} \psi_j(\hat{x},\theta)e^{-i\langle \theta,[x]\rangle}d\theta,
\eeq
that are orthonormal in $L^2(\X)$.
Then the translated functions
$\mathcal{W}_{\gamma,j}:=\tau_{-\gamma}w_j$ with $(\gamma,j)\in\Gamma\times
J_I$ form an
orthonormal basis for $\mathfrak{p}\mathcal{H}$. Moreover, $<x>^m w_j(x)$ lies in $L^\infty(\X)$ for all $m\geq 0$.
\end{proposition}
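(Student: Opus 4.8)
\textbf{Proof proposal for Proposition \ref{iulie30}.}

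The plan is to verify the three claims in turn, using the explicit formula \eqref{Wannier-def} together with the structural properties of the Bloch-Floquet transform $\mathscr{U}_\Gamma$ recalled in Subsection \ref{B-bands}. First I would establish that $w_j\in\mathscr{S}(\X)$. Since $\mathbb{T}_*\ni\theta\mapsto\psi_j(\theta)\in BC^\infty(\X)$ is smooth by Hypothesis \ref{H.II.8.1}, the map $\theta\mapsto\psi_j(\hat{x},\theta)$ is smooth jointly, and repeated integration by parts in $\theta$ in the formula $w_j(x)=|E_*|^{-1}\int_{\mathbb{T}_*}\psi_j(\hat{x},\theta)e^{-i\langle\theta,[x]\rangle}d\theta$ produces, for every $N$, a bound $|w_j(x)|\leq C_N\langle[x]\rangle^{-N}$; since $\hat{x}$ ranges over the bounded cell $E$ and $|x-[x]|\leq\operatorname{diam}E$, this is the same as $|w_j(x)|\leq C_N'\langle x\rangle^{-N}$, which in particular gives the final assertion that $\langle x\rangle^m w_j\in L^\infty(\X)$ for all $m\geq 0$. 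Smoothness of $w_j$ and the analogous decay of its derivatives follow because $\psi_j(\theta)\in BC^\infty(\X)$ uniformly in $\theta$ (so $\partial^a_x w_j$ is given by the same type of oscillatory integral with $\partial^a_x\psi_j$ in place of $\psi_j$), hence $w_j\in\mathscr{S}(\X)$.

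Next I would check orthonormality of the translates. The key point is that $\mathscr{U}_\Gamma$ is unitary from $L^2(\X)$ onto $\mathscr{F}$ and intertwines translations with multiplication by characters: by Remark \ref{tau-U-sigma}, $\mathscr{U}_\Gamma\tau_{-\gamma}=\sigma_{-\gamma}\mathscr{U}_\Gamma$, so $\mathscr{U}_\Gamma\mathcal{W}_{\gamma,j}=\mathscr{U}_\Gamma\tau_{-\gamma}w_j=\sigma_{-\gamma}\psi_j$ (a slight abuse: $\sigma_{-\gamma}(\theta)=e^{i\langle\theta,\gamma\rangle}$ acts as a scalar on each fibre $\mathscr{F}_\theta$). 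Therefore
\[
\langle\mathcal{W}_{\alpha,j},\mathcal{W}_{\beta,k}\rangle_{L^2(\X)}
=\langle\sigma_{-\alpha}\psi_j,\sigma_{-\beta}\psi_k\rangle_{\mathscr{F}}
=|E_*|^{-1}\int_{\mathbb{T}_*}e^{i\langle\theta,\alpha-\beta\rangle}\langle\psi_j(\theta),\psi_k(\theta)\rangle_{\mathscr{F}_\theta}\,d\theta
=|E_*|^{-1}\int_{\mathbb{T}_*}e^{i\langle\theta,\alpha-\beta\rangle}\delta_{jk}\,d\theta,
\]
using that $\{\psi_j(\theta)\}_{j\in J_I}$ is orthonormal in $\mathscr{F}_\theta$ for each $\theta$. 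The last integral equals $\delta_{jk}\delta_{\alpha\beta}$ (with the normalization $|E_*|^{-1}\int_{\mathbb{T}_*}e^{i\langle\theta,\gamma\rangle}d\theta=\delta_{\gamma,0}$ for $\gamma\in\Gamma$, which follows from the duality between $\Gamma$ and $\mathbb{T}_*$, or equivalently from the Poisson formula \eqref{II.4.7}). This proves that $\{\mathcal{W}_{\gamma,j}\}_{(\gamma,j)\in\Gamma\times J_I}$ is an orthonormal family; in particular $\{w_j\}_{j\in J_I}$ is orthonormal, as claimed.

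Finally, for the basis property of $\mathfrak{p}\mathcal{H}$, I would transport everything through $\mathscr{U}_\Gamma$ to $\mathscr{F}$. From \eqref{BF-band-proj}, $\mathscr{U}_\Gamma\mathfrak{p}\,\mathscr{U}_\Gamma^{-1}=\int^\oplus_{\mathbb{T}_*}\hat{\mathfrak{p}}(\theta)\,d\theta$, so $\mathscr{U}_\Gamma(\mathfrak{p}\mathcal{H})=\int^\oplus_{\mathbb{T}_*}\mathcal{F}^I_\theta\,d\theta$ where $\mathcal{F}^I_\theta=\hat{\mathfrak{p}}(\theta)\mathscr{F}_\theta$. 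Since $\{\psi_j(\theta)\}_{j\in J_I}$ is an orthonormal basis of $\mathcal{F}^I_\theta$ by Hypothesis \ref{H.II.8.1}, and $\{\sigma_{-\gamma}\}_{\gamma\in\Gamma}$ together with $\mathscr{V}_\theta:\mathscr{F}_\theta\xrightarrow{\simeq}L^2(\mathbb{T})$ realizes the standard Fourier duality identifying $\mathscr{F}\cong L^2(\mathbb{T}_*)\otimes L^2(\mathbb{T})$, the family $\{\sigma_{-\gamma}\psi_j\}_{(\gamma,j)\in\Gamma\times J_I}$ is an orthonormal basis of $\int^\oplus_{\mathbb{T}_*}\mathcal{F}^I_\theta\,d\theta$: expanding any $\hat{F}$ in this direct integral fibrewise as $\hat{F}(\theta)=\sum_{j\in J_I}c_j(\theta)\psi_j(\theta)$ with $c_j\in L^2(\mathbb{T}_*)$, and then Fourier-expanding each $c_j$ in the characters $\{e^{i\langle\theta,\gamma\rangle}\}_{\gamma\in\Gamma}$, reconstructs $\hat{F}$ from the coefficients $\langle\sigma_{-\gamma}\psi_j,\hat{F}\rangle_{\mathscr{F}}$. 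Pulling back by the unitary $\mathscr{U}_\Gamma^{-1}$ and using $\mathscr{U}_\Gamma^{-1}(\sigma_{-\gamma}\psi_j)=\tau_{-\gamma}w_j=\mathcal{W}_{\gamma,j}$ gives that $\{\mathcal{W}_{\gamma,j}\}$ is an orthonormal basis of $\mathfrak{p}\mathcal{H}$. The only mild subtlety — and the step I would be most careful about — is the measurability/completeness bookkeeping in the direct integral (that fibrewise orthonormal bases depending smoothly on $\theta$ assemble to a genuine orthonormal basis of the direct integral), but this is exactly the setup of \cite{Dix} already invoked in Subsection \ref{B-bands}, so it causes no real difficulty; the decay estimate in the first paragraph is the only genuinely computational part.
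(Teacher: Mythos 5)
Your proof is correct, and for the main claims (orthonormality and the basis property of $\{\mathcal{W}_{\gamma,j}\}$ in $\mathfrak{p}\mathcal{H}$) it follows exactly the paper's route: transport through the unitary $\mathscr{U}_\Gamma$, note $\mathscr{U}_\Gamma\mathcal{W}_{\gamma,j}=\sigma_{-\gamma}\psi_j$, and use that $\{\psi_j(\theta)\}_{j\in J_I}$ is an orthonormal basis of $\mathcal{F}^I_\theta$ fibrewise; the paper states this in one line and you simply fill in the direct-integral bookkeeping. Where you genuinely diverge is the localization estimate. The paper first observes that smoothness in $\theta$ gives $\langle x\rangle^m w_j\in L^2(\X)$, and then bootstraps to $\langle x\rangle^m w_j\in L^\infty(\X)$ using $w_j=\mathfrak{p}w_j$ together with the off-diagonal decay $|K_\mathfrak{p}(x,y)|\leq C_m\langle x-y\rangle^{-m}$ of the band projection kernel. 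You instead get the pointwise bound $|w_j(x)|\leq C_N\langle x\rangle^{-N}$ directly by repeated integration by parts in $\theta$ over the compact torus (legitimate, since Hypothesis \ref{H.II.8.1} gives uniform bounds on all $\theta$-derivatives of $\psi_j(\hat{x},\theta)$ in $\hat x$, and the integrand is $\Gamma_*$-periodic so there are no boundary terms). Your route is more self-contained — it does not invoke the kernel estimate for $\mathfrak{p}$ — and it also delivers the membership $w_j\in\mathscr{S}(\X)$ asserted in \eqref{Wannier-def}, which the paper's proof does not verify explicitly. One small point worth making explicit in your first paragraph: the decomposition $x\mapsto([x],\hat x)$ is discontinuous, so smoothness of $w_j$ in $x$ is best seen by noting that the integrand $\psi_j(\hat{x},\theta)e^{-i\langle\theta,[x]\rangle}$ equals $\psi_j(x,\theta)$ by the quasi-periodicity defining $\mathscr{F}_\theta$, after which your differentiation under the integral and the same integration by parts apply to $\partial_x^a\psi_j$; this is exactly what your parenthetical remark is implicitly using.
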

\begin{proof}
We note that
$\big(\mathscr{U}_\Gamma\mathcal{W}_{\gamma,j}\big)(\hat{x},\theta)=e^{i\langle \theta,
\gamma\rangle}\psi_j(\hat{x},\theta)$ and the proposition follows from the fact that
$\{\psi_j(\theta)\}_{j\in J_I}$ is an orthonormal system in
$\mathscr{F}_\theta$ for each $\theta\in\mathbb{T}_*$. 

We assumed that each $\psi_j(\hat{x},\theta)$ is smooth in $\theta$ which implies that $<x>^m w_j(x)$ lies in $L^2(\X)$ for all $m\geq 0$.  Moreover, because the integral kernel of $\mathfrak{p}$ obeys the estimate $|K_\mathfrak{p}(x,y)|\leq C_m <x-y>^{-m}$ valid for every $m\geq 0$, it follows that $<x>^m w_j(x)$ lies in $L^\infty(\X)$ for all $m\geq 0$.
\end{proof}

\vspace{0.5cm}

\begin{definition}\label{D-M}
\begin{enumerate}
\item
We define {\it the effective Hamiltonian} associated to $\sigma_I$ to be the $N\times N$ matrix valued map 
$$
\mathbb{T}_*\ni\theta\mapsto\mu_{jk}(\theta):=\langle\psi_j(\theta),
\hat{H}(\theta)\psi_k(\theta)\rangle_{\mathscr{F}_\theta}\in\mathbb{C},\quad (j,k)\in J_I\times J_I.
$$

\item
For any $(j,k)\in J_I\times J_I$ we define the rank-one
operators:
$
\pi_{jk}(\theta):=\left|\psi_k(\theta)\rangle\langle\psi_j(\theta)\right|
$
and their associated $\Gamma$-invariant operators in $L^2(\X)$ given by 
$
\Pi_{jk}:=\hat{\mathscr{U}}_\Gamma^{-1}\pi_{jk}\hat{\mathscr{U}}_\Gamma$.
We shall denote by $P_{jk}$ the integral kernel of the operator $\Pi_{jk}$.
\end{enumerate}
\end{definition}

\vspace{0.5cm}

\begin{remark}\label{R-Pjk}
 We have the following properties:
 \begin{enumerate}
  \item $\{\pi_{jk}\}_{(j,k)\in J_I^2}$ is a
family of bounded  operators on $\mathscr{F}$ that satisfy the relations:
$$
\pi_{jk}^*=\pi_{kj}.\quad\forall(j,k)\in J_I^2;\qquad
\pi_{jk}\pi_{pq}=\delta_{jq}\pi_{pk}.\quad\forall(j,k,p,q)\in
J_I^4.
$$
\item Similar properties are also valid for the family
$\{\Pi_{jk}\}_{(j,k)\in J_I^2}$. Moreover, for every pair $(j,k)\in J_I^2$, there exists a symbol $p_{jk}\in S^{-\infty}(\X)$ such that $\Pi_{jk}=\mathfrak{Op}(p_{jk})$
and
we
have the explicit formula
$$
p_{jk}(z,\zeta)=(2\pi)^d|E_*|^{-1}\int_{\X}e^{-i<\zeta,v>}\int_{\mathbb{T}_*}
\psi_k\big(z+(v/2),\theta\big)\overline{
\psi_j\big(z-(v/2),\theta\big)}d\theta\,dv,\quad\forall(z, \zeta)\in\Xi.
$$
\item $\hat{\mathfrak{p}}=\underset{j\in J_I}{\sum}\pi_{jj}$ and
$\hat{\mathfrak{p}}\hat{H}\hat{\mathfrak{p}}\,=\,\int_{\mathbb{T}_*}
^\oplus\underset{(j,k)\in
J_I^2}{\sum}\mu_{jk}(\theta)\pi_{kj}(\theta)\,d\theta$.
 \end{enumerate}
\end{remark}

We also list without proof a few properties of the band Hamiltonian seen as a matrix in the Wannier basis:

\begin{proposition}\label{Band-operators}
 The following equalities are true (here 
$\widetilde{\mu}_{jk}$ is the periodic extension of $\mu_{jk}$):
\beq\label{PO-matrix}
\mathfrak{p}\,=\,N^{-1}\underset{(j,k)\in
J_I^2}{\sum}\mathfrak{Op}(p_{jk})\mathfrak{Op}(p_{kj});\qquad\mathfrak{H}=\mathfrak{
p } H\mathfrak {
p}\,=\,N^{-1}\underset{(j,k)\in
J_I^2 }{\sum}\mathfrak{Op}(\widetilde{\mu}_{jk})\mathfrak{Op}(p_{kj});
\eeq
$$
\left\langle\mathcal{W}_{\alpha,j}\,,\,\Pi_{lm}\mathcal{W}_{\beta,k}
\right\rangle_
{ L^2(\X) }\ =\
\delta_{\alpha\beta}\delta_{jm}\delta_{kl};
$$
$$
\left\langle\mathcal{W}_{\alpha,j}\,,\,H\mathcal{W}_{\beta,k}\right\rangle_{
L^2(\X)}=|E_*|^{-1}\int_{\mathbb{T}_*}\mu_{jk}(\theta)e^{i\langle\theta, \alpha-\beta\rangle }d\theta=:\widehat{\mu_{jk}}_{\alpha-\beta};
$$
$$
\left\langle\mathcal{W}_{\alpha,j}\,,\,e^{-itH}\mathcal{W}_{\beta,k}
\right\rangle_{ L^2(\X)}=|E_*|^{-1}\int_{\mathbb{T}_*}[e^{-it\mu(\theta)}]_{jk}e^{i\langle\theta, \alpha-\beta\rangle }d\theta.
$$
\end{proposition}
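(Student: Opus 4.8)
The plan is to transport every object appearing in the statement to the Bloch--Floquet representation of Subsection~\ref{Op-decomp}, where all the operators involved become decomposable and each identity collapses to an elementary fibrewise computation; no input is needed beyond the structural results already available. Recall that conjugation by $\mathscr{U}_\Gamma$ is a $*$-isomorphism of the algebra of bounded $\Gamma$-invariant operators on $L^2(\X)$ onto the algebra of decomposable operators on $\mathscr{F}$: it preserves products and adjoints and, by the spectral theorem, sends a bounded Borel function of $H$ to the same function applied fibrewise to $\hat{H}(\theta)$. Under this dictionary one has $H\mapsto\int_{\mathbb{T}_*}^\oplus\hat{H}(\theta)\,d\theta$; by Definition~\ref{D-M} and Remark~\ref{R-Pjk}, $\mathfrak{Op}(p_{jk})=\Pi_{jk}\mapsto\int_{\mathbb{T}_*}^\oplus\pi_{jk}(\theta)\,d\theta$ with $\pi_{jk}(\theta)=|\psi_k(\theta)\rangle\langle\psi_j(\theta)|$; by Proposition~\ref{rho}, $\mathfrak{Op}(\widetilde{\mu}_{jk})\mapsto\int_{\mathbb{T}_*}^\oplus\mu_{jk}(\theta)\,\mathrm{Id}_{\mathscr{F}_\theta}\,d\theta$ (the map $\theta\mapsto\mu_{jk}(\theta)=\langle\psi_j(\theta),\hat{H}(\theta)\psi_k(\theta)\rangle_{\mathscr{F}_\theta}$ being smooth, hence continuous, as Proposition~\ref{rho} requires); and by Proposition~\ref{iulie30}, $\mathscr{U}_\Gamma\mathcal{W}_{\gamma,j}$ is the measurable section $\theta\mapsto e^{i\langle\theta,\gamma\rangle}\psi_j(\theta)\in\mathscr{F}_\theta$. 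One also uses that $\hat{\mathfrak{p}}(\theta)\mathscr{F}_\theta$ is the $N$-dimensional space spanned by the orthonormal system $\{\psi_i(\theta)\}_{i\in J_I}$ and is invariant under $\hat{H}(\theta)$, since $[\hat{H}(\theta),\hat{\mathfrak{p}}(\theta)]=0$.

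With this dictionary the operator identities \eqref{PO-matrix} are immediate. For the band projection: the multiplication rule $\pi_{jk}\pi_{pq}=\delta_{jq}\pi_{pk}$ of Remark~\ref{R-Pjk} gives, in each fibre, $\pi_{jk}(\theta)\pi_{kj}(\theta)=\pi_{kk}(\theta)$, which is independent of $j$, so $\sum_{(j,k)\in J_I^2}\pi_{jk}(\theta)\pi_{kj}(\theta)=N\sum_{k\in J_I}\pi_{kk}(\theta)=N\,\hat{\mathfrak{p}}(\theta)$; conjugating back yields $N^{-1}\sum_{(j,k)}\mathfrak{Op}(p_{jk})\mathfrak{Op}(p_{kj})=\mathfrak{p}$. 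For the band Hamiltonian one first records that $\mathfrak{H}=\mathfrak{p}H\mathfrak{p}$: by \eqref{Prop-bandOp}, the identity $\z(H-\z)^{-1}=-\bb1+H(H-\z)^{-1}$ and $\int_{\mathcal{C}}d\z=0$ give $\mathfrak{H}=HE_I(H)$, and since $[H,\mathfrak{p}]=0$ by Proposition~\ref{comm-band-op} this equals $E_I(H)HE_I(H)=\mathfrak{p}H\mathfrak{p}$. Fibrewise, $\hat{\mathfrak{p}}(\theta)\hat{H}(\theta)\hat{\mathfrak{p}}(\theta)=\sum_{(j,k)\in J_I^2}\mu_{jk}(\theta)\pi_{kj}(\theta)$ is exactly Remark~\ref{R-Pjk}, and each summand $\mu_{jk}(\theta)\pi_{kj}(\theta)$ is the fibre of $\mathfrak{Op}(\widetilde{\mu}_{jk})\mathfrak{Op}(p_{kj})$; conjugating back and collecting the sums over $J_I^2$ produces the displayed expression for $\mathfrak{H}$.

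The three matrix-element identities follow by the same fibrewise pairing. Using unitarity of $\mathscr{U}_\Gamma$ and the section $\mathscr{U}_\Gamma\mathcal{W}_{\gamma,j}=e^{i\langle\cdot,\gamma\rangle}\psi_j(\cdot)$, each left-hand side becomes $|E_*|^{-1}\int_{\mathbb{T}_*}e^{i\langle\theta,\alpha-\beta\rangle}\langle\psi_j(\theta),\mathcal{T}(\theta)\psi_k(\theta)\rangle_{\mathscr{F}_\theta}\,d\theta$, where $\mathcal{T}(\theta)$ is $\pi_{lm}(\theta)$, $\hat{H}(\theta)$, or $e^{-it\hat{H}(\theta)}$ respectively; here one uses that $\psi_k(\theta)\in\hat{\mathfrak{p}}(\theta)\mathscr{F}_\theta\subset\mathcal{D}(\hat{H}(\theta))$ by Hypothesis~\ref{H.II.8.1}, that $\mathcal{W}_{\beta,k}\in\mathfrak{p}\mathcal{H}\subset\mathcal{D}(H)$, and, in the third case, that bounded functional calculus commutes with the direct integral. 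If $\mathcal{T}=\pi_{lm}$ then $\pi_{lm}(\theta)\psi_k(\theta)=\delta_{lk}\psi_m(\theta)$ and orthonormality give $\langle\psi_j(\theta),\pi_{lm}(\theta)\psi_k(\theta)\rangle_{\mathscr{F}_\theta}=\delta_{lk}\delta_{jm}$, so the integral reduces to $\delta_{lk}\delta_{jm}\,|E_*|^{-1}\int_{\mathbb{T}_*}e^{i\langle\theta,\alpha-\beta\rangle}d\theta=\delta_{\alpha\beta}\delta_{lk}\delta_{jm}$, the last step by orthogonality of characters on $\mathbb{T}_*$ (equivalently \eqref{II.4.7}). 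If $\mathcal{T}=\hat{H}(\theta)$ then $\langle\psi_j(\theta),\hat{H}(\theta)\psi_k(\theta)\rangle_{\mathscr{F}_\theta}=\mu_{jk}(\theta)$ by Definition~\ref{D-M}, and the integral is by definition $\widehat{\mu_{jk}}_{\alpha-\beta}$. Finally, if $\mathcal{T}=e^{-it\hat{H}(\theta)}$ one observes that on the finite-dimensional invariant subspace $\hat{\mathfrak{p}}(\theta)\mathscr{F}_\theta$ the operator $\hat{H}(\theta)$ is represented, in the orthonormal basis $\{\psi_i(\theta)\}_{i\in J_I}$, by the matrix $\mu(\theta)=(\mu_{jk}(\theta))$, hence $e^{-it\hat{H}(\theta)}$ is there represented by $e^{-it\mu(\theta)}$ and $\langle\psi_j(\theta),e^{-it\hat{H}(\theta)}\psi_k(\theta)\rangle_{\mathscr{F}_\theta}=[e^{-it\mu(\theta)}]_{jk}$, which gives the last formula.

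I expect no serious obstacle: the whole argument is transport to the fibres followed by finite-dimensional linear algebra. The points demanding a little care are (i) that $\mathscr{U}_\Gamma$-conjugation is multiplicative and intertwines functional calculi, which rests on the decomposability of $\Gamma$-invariant operators from Subsection~\ref{Op-decomp} applied to $\Pi_{jk}$ and to $\mathfrak{Op}(\widetilde{\mu}_{jk})$ even though their Weyl symbols are not themselves $\Gamma$-periodic; (ii) the identity $\mathfrak{H}=\mathfrak{p}H\mathfrak{p}$, i.e.\ the elementary Riesz manipulation together with Proposition~\ref{comm-band-op}; and (iii) the bookkeeping of the normalizations ($|E_*|^{-1}$ and $N^{-1}$) and of the sign in the exponent, which are all pinned down once the conventions for $\mathscr{U}_\Gamma$ and for the direct-integral pairing on $\mathscr{F}$ are fixed.
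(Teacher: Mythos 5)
The paper states this proposition without proof, so there is no official argument to compare against; your Bloch--Floquet, fibrewise strategy is clearly the intended one, and for the first identity in \eqref{PO-matrix} and for the three matrix-element formulas it works: the dictionary $\Pi_{jk}\leftrightarrow\pi_{jk}(\theta)$, $\mathfrak{Op}(\widetilde{\mu}_{jk})\leftrightarrow$ multiplication by $\mu_{jk}(\theta)$ (Proposition \ref{rho}), $\mathscr{U}_\Gamma\mathcal{W}_{\gamma,j}=e^{i\langle\cdot,\gamma\rangle}\psi_j(\cdot)$, together with the finite-dimensional linear algebra on the invariant subspace $\hat{\mathfrak{p}}(\theta)\mathscr{F}_\theta$, is all that is needed, and your computations there ($\pi_{jk}\pi_{kj}=\pi_{kk}$, $\pi_{lm}(\theta)\psi_k(\theta)=\delta_{lk}\psi_m(\theta)$, $\hat{H}(\theta)$ represented by the matrix $\mu(\theta)$ in the basis $\{\psi_i(\theta)\}$, hence $e^{-it\hat{H}(\theta)}$ by $e^{-it\mu(\theta)}$) are correct.

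The genuine problem is your treatment of the second identity in \eqref{PO-matrix}. Fibrewise you correctly obtain $\hat{\mathfrak{p}}(\theta)\hat{H}(\theta)\hat{\mathfrak{p}}(\theta)=\sum_{(j,k)\in J_I^2}\mu_{jk}(\theta)\pi_{kj}(\theta)$ (Remark \ref{R-Pjk}, item 3), and each summand is indeed the fibre of $\mathfrak{Op}(\widetilde{\mu}_{jk})\mathfrak{Op}(p_{kj})$; but conjugating back this proves $\mathfrak{H}=\sum_{(j,k)\in J_I^2}\mathfrak{Op}(\widetilde{\mu}_{jk})\mathfrak{Op}(p_{kj})$ \emph{without} the prefactor $N^{-1}$, so your claim that the computation ``produces the displayed expression'' is not literally true. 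In the projection identity the sum over $j$ is redundant ($\pi_{jk}\pi_{kj}=\pi_{kk}$ is independent of $j$), which is exactly what generates the compensating factor $N$; no such redundancy occurs in the Hamiltonian identity, and for $N>1$ the displayed formula with $N^{-1}$ is incompatible with Remark \ref{R-Pjk}(3). You should either prove the identity without $N^{-1}$ and explicitly flag the prefactor in the statement as an inconsistency (almost certainly a typo), or explain where an extra factor $N$ would arise -- silently asserting agreement hides the discrepancy. A much smaller point: with $\mathscr{U}_\Gamma\mathcal{W}_{\gamma,j}=e^{i\langle\theta,\gamma\rangle}\psi_j(\theta)$ and an inner product antilinear in the first slot, the pairing yields the phase $e^{i\langle\theta,\beta-\alpha\rangle}$ rather than $e^{i\langle\theta,\alpha-\beta\rangle}$; since $\mu_{jk}(\theta)$ is not real in general this sign should actually be pinned down against the paper's conventions rather than deferred to ``bookkeeping''.
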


\section{Adding a weak magnetic field}
\subsection{Brief recall of the magnetic Weyl calculus.}
\label{M-PsiDO}
We shall very briefly recall some of the main definitions and results
concerning a completely gauge covariant version of the `minimal coupling'
procedure developed in \cite{MP1,IMP1,IMP2,Ne05}.

Given a magnetic field $B$ and a choice of a vector potential $A$ for it and considering the
fundamental set of dynamical observables given by the '{\it minimal coupling}' hypothesis:
\beq\label{position}
\{Q_1,\ldots,Q_d\},\qquad\big(Q_jf\big)(x):=x_jf(x),\ \forall f\in\mathscr{S}(\X)
\eeq
\beq\label{magn-momenta}
\{\Pi^A_1,\ldots,\Pi^A_d\},\qquad\big(\Pi^A_jf\big)(x):=\big(-i\partial_jf\big)(x)-A_j(x)f(x),\ \forall f\in\mathscr{S}(\X),
\eeq
in \cite{MP1} we considered the {\it twisted Weyl system} defined by the unitary groups 
associated to the self-adjoint extensions of the above operators.
This procedure allows us to define a `{\it twisted pseudodifferential calculus}' (introduced in \cite{MP1,MPR,MPR1} 
and developped in \cite{IMP1,IMP2,ML}) that associates to any 
H\"{o}rmander type symbol $F\in S^m_\rho(\X)$ the following operator in $L^2(\X)$ (for all $u\in\mathscr{S}(\X)$ and $x\in\X$):
\beq\label{OpA}
\big(\mathfrak{Op}^A(F)u\big)(x)\ :=\
(2\pi)^{-d}\int_\X\int_{\X^*}e^{i\langle \xi,x-y\rangle}e^{-i\int_{[x,y]}A}F\big(\frac{x+y}
{2},\xi\big)u(y)\,d\xi\,dy.
\eeq
 

Two important results in \cite{MP1} state that two vector potentials that are
gauge equivalent define two unitarily equivalent functional calculi and that
the application $\mathfrak{Op}^A$ defined above extends to a linear and
topological isomorphism between $\mathscr{S}^\prime(\Xi)$ and
$\mathbb{B}\big(\mathscr{S}(\X);\mathscr{S}^\prime(\X)\big)$. At this point we
can make the connection with the `{\it twisted integral kernels}' formalism in
\cite{Ne05}, where for any integral kernel $K\in\mathscr{S}^\prime(\X\times\X)$
one associates a twisted integral kernel (see \eqref{Prop-bandOp})
\beq\label{twist-int-kernel}
K^A(x,y)\ :=\ \widetilde{\Lambda}^A(x,y)K(x,y).
\eeq
For any integral
kernel $K\in\mathscr{S}^\prime(\X\times\X)$ let us denote by
$\mathcal{I}\text{\sf nt}K$ its corresponding linear operator on
$\mathscr{S}(\X)$ (i.e. $\big(\mathcal{I}\text{\sf
nt}Ku\big)(x)=\int_\X K(x,y)u(y)\,dy$). Let us recall the usual Weyl calculus,
that we shall denote by $\mathfrak{Op}$ and the linear bijection
$\mathfrak{W}:\mathscr{S}^\prime(\Xi)\rightarrow\mathscr{S}^\prime(\X\times\X)$
associated to it by $\mathfrak{Op}(F)=\mathcal{I}\text{\sf nt}(\mathfrak{W}F)$: 
\beq\label{iulie3}
\big(\mathfrak{W}F\big)(x,y)\ :=\
(2\pi)^{-d}\int_{\X^*}e^{i<\xi,x-y>}F\big(\frac{x+y}{2},\xi\big)d\xi.
\eeq
Then we have the equality
\beq\label{magn-quant}
\mathfrak{Op}^A(F)\ =\ \mathcal{I}\text{\sf
nt}(\widetilde{\Lambda}^A\mathfrak{W}F).
\eeq

\begin{remark}\label{Omega}
We notice that under our Hypothesis \ref{Hyp-B} and with the notation \eqref{Prop-bandOp}:
$$
\widetilde{\Lambda}^A(x,z)\widetilde{\Lambda}^A(z,y)\widetilde{\Lambda}
^A(y,x)\,=\,\exp\left\{-i\int_{<x,y,z>}\hspace*{-10pt}
B\right\}\,=:\,
\Omega^B(x,y,z);
$$
the above integral is taken on the positively oriented triangle $<x,y,z>$. We have the estimation:
$$
\left|\Omega^B(x,y,z)-1\right|\leq C\|B\|_\infty
|(y-x)\wedge(z-x)|.
$$
\end{remark}

\vspace{0.5cm}

Using Theorem 4.1 in \cite{IMP1}, under Hypothesis \ref{Hyp-B}, for a symbol
$h\in S^m_1(\X)_\Gamma$ verifying Hypothesis \ref{Hyp-h}, the operator
$\mathfrak{Op}^A(h)$ for any $A$ with components of class $C^\infty_{\text{\sf
pol}}(\X)$ has a closure $H^A$ in $L^2(\X)$ that is self-adjoint on a domain
$\mathcal{H}^m_A$ (a `magnetic Sobolev space') and it is lower semibounded. Thus
we can define its resolvent $(H^A-\z)^{-1}$ for any $\z\notin\sigma(H^A)$ and
Theorem 6.5 in \cite{IMP2} states that it exists a well-defined symbol
$r^B_\z(h)\in S^{-m}_1(\X)$ such that 
$$
(H^A-\z)^{-1}\ =\ \mathfrak{Op}^A(r^B_\z(h)).
$$

Let us also recall from \cite{MP1} that the operator composition of the
operators $\mathfrak{Op}^A(F)$ and $\mathfrak{Op}^A(G)$ induces a {\it twisted
Moyal product} depending only on the magnetic field $B$:
\begin{align}\label{II.1.5}
\big(F\sharp^BG\big)(X)&:=\pi^{-2d}\int_\Xi dY\int_\Xi
dZ\,e^{-2i\sigma^\circ(Y,Z)}e^{-i\int_{T(x,y,z)}B}F(X-Y)G(X-Z)\\
&=\pi^{-2d}\int_\Xi dY\int_\Xi
dZ\,e^{-2i\sigma^\circ(X-Y,X-Z)}e^{-i\int_{\widetilde{T}(x,y,z)}B}F(Y)G(Z)\nonumber 
\end{align}
where we use the notation of the type $X=(x,\xi), Y=(y,\eta)$ etc.
and we have denoted by $T(x,y,z)$ the triangle in $\X$ of vertices $x-y-z,$,
$x+y-z$, $x-y+z$ and by $\widetilde{T}(x,y,z)$ the triangle in $\X$ of vertices
$x-y+z$, $y-z+x,z-x+y$. For any symbol $F$ we denote by $F^-_B$ its inverse
with respect to the magnetic Moyal product, if it exists. 

We shall often use the symbol $<\xi>^m$ for $m>0$ and we shall also need to
consider its magnetic Moyal inverse. For that we use the arguments in Section
2.1 of \cite{MPR1} and conclude that for $a>0$ large enough the symbol
$\mathfrak{s}_m(x,\xi):=<\xi>^m+a$, with $m>0$ has an inverse for the magnetic
Moyal product and we shall use the shorthand notation 
$\mathfrak{s}^B_{-m}$ instead of $\big(\mathfrak{s}_m\big)^-_B$ for this symbol with some fixed large enough $a_m>0$. 

For the completeness of our arguments we give the proof of a simplified version of Proposition
8.1 in \cite{IMP2} that will be important in our arguments.
\begin{proposition}\label{comp-symb}
 Suppose we are given $\phi\in S^m_\rho(\Xi)$, $\psi\in S^p_\rho(\Xi)$ and
$\theta\in BC^\infty(\mathcal{X};C^\infty_{\text{\sf pol}}(\mathcal{X}^2))$ (the
bounded smooth functions on $\mathcal{X}$ with values in the space of smooth
functions on $\mathcal{X}^2$ with polynomial growth together with their
derivatives). Then
$$
\mathfrak{L}(\theta;\phi,\psi)(X):=\int_\Xi dY\int_\Xi dZ
e^{-2i\sigma^\circ(Y,Z)}\theta(x,y,z)\phi(X-Y)\psi(X-Z)
$$
defines a symbol of class $S^{m+p}_\rho(\Xi)$ and the
mapping
$$
S^m_\rho(\Xi)\times
S^p_\rho(\Xi)\ni(\phi,\psi)\mapsto\mathfrak{L}(\theta;\phi,\psi)\in
S^{m+p}_\rho(\Xi)
$$
is continuous.
\end{proposition}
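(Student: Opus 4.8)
The plan is to estimate directly the seminorms $\nu^{m+p,\rho}_{a,b}$ of $\mathfrak{L}(\theta;\phi,\psi)$, so the whole argument reduces to controlling, for each pair of multi-indices $(a,b)$, a sup over $X=(x,\xi)$ of a weighted derivative of the oscillatory double integral. First I would rewrite the integral in a form where the oscillation is an honest Fourier transform: after the change of variables exploited in \eqref{II.1.5}, $e^{-2i\sigma^\circ(Y,Z)}$ is the kernel of a (rescaled) symplectic Fourier transform in the pair $(Y,Z)$, so $\mathfrak{L}$ is a partial Fourier transform of the product $\theta(x,y,z)\phi(X-Y)\psi(X-Z)$. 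Differentiating in $x$ and $\xi$ under the integral sign is harmless because $\theta\in BC^\infty$ in its first slot and $\phi,\psi$ are Schwartz-class-bounded in the relevant variables after one integration by parts; applying $\partial_x^a\partial_\xi^b$ distributes over the three factors by Leibniz, and each resulting term is of the same structural type with $\phi,\psi$ replaced by derivatives of themselves (still in $S^m_\rho$, $S^p_\rho$ with shifted orders) and $\theta$ replaced by an $x$-derivative (still in the same class).

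The key step is then the standard non-stationary phase / integration-by-parts trick: the oscillatory factor $e^{-2i\sigma^\circ(Y,Z)}$ satisfies $L e^{-2i\sigma^\circ(Y,Z)}=e^{-2i\sigma^\circ(Y,Z)}$ for a first-order differential operator $L$ with constant coefficients built from $\partial_Y,\partial_Z$ (because $\sigma^\circ$ is a nondegenerate bilinear form). Inserting $L^M$ for $M$ large and integrating by parts moves $M$ derivatives onto $\theta(x,y,z)\phi(X-Y)\psi(X-Z)$; the polynomial growth of $\theta$ in $(y,z)$ is beaten by the extra decay one extracts from $\langle X-Y\rangle$, $\langle X-Z\rangle$ once enough $\xi$-decay is available, and the remaining $(y,z)$-integral is absolutely convergent. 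Here I would use the elementary Peetre-type inequality $\langle\xi-\eta\rangle^{s}\le C\langle\eta\rangle^{|s|}\langle\xi\rangle^{s}$ to pull the $\xi$-weight $\langle\xi\rangle^{-(m+p)+\rho|b|}$ through and reduce to a finite product of converging integrals in $\eta$ and $\zeta$ (the $\X^*$-components of $Y$ and $Z$), each bounded by a finite sum of seminorms $\nu^{m,\rho}_{a',b'}(\phi)\,\nu^{p,\rho}_{a'',b''}(\psi)$ times a constant depending on finitely many $\theta$-seminorms. This simultaneously yields the $S^{m+p}_\rho$ membership and, since the bound is a finite sum of products of seminorms, the joint continuity of the bilinear map $(\phi,\psi)\mapsto\mathfrak{L}(\theta;\phi,\psi)$.

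The main obstacle I anticipate is bookkeeping the mixed weights correctly: the factor $\psi(X-Z)$ contributes a weight in $\xi-\zeta$, the factor $\phi(X-Y)$ a weight in $\xi-\eta$, and after integrating by parts in $\eta,\zeta$ one must check that the total power of $\langle\eta\rangle$ and $\langle\zeta\rangle$ produced (from differentiating the $\X^*$-arguments, from the Peetre inequality, and from the polynomial growth of $\theta$) is strictly dominated by the decay $\langle\eta\rangle^{-M_1}\langle\zeta\rangle^{-M_2}$ gained from the $M$-fold integration by parts, for $M$ chosen large enough depending only on $m,p,\rho,|a|,|b|$ and the order of polynomial growth of $\theta$. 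Once $M$ is fixed large enough this is purely mechanical, but the choice of $M$ and the careful tracking of which variable carries which weight is the only genuinely delicate point; everything else is Leibniz, Fubini, and dominated convergence. I would present the $(a,b)=(0,0)$ case in detail and remark that the general case follows verbatim after the Leibniz expansion described above.
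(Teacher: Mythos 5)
Your plan is essentially the paper's own proof: a zeroth-order bound obtained by integrating by parts against the nondegenerate bilinear phase (gaining decay in $\xi-\eta$, $\xi-\zeta$, $x-y$, $x-z$ that beats the polynomial growth of $\theta$, then a Peetre-type inequality to extract $\langle\xi\rangle^{m+p}$), followed by the observation that $x$- and $\xi$-derivatives of $\mathfrak{L}(\theta;\phi,\psi)$ are again expressions of the same form with $\phi$, $\psi$, $\theta$ replaced by their derivatives, which gives all H\"ormander seminorms (with the $\rho$-gain) and the continuity of the bilinear map. The one slip is that no genuinely \emph{constant-coefficient} first-order operator $L$ satisfies $Le^{-2i\sigma^\circ(Y,Z)}=e^{-2i\sigma^\circ(Y,Z)}$, since the phase is bilinear rather than linear; one must use regularizing operators whose coefficients depend on the complementary variables, exactly as in \eqref{int-osc}, which is what your subsequent weight bookkeeping implicitly assumes, so the argument is unaffected.
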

\begin{proof}
By a simple change of variables we can write
$$
\mathfrak{L}(\theta;\phi,\psi)(X)=\int_\Xi dY\int_\Xi dZ
e^{-2i\sigma^\circ(X-Y,X-Z)}\theta(x,x-y,x-z)\phi(Y)\psi(Z).
$$
For any natural numbers $N_1$, $N_2$, $M_1$, $M_2$ we have the identity:
\begin{align}\label{int-osc}
e^{-2i\sigma^\circ(X-Y,X-Z)}&=
\left(\frac{1-i\langle (\xi-\zeta),\partial_y\rangle }{1+2|\xi-\zeta|^2}\right)^{N_2}
\left(\frac{1+i\langle (\xi-\eta),\partial_z\rangle }{1+2|\xi-\eta|^2}\right)^{N_1}
\\
&\times
\left(\frac{1+i\langle (x-z),\partial_\eta\rangle }{1+2|x-z|^2}\right)^{M_2}\left(\frac{
1-i\langle (x-y),\partial_\zeta\rangle }
{1+2|x-y|^2}\right)^{M_1}\,e^{-2i\sigma^\circ(X-Y,X-Z)}\nonumber.
\end{align}
If
$\phi$ and $\psi$ are test functions, after integration by parts we obtain the estimate:
\begin{align*}
\left|\mathfrak{L}(\theta;\phi,\psi)(X)\right|&\leq C\left(\int_\Xi dY 
<\xi-\eta>^{-N_1}<\eta>^m<x-y>^{r_1(N_1,N_2)-M_1}\right)\\
&\times\left(\int_\Xi
dZ<\xi-\zeta>^{-N_2}<\zeta>^p<x-z>^{r_2(N_1,N_2)-M_2}\right)\leq
C^\prime<\xi>^{m+p},
\end{align*}
where we choose $N_1>|m|+n$, $N_2>|p|+n$, $M_1>r_1(N_1,N_2)+n$ and
$M_2>r_2(N_1,N_2)+n$, with $r_j(N_1,N_2)$ the powers dominating
$\partial_z^{N_1}\partial_y^{N_2}\theta(x,y-x,z-x)$.
Now let us compute the $\xi$-derivative of
$\mathfrak{L}(\theta;\phi,\psi)$:
\begin{align*}
\left(\partial_{\xi_j}\mathfrak{L}(\theta;\phi,\psi)\right)(X)
&=-\int_\Xi dY\int_\Xi dZ
\left[(\partial_{\eta_j}+\partial_{\zeta_j})e^{-2i\sigma^\circ(X-Y,X-Z)}\right]
\theta(x,x-y,x-z)\phi(Y)\psi(Z)\\
&=\int_\Xi dY\int_\Xi dZ
e^{-2i\sigma^\circ(X-Y,X-Z)}\theta(x,x-y,x-z)\left[\left(\partial_{\eta_j}
\phi\right)(Y)\right]\psi(Z)\\
&+\int_\Xi dY\int_\Xi dZ
e^{-2i\sigma^\circ(X-Y,X-Z)}\theta(x,x-y,x-z)\phi(Y)\left[\left(\partial_{
\zeta_j} \psi\right)(Z)\right]\\
&=\mathfrak{L}(\theta;(\partial_{\xi_j}\phi),\psi)(X)+\mathfrak{L}
(\theta;\phi,(\partial_{\xi_j}\psi))(X).
\end{align*}

Considering the $x$-derivative we obtain in a similar way that
$$
\left(\partial_{x_j}\mathfrak{L}(\theta;\phi,\psi)\right)(X)
=\mathfrak{L}(\theta;(\partial_{x_j}\phi),\psi)+\mathfrak{L}
(\theta;\phi,(\partial_{x_j}\psi))+\mathfrak{L}(\tilde{\theta}
;\phi,\psi)(X)
$$
where $\tilde{\theta}(x,x-y,x-z):=\partial_{x_j}\theta(x,x-y,x-z)$. These two formulas allow us to control all the seminorms in the corresponding H\"ormander symbol spaces. 
\end{proof}

\vspace{0.5cm}

Fix $x,y,z\in\mathbb{R}
^d$ and define the following objects:
\begin{equation}\label{II.1.2}
D^B_{jk}(x,y,z)\ :=\
\int_0^1ds\int_0^sdt\,B_{jk}\big(x+(1-2s)y+(2t-1)z\big),
\end{equation}
\begin{equation}\label{II.1.1}
F_{B}(x,y,z):=\frac{1}{4}\int_{T(x,y,z)}\hspace*{-0.7cm}
B\hspace*{0.2cm}=\ \left\langle
D^B(x,y,z)\,z\,,\,y\right\rangle\ :=\ \underset{j\ne
k}{\sum}y_jz_kD^B_{jk}(x,y,z).
\end{equation}

\begin{corollary}\label{L.II.1.0}
For every $(m,m')\in\mathbb{R}\times\mathbb{R}$ and $\rho\in[0,1]$, and for any
magnetic field $B$ satisfying Hypothesis \ref{Hyp-B}, we have that the map
$
S^m_\rho(\mathbb{R}^d)\,\times\,S^{m^\prime}_\rho(\mathbb{R}^d)\,\ni\,(a,b)\,
\mapsto\ ,
a\sharp^{B}b\in S^{m+m^\prime}_\rho(\mathbb{R}^d)
$
is bilinear and continuous.
\end{corollary}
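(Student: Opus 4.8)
The plan is to recognize the magnetic Moyal product as a particular instance of the bilinear map $\mathfrak{L}$ of Proposition \ref{comp-symb} and then simply quote that proposition. Comparing the first expression for the product in \eqref{II.1.5} with the definition of $\mathfrak{L}(\theta;\phi,\psi)$ in Proposition \ref{comp-symb}, one reads off
\[
a\sharp^{B}b\ =\ \pi^{-2d}\,\mathfrak{L}\big(\theta_B;a,b\big),\qquad
\theta_B(x,y,z)\ :=\ e^{-i\int_{T(x,y,z)}B},
\]
so the whole statement reduces to checking that $\theta_B$ belongs to the input class $BC^\infty\big(\mathcal{X};C^\infty_{\text{\sf pol}}(\mathcal{X}^2)\big)$ of Proposition \ref{comp-symb}; one then takes $(m,p,\rho)$ there to be $(m,m',\rho)$, the output space $S^{m+p}_\rho(\Xi)$ being the same H\"ormander class as the $S^{m+m'}_\rho(\mathbb{R}^d)$ in the Corollary (and likewise for the two factors).

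To verify the regularity of $\theta_B$ I would use the explicit formulas \eqref{II.1.1}--\eqref{II.1.2}: the exponent equals $-\int_{T(x,y,z)}B=-4F_B(x,y,z)=-4\sum_{j\neq k}y_jz_kD^B_{jk}(x,y,z)$, a quadratic polynomial in $(y,z)$ whose coefficients $D^B_{jk}$ are, by Hypothesis \ref{Hyp-B}, bounded and smooth together with all their derivatives on $\mathcal{X}^3$ --- differentiating under the $ds\,dt$ integral in \eqref{II.1.2} only brings down the bounded factors $(1-2s)$, $(2t-1)$ and derivatives of the $B_{jk}\in BC^\infty(\X)$. By the Leibniz and chain rules, every derivative $\partial_x^a\partial_y^b\partial_z^c\theta_B$ is then a finite sum of terms $P\,\theta_B$ with $P$ a product of derivatives of $F_B$, hence --- for fixed $a,b,c$ --- a polynomial in $(y,z)$ of bounded degree whose coefficients are bounded on $\mathcal{X}$ together with all their $x$-derivatives; since $|\theta_B|\equiv1$, this shows that $x\mapsto\theta_B(x,\cdot,\cdot)$ is a bounded smooth map $\mathcal{X}\to C^\infty_{\text{\sf pol}}(\mathcal{X}^2)$, exactly as required.

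Granting this, Proposition \ref{comp-symb} applied with $(\phi,\psi)=(a,b)$ and $(m,p)=(m,m')$ gives at once that $(a,b)\mapsto\mathfrak{L}(\theta_B;a,b)$, hence $(a,b)\mapsto a\sharp^{B}b=\pi^{-2d}\,\mathfrak{L}(\theta_B;a,b)$, is a continuous map $S^m_\rho(\mathbb{R}^d)\times S^{m'}_\rho(\mathbb{R}^d)\to S^{m+m'}_\rho(\mathbb{R}^d)$; bilinearity is immediate from the integral formula, and continuity is unaffected by the constant prefactor $\pi^{-2d}$. The one genuinely non-routine point --- the step I expect to be the main obstacle --- is precisely the verification that the magnetic phase factor $\theta_B$ lies in $BC^\infty\big(\mathcal{X};C^\infty_{\text{\sf pol}}(\mathcal{X}^2)\big)$: the subtlety is that $e^{-iF_B}$ does not decay in $(y,z)$, so one must exploit that it is nonetheless of modulus one and that each differentiation costs at most a fixed polynomial factor in $(y,z)$ and nothing unbounded in $x$, which is exactly what the bilinear structure $F_B=\langle D^Bz,y\rangle$ with $D^B\in BC^\infty(\mathcal{X}^3)$ delivers.
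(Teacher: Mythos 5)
Your proposal is correct and is essentially the paper's own proof: the paper likewise obtains the corollary by rewriting the phase in \eqref{II.1.5} as $e^{-4iF_B(x,y,z)}$ with $F_B=\langle D^B(x,y,z)z,y\rangle$ from \eqref{II.1.1}--\eqref{II.1.2} and invoking Proposition \ref{comp-symb} with $\theta=e^{-4iF_B}$. The only difference is that you spell out the (routine but necessary) verification that this phase factor belongs to $BC^\infty\big(\mathcal{X};C^\infty_{\text{\sf pol}}(\mathcal{X}^2)\big)$, which the paper leaves implicit.
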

\begin{proof}
 We use the formula of the magnetic composition \eqref{II.1.5} and apply Proposition \ref{comp-symb} by replacing $\theta(x,y,z)$ with $e^{-4iF_B(x,y,z)}$.
\end{proof}

\vspace{0.5cm}

Using Theorem 3.1 and Remark 3.2 in \cite{IMP1} we deduce that for any $\Phi\in\mathscr{S}(\Xi)$ the operator
$\mathfrak{Op}^A(\Phi)$ defines a bounded operator on $L^2(\X)$. This allows us to
define on $\mathscr{S}(\Xi)$ a $C^*$-norm that only depends on $B$:
\beq\label{iulie4}
\|\Phi\|_{W,B}:=\|\mathfrak{Op}^A(\Phi)\|_{\mathbb{B}(L^2(\X))},
\qquad\forall\Phi\in\mathscr{S}(\Xi).
\eeq
Using the above cited results in \cite{IMP1} we can
extend the above $C^*$-norm to the space $S^0_0(\X)$. Moreover, it is also
proved in $\cite{IMP1}$ that  this norm is bounded by a specific norm from the
family defining the Fr\'{e}chet topology of $S^0_0(\X)$, that depends only on
the dimension $d=\dim_\mathbb{R}\X$.

\subsection{Weak magnetic fields}\label{SS-wmf}

In our paper we are interested in {\it weak} magnetic fields, that we shall
control by a small parameter $\epsilon\in[0,\epsilon_0]$.
\begin{hypothesis}\label{Hyp-V-magn}
We shall consider a family of magnetic fields
$\{B_\epsilon\}_{\epsilon\in[0,\epsilon_0]}$ of the form
$
B_\epsilon:=\epsilon\,B^0_\epsilon
$, with $B^0_\epsilon$ having components in $
BC^\infty\big(\X\big)
$ uniformly with respect to $\epsilon\in[0,\epsilon_0]$. In order to simplify the notation, when dealing with weak
magnetic fields, the indexes (or the exponents) $A_\epsilon$ or $B_\epsilon$ shall be replaced by $\epsilon$. Similarly, we shall
use the notation $\|\cdot\|_{W,\epsilon}$ instead of $\|\cdot\|_{W,B_\epsilon}$.
\end{hypothesis}

\vspace{0.5cm}

\begin{proposition}\label{L.II.1.1}
For $\epsilon\in[0,\epsilon_0]$ there exists a continuous
application
$r_\epsilon:S^m_\rho(\X)\times
S^{m^\prime}_\rho(\X)\rightarrow
S^{m+m^\prime-2\rho}_\rho(\X)$ such that:
\begin{equation}\label{II.1.3}
a\sharp^{\epsilon}b\,=\,a\sharp^{0}b\,+\,\epsilon
r_\epsilon(a,b),\qquad\forall(a,b)\in
S^m_\rho(\mathbb{R}^d)\times S^{m^\prime}_\rho(\mathbb{R}^d),
\end{equation}
\end{proposition}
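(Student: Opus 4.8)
The plan is to extract the $\epsilon$-dependence directly from the integral formula \eqref{II.1.5} for the magnetic Moyal product, using the fact that $B_\epsilon = \epsilon B^0_\epsilon$. First I would write $a\sharp^\epsilon b - a\sharp^0 b$ as a single oscillatory integral of the type treated by Proposition \ref{comp-symb}: subtracting the two versions of \eqref{II.1.5} (the one with phase $e^{-i\int_{T(x,y,z)}B_\epsilon}$ and the one with $B=0$, i.e. the ordinary Moyal product), we obtain
$$
\big(a\sharp^\epsilon b - a\sharp^0 b\big)(X) = \pi^{-2d}\int_\Xi dY\int_\Xi dZ\, e^{-2i\sigma^\circ(X-Y,X-Z)}\Big(e^{-i\int_{T(x,y,z)}B_\epsilon} - 1\Big) a(Y) b(Z),
$$
where, as in Corollary \ref{L.II.1.0}, the phase factor is $e^{-4iF_{B_\epsilon}(x,y,z)}$ with $F_{B_\epsilon}$ given by \eqref{II.1.1}. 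The key observation is that $F_{B_\epsilon} = \epsilon F_{B^0_\epsilon}$, since the magnetic-field enters \eqref{II.1.1}–\eqref{II.1.2} linearly. Hence
$$
e^{-4iF_{B_\epsilon}(x,y,z)} - 1 = -4i\,\epsilon\,F_{B^0_\epsilon}(x,y,z)\int_0^1 e^{-4i\tau\epsilon F_{B^0_\epsilon}(x,y,z)}\,d\tau,
$$
so the difference factors as $\epsilon$ times an oscillatory integral with an amplitude of the form $\theta_\epsilon(x,x-y,x-z) = -4i F_{B^0_\epsilon}(x,x-y,x-z)\int_0^1 e^{-4i\tau\epsilon F_{B^0_\epsilon}(x,x-y,x-z)}\,d\tau$.

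The central point is then to check that $\theta_\epsilon \in BC^\infty(\X; C^\infty_{\text{\sf pol}}(\X^2))$ uniformly in $\epsilon\in[0,\epsilon_0]$, so that Proposition \ref{comp-symb} applies and yields a symbol of class $S^{m+m'}_\rho(\X)$ with continuity in $(a,b)$. This is where one has to be a bit careful: $F_{B^0_\epsilon}(x,y,z) = \sum_{j\neq k} y_j z_k D^{B^0_\epsilon}_{jk}(x,y,z)$ grows quadratically in $(y,z)$, which is acceptable (polynomial growth is allowed in the amplitude class), and its derivatives in $x$ stay bounded because $B^0_\epsilon \in BC^\infty$ uniformly; derivatives in $y,z$ either hit the polynomial prefactor $y_jz_k$ or the bounded functions $D^{B^0_\epsilon}_{jk}$, staying polynomially bounded. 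The exponential factor $e^{-4i\tau\epsilon F_{B^0_\epsilon}}$ is more delicate: differentiating it produces powers of $\epsilon F_{B^0_\epsilon}$ and its derivatives, hence powers of $|y||z|$ — still polynomial growth, and uniformly bounded in $\epsilon$ since each such factor carries an $\epsilon\le\epsilon_0$. So $\theta_\epsilon$ lies in the required amplitude class with seminorms bounded uniformly in $\epsilon$. Defining $r_\epsilon(a,b)(X) := \pi^{-2d}\mathfrak{L}(\theta_\epsilon; a, b)(X)$ (up to the change of variables relating \eqref{II.1.5} to the form in Proposition \ref{comp-symb}), the conclusion \eqref{II.1.3} and the continuity of $r_\epsilon$ follow directly from Proposition \ref{comp-symb}.

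The gain of $2\rho$ powers of $\langle\xi\rangle$ — i.e. that $r_\epsilon$ actually lands in $S^{m+m'-2\rho}_\rho(\X)$ rather than merely $S^{m+m'}_\rho(\X)$ — is the one extra step beyond a verbatim application of Proposition \ref{comp-symb}, and I expect it to be the main technical obstacle. It comes from the fact that the amplitude $\theta_\epsilon$ vanishes on the diagonal $y=z$ together with a factor $F_{B^0_\epsilon}$ that is itself a sum $\sum_{j\ne k}y_j z_k D^{B^0_\epsilon}_{jk}$; the two transverse variables $y_j,z_k$ can be traded, via the oscillatory phase $e^{-2i\sigma^\circ(X-Y,X-Z)}$, for two $\xi$-derivatives falling on $a$ and $b$ through integrations by parts of the type used in \eqref{int-osc}. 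Each such $\xi$-derivative on a symbol of class $S^\bullet_\rho$ improves the decay by $\langle\xi\rangle^{-\rho}$, producing the claimed total gain $\langle\xi\rangle^{-2\rho}$. Concretely I would first integrate by parts in $\eta$ and $\zeta$ to convert the $y_j$ and $z_k$ prefactors into $\partial_{\xi}$ acting on $a$ and $b$, absorb the remaining bounded-times-polynomial factors into a new amplitude still in the class of Proposition \ref{comp-symb}, and then invoke that proposition with the shifted orders $m-\rho$ and $m'-\rho$. Continuity of the resulting bilinear map in $(a,b)$ is inherited from the continuity statement in Proposition \ref{comp-symb}, and uniformity in $\epsilon$ from the uniform amplitude bounds established above.
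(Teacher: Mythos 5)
Your proposal is correct and follows essentially the same route as the paper: the Duhamel identity $e^{-4iF_{\epsilon}}=1-4iF_{\epsilon}\int_0^1e^{-4itF_{\epsilon}}dt$ extracts the factor $\epsilon$ (since $F_{B_\epsilon}=\epsilon F_{B^0_\epsilon}$), the factor $F=\sum_{j\ne k}y_jz_kD_{jk}$ is converted by integration by parts in the momentum variables into $\partial_\xi$-derivatives on $a$ and $b$ (yielding the $-2\rho$ gain), and Proposition \ref{comp-symb} is applied with the amplitude $\bigl(\int_0^1e^{-4itF_{\epsilon}}dt\bigr)D^\epsilon_{jk}$, uniformly in $\epsilon$. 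This is exactly the paper's argument, merely presented with the crude $S^{m+m'}_\rho$ bound as an intermediate step.
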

\begin{proof}
In \eqref{II.1.5} we use the following
identity:
\begin{equation}\label{exp-dev}
e^{-4iF_{\epsilon}}\ =\
1\,-\,4iF_{\epsilon}\int_0^1e^{-4itF_{\epsilon}}dt,
\end{equation}
in order to obtain \eqref{II.1.3} with
\begin{align}\label{II.1.6}
&\big[r_\epsilon(a,b)\big](X)=\frac{1}{\epsilon}(a\sharp^{\epsilon}b\,-\,a\sharp^{0}b)\\
&=\,-\frac{4i}{(2\pi)^{2d}}\int_{\mathbb{R}^{4d}}e^
{-2i\sigma^\circ(Y,Z)}\left(\int_0^1e^{-4itF_{\epsilon}(x,y,z)}
dt\right)\left\langle
D^\epsilon(x,y,z)z,y\right\rangle a(X-Y)\,b(X-Z)\,dY\,dZ.\nonumber
\end{align}
All the components of the matrix $D^\epsilon$ belong to $
BC^\infty\big(\X\big)$ uniformly with respect to
$\epsilon\in[0,\epsilon_0]$ for any $j<k$. Integrating by parts we obtain:
\begin{align}\label{II.1.6.a}
&\big[r_\epsilon(a,b)\big](X)=\,-\frac{4i}{\pi^{2d}}\int_{\mathbb{R}^{4d}}\,dY\,dZe^
{-2i\sigma^\circ(Y,Z)}\left(\int_0^1e^{-4itF_{\epsilon}(x,y,z)}
dt\right)\\
&\times \sum_{j,k=1}^d D_{jk}^\epsilon(x,y,z)\big(\partial_{\xi_j}
a\big)(X-Y)\,\big(\partial_{\xi_k}
b\big)(X-Z).\nonumber 
\end{align}

The proof can be completed by using Proposition
\ref{comp-symb} where $$\theta(x,y,z)=\left(\int_0^1e^{-4itF_{\epsilon}(x,y,z)}
dt\right)D_{jk}^\epsilon(x,y,z).$$
\end{proof}

\vspace{0.5cm}

\begin{remark}
If we replace \eqref{exp-dev} with the $N$'th order Taylor expansion of the exponential, we obtain that for any
$N\in\mathbb{N}^*$:
\beq\label{iulie5}
a\sharp^{\epsilon}b=a\sharp^{0}b+\underset{1\leq k\leq
N-1}{\sum}\epsilon^kc^{(k)}_\epsilon(a,b)+\epsilon^N\rho^{(N)}_\epsilon(a,b),
\eeq
with $ c^{(k)}_\epsilon(a,b)\in S^{m+m^\prime-2k\rho}_1(\X)$ and $\rho^{(N)}_\epsilon(a,b)\in S^{m+m^\prime-2N\rho}_1(\X)$ 
uniformly in $\epsilon\in[0,\epsilon_0]$.
\end{remark}

Associated to the series development of the symbol $r^\epsilon_\z(h)$ given in
Proposition \ref{rez-dev}, we shall also use the notations
\beq\label{rez-dev-1}
r^\epsilon_{\z,n}(h):=\underset{0\leq k\leq
n}{\sum}\epsilon^kr_k(h;\epsilon,\z)\in
S^{-m}_1(\X)_\Gamma;\qquad\widetilde{r^\epsilon}_{\z,n}(h):=\underset{n+1\leq
k}{\sum}\epsilon^kr_k(h;\epsilon,\z)\in S^{-m}_1(\X)_\Gamma.
\eeq
\begin{remark}\label{rem-est-rest}
 The remainder $\widetilde{r^\epsilon}_{\z,n}\in S^{-m}_1(\X)_\Gamma$ has the
following properties:
\begin{enumerate}
 \item
$\widetilde{r^\epsilon}_{\z,n}\,=\,\epsilon^{n+1}\widetilde{\widetilde{
r^\epsilon}}_{\z,n}$, where $\widetilde{\widetilde{
r^\epsilon}}_{\z,n}\in S^{-m}_1(\X)_\Gamma$ uniformly in
$\epsilon\in[0,\epsilon_0)$ for
some small enough $\epsilon_0>0$;
\item
$h\sharp^\epsilon\widetilde{r^\epsilon}_{\z,n}\,=\,\epsilon^{n+1}
h\sharp^\epsilon
r^0_\z\sharp^\epsilon\left(\sum\limits_{k=n+1}^{\infty}\epsilon^{k-n-1}
\big(-r_\epsilon(h,r^0_\z)^{\sharp^\epsilon k}\big)\right)$ and noticing that 
$h\sharp^\epsilon r^0_\z=h\sharp^0 r^0_\z\,+\,\epsilon r_\epsilon(h,r^0_\z)$ we
conclude that
$\|h\sharp^\epsilon\widetilde{r^\epsilon}_{\z,n}\|_{W,\epsilon}\leq
C\epsilon^{n+1}$ for some $C>0$ independent of $\epsilon\in[0,\epsilon_0)$ for
some small enough $\epsilon_0>0$.
\end{enumerate}
\end{remark}

\subsubsection{Proof of Proposition \ref{rez-dev}}

In this paragraph we shall use the results above in order to prove Proposition
\ref{rez-dev}. The first point clearly follows from the spectral stability
results proved in \cite{AMP,CP-1}, as briefly recalled in subsection
\ref{main-res}, before the statement of Proposition \ref{rez-dev}. 

For the last two statements we start from the continuity of the
application
$$\mathbb{C}\setminus\sigma(H^\epsilon)\ni\z\mapsto r^\epsilon_\z(h)\in
S^{-m}_1(\X)$$
for the
$\|\cdot\|_{W,\epsilon}$-topology, a consequence of basic spectral theory for
self-adjoint operators. In order to obtain a control in the Fr\'{e}chet
topology on $S^{-m}_1(\X)$ we recall some results from \cite{IMP2}.  Let us
recall the symbol
$\mathfrak{s}_m$ introduced in Subsection \ref{M-PsiDO} just before Proposition
\ref{comp-symb} and the space of ``linear'' symbols:
$$\forall X\in\Xi,\ \mathfrak{l}_X(Y):=\sigma^\circ(X,Y).$$
For any $X\in\Xi$ we can define the operators
$$
\mathfrak{ad}_X^\epsilon[\psi]
:=\mathfrak{l}_X\sharp^\epsilon\psi-\psi\sharp^\epsilon\mathfrak{l}_X,
\qquad\forall\psi\in\mathscr{S}^\prime(\Xi).
$$
Then Theorem 5.2 in \cite{IMP2} states that the Fr\'{e}chet topology on any 
space $S^{-m}_1(\X)$ (for any $m\in\mathbb{R}$) may be also defined by the
following family of seminorms:
\beq
S^{-m}_1(\X)\ni\psi\mapsto\left\|\mathfrak{s}_{m+q}\sharp^\epsilon\big(\mathfrak
{ad}_{u_1}^\epsilon\cdots\mathfrak{ad}_{u_p}^\epsilon\mathfrak{ad}_{\mu_1}
^\epsilon\cdots\mathfrak{ad}_{\mu_q}^\epsilon[\psi]\big)\right\|_{W,\epsilon}
\in\mathbb{R}_+
\eeq
indexed by a pair of natural numbers $(p,q)\in\mathbb{N}\times\mathbb{N}$ and
by two families of points $\{u_1,\ldots,u_p\}\subset\X$ and
$\{\mu_1,\ldots,\mu_q\}\subset\X^*$. A simple computation shows that for any
$\epsilon\in[0,\epsilon_0]$ and any $\z\notin\sigma(H^\epsilon)$ 
\beq
\mathfrak{ad}_X^\epsilon[r^\epsilon_\z(h)]
=-r^\epsilon_\z(h)\sharp^\epsilon\mathfrak{ad}
_X^\epsilon[h]\sharp^\epsilon r^\epsilon_\z(h).
\eeq
Using the resolvent equation:
\beq
r^\epsilon_\z(h)=r^\epsilon_i(h)+(i-\z)r^\epsilon_i(h)r^\epsilon_\z(h),
\eeq
and Propositions 3.6 and 3.7 from \cite{IMP2} we easily prove that the
applications:
\beq\label{z-cont}
K\ni\z\mapsto\mathfrak{s}_{m+q}\sharp^\epsilon\big(\mathfrak
{ad}_{u_1}^\epsilon\cdots\mathfrak{ad}_{u_p}^\epsilon\mathfrak{ad}_{\mu_1}
^\epsilon\cdots\mathfrak{ad}_{\mu_q}^\epsilon[r^\epsilon_\z(h)]\big)\in
S^0_0(\X)
\eeq
are well defined, bounded and uniformly continuous for the norm
$\|\cdot\|_{W,\epsilon}$ for any $\epsilon\in[0,\epsilon_0]$.

The second point follows by noticing that the result in Lemma \ref{L.II.1.1}
implies the equality
\begin{equation}\label{II.1.9}
1\,=\,(h-\z)\sharp^0r^0_\z(h)\,=\,(h-\z)\sharp^{\epsilon}
r^0_\z(h)+\epsilon r_\epsilon\big(h,r^0_\z(h)\big)
\end{equation}
with the family
$\{r_\epsilon\big(h,r^0_\z(h)\big)\}_{\epsilon\in[0,\epsilon_0]}$ being a
bounded
subset in $S^{-2}(\X)$. We conclude that for some $\epsilon_0>0$ small enough,
$1+\epsilon r_\epsilon(h,r^0_\z(h))$ defines an invertible magnetic
operator for any $\epsilon\in[0,\epsilon_0]$
and its inverse will have a symbol $s^\epsilon(\z)$ given as the limit of the
following norm convergent series:
\beq
s^\epsilon(\z):=\underset{n\in\mathbb{N}}{\sum}\big(-\epsilon
r_\epsilon(h,r^0_\z(h))\big)^{\sharp^\epsilon n}\ \in\
S^0_1(\X).
\eeq
This clearly gives us the development in point (2) of the Theorem with
\beq
r_n(h;\epsilon,\z):=(-1)^nr^0_\z(h)\sharp^\epsilon\big(r_\epsilon(h,
r^0_\z(h))\big)^{\sharp^\epsilon n}\in S^{-(m+2n)}_1(\X).
\eeq

In order to control the uniformity with respect to $\epsilon\in[0,\epsilon_0]$
of the continuity of the application in \eqref{z-cont} let us notice that
\beq
r^\epsilon_\z(h)-r^\epsilon_{\z'}
(h)=(\z'-\z)r^\epsilon_\z(h)\sharp^\epsilon r^\epsilon_{\z'}(h)
\eeq
and that for any $\z\in K$ the family of symbols
$\{r^\epsilon_\z(h)\}_{\epsilon\in[0,\epsilon_0]}$, for $\epsilon_0>0$ small enough, is a bounded set in
$S^{-m}(\X)$ due to point (2).

\subsection{Proof of Theorem \ref{T-A}}
We shall consider {\it an
isolated spectral band} as in Hypothesis
\ref{is-sp-band} with a {\it band projection} and a {\it band Hamiltonian} as
in Definition \ref{Is-sp-band-op}; they
describe the exact dynamics in the given `energy window' in the absence of a
magnetic field.
Now suppose that a magnetic field satisfying Hypothesis \ref{Hyp-V-magn}
(depending on the parameter $\epsilon\in[0,\epsilon_0]$) is switched on.

We introduce the following simplified notations:
\beq\label{M-band-proj}
\mathfrak{p}^\epsilon:=E_{I}(H^\epsilon),\quad 
\mathfrak{H}^\epsilon:=H^\epsilon
E_{I}(H^\epsilon).
\eeq
We have that
$\big[
\mathfrak{H}^\epsilon\,,\,\mathfrak{p}^\epsilon\big]=0$ and
\beq\label{iulie6}
\mathfrak{p}^\epsilon\ =\ \frac{i}{2\pi }\int_\mathcal{C}r^\epsilon_\z(h)d\z,\qquad
\mathfrak{H}^\epsilon\ =\ \frac{i}{2\pi}\int_\mathcal{C}\z r^\epsilon_\z(h)d\z
\eeq
with $\mathcal{C}$ the smooth closed curve in $\mathbb{C}$ defined just before Remark \ref{Prop-bandOp}
for $H$ replaced with $H^\epsilon$ with $\epsilon\in[0,\epsilon_0]$ and $\epsilon_0>0$ small enough.

\begin{proposition}\label{PO-1}
Under Hypothesis \ref{is-sp-band} and \ref{Hyp-V-magn}, and using \eqref{rez-dev-1}, we have that
\beq\label{SH}
\mathfrak{H}^\epsilon\,=\,\mathfrak{Op}^\epsilon\big(S_{\mathfrak{H}}\big)+\underset{1\leq k\leq n-1}{\sum}
\epsilon^k\mathfrak{Op}^\epsilon\big(v^\epsilon_k\big)
  +\epsilon^n\mathfrak{Op}^\epsilon\big(R^\epsilon_I(h;n)\big),
\eeq
with
\beq
S_{\mathfrak{H}}(z,\zeta)\ =\ \underset{k\in
J_I}{\sum}\left(\int_{\X}e^{i<\zeta,v>}\int_{\mathbb{T}_*}
\lambda_k(\theta)\phi_k(z+v/2,\theta)
\overline{\phi_k(z-v/2,\theta)}\,d\theta\right)dv
\eeq
and
$$v^\epsilon_k:=-(2\pi i)^{-1}\int_{\mathcal{C}}\z r_k(h;\epsilon,\z)d\z,\quad R^\epsilon_I(h;n):=-(2\pi i)^{-1}\int_{\mathcal{C}}\z\left(\underset{k\geq n}{\sum}\epsilon^kr_k(h;\epsilon,\z)\right)d\z.$$
\end{proposition}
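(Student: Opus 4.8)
The plan is to start from the Riesz integral formula \eqref{iulie6} for $\mathfrak{H}^\epsilon$ and insert the convergent expansion of $r^\epsilon_\z(h)$ provided by Proposition \ref{rez-dev}. Since that expansion converges in the $\|\cdot\|_{W,\epsilon}$-topology uniformly on the compact contour $\mathcal{C}\subset K\subset\mathbb{C}\setminus\sigma(H)$, and $K\ni\z\mapsto\z$ is bounded on $\mathcal{C}$, the contour integral can be interchanged with the sum over $n$. This gives immediately $\mathfrak{H}^\epsilon=\sum_{k\in\mathbb{N}}\epsilon^k\,\mathfrak{Op}^\epsilon\big(-(2\pi i)^{-1}\int_{\mathcal{C}}\z\, r_k(h;\epsilon,\z)\,d\z\big)$, and then one splits off the first $n$ terms, putting the tail into $\epsilon^n\mathfrak{Op}^\epsilon(R^\epsilon_I(h;n))$. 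The identification of the coefficients $v^\epsilon_k$ and of the remainder $R^\epsilon_I(h;n)$ with the stated formulas is then purely a matter of bookkeeping, once one knows these define legitimate Hörmander symbols.

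The second task is to identify the leading term. Here I would observe that $r_0(h;\epsilon,\z)=r^0_\z(h)$ is the non-magnetic resolvent symbol (independent of $\epsilon$), so that $-(2\pi i)^{-1}\int_{\mathcal{C}}\z\,r_0(h;\epsilon,\z)\,d\z$ is exactly the symbol $S_{\mathfrak{H}}$ of the non-magnetic band Hamiltonian $\mathfrak{H}=\tfrac{i}{2\pi}\int_{\mathcal{C}}\z(H-\z)^{-1}d\z$ from Definition \ref{Is-sp-band-op}. The explicit Bloch-eigenfunction formula for $S_{\mathfrak{H}}$ is then just \eqref{SH-0} from Proposition \ref{comm-band-op}, which was obtained by combining the Bloch--Floquet fibre decomposition \eqref{BF-band-Ham} of $\mathfrak{H}$ with the kernel-to-symbol formula \eqref{symb-FKernel}; I would simply cite it. (The apparent sign discrepancy between $e^{-i\langle\zeta,v\rangle}$ in \eqref{SH-0} and $e^{i\langle\zeta,v\rangle}$ in the statement is absorbed by the change of variables $v\mapsto -v$ in the Gaussian-type integral, using that $\lambda_k$ is real and that the integrand is otherwise symmetric; this is the sort of routine point I would note but not belabor.)

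The one place requiring genuine care is the justification that the tail $R^\epsilon_I(h;n)=-(2\pi i)^{-1}\int_{\mathcal{C}}\z\,\big(\sum_{k\geq n}\epsilon^k r_k(h;\epsilon,\z)\big)\,d\z$ is a bona fide symbol with uniform-in-$\epsilon$ bounds, and that the interchange of $\int_{\mathcal{C}}$ with the infinite sum is legitimate at the level of symbols (not merely of operators). For this I would invoke point (2) of Proposition \ref{rez-dev}: each $r_k(h;\epsilon,\z)\in S^{-(m+2k)}_1(\X)$, so in particular all tails lie in $S^{-m}_1(\X)$, and the series $s^\epsilon(\z)=\sum_n(-\epsilon r_\epsilon(h,r^0_\z(h)))^{\sharp^\epsilon n}$ from the proof of Proposition \ref{rez-dev} converges in the Fréchet topology of $S^0_1(\X)$ uniformly for $(\epsilon,\z)\in[0,\epsilon_0]\times K$ once $\epsilon_0$ is small enough, by the seminorm estimates in Corollary \ref{L.II.1.0} and Proposition \ref{L.II.1.1} together with the $\mathfrak{ad}^\epsilon$-characterization of the Fréchet topology recalled there. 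Combined with point (3) of Proposition \ref{rez-dev} (continuity in $\z$ in the Fréchet topology, uniformly in $\epsilon$), this makes $\z\mapsto\z\, r_k(h;\epsilon,\z)$ a continuous $S^{-m}_1$-valued function on the compact set $\mathcal{C}$, so the contour integrals are Riemann integrals of continuous symbol-valued functions, the sum-integral interchange is justified, and $R^\epsilon_I(h;n)\in S^{-m}_1(\X)$ with seminorms uniformly bounded in $\epsilon$. Thus the hard part is not the algebra but tracking that every estimate is uniform in $(\epsilon,\z)$, which is exactly what Proposition \ref{rez-dev} was set up to deliver; with it in hand the proof is short.
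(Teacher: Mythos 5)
Your argument is essentially the paper's own proof: start from \eqref{iulie6}, insert the expansion of $r^\epsilon_\z(h)$ from point (2) of Proposition \ref{rez-dev} (whose uniformity in $(\epsilon,\z)$ justifies interchanging the contour integral with the series), identify the $k=0$ term with $\mathfrak{Op}^\epsilon(S_\mathfrak{H})$ via \eqref{Prop-bandOp} and \eqref{SH-0}, and control the tail using \eqref{rez-dev-1} and Remark \ref{rem-est-rest}; your additional care that the remainder is a symbol with $\epsilon$-uniform bounds is exactly what those results deliver, so the proof is correct and follows the same route. One caveat: your parenthetical claim that the $e^{+i\langle\zeta,v\rangle}$ in the displayed formula is ``absorbed'' by the change of variables $v\mapsto -v$ is not right. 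That substitution shows the displayed integral equals $\int_\X e^{-i\langle\zeta,v\rangle}K_\mathfrak{H}(z-v/2,z+v/2)\,dv$, i.e.\ $S_\mathfrak{H}(z,-\zeta)$; hermiticity of $K_\mathfrak{H}$ (reality of $\lambda_k$) only makes $S_\mathfrak{H}$ real-valued, not even in $\zeta$, so the two expressions need not coincide without an extra symmetry such as a real-valued kernel. The sign, like the missing prefactor $(2\pi)^d|E_*|^{-1}$, is simply a misprint in the statement and should be read as in \eqref{SH-0} and point 1(a) of Theorem \ref{T-A} --- which is how the paper's proof implicitly treats it by citing \eqref{SH-0}.
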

\begin{proof}
 We use \eqref{iulie6} and the development of $r^\epsilon_\z(h)$ given in the second point of 
 Proposition \ref{rez-dev}. Formula \eqref{SH} follows from \eqref{Prop-bandOp} and \eqref{SH-0}.
\end{proof}
This Proposition proves point 1 of our Theorem \ref{T-A}.
For the proof of the second point let us consider the projection
$\mathfrak{p}^\epsilon=:\mathfrak{Op}^\epsilon\big(S_{\mathfrak{p}^\epsilon}
\big)$ and try to
approximate it by the magnetic quantization of the `free symbol'
$S_{\mathfrak{p}}$. A technical difficulty comes from the fact that now
$\mathfrak{Op}^\epsilon(S_{\mathfrak{p}})$ is no longer a projection (it is
idempotent only modulo an error of order $\epsilon$!). 
\begin{proposition}\label{PO-1-proj}
 Under Hypothesis \ref{is-sp-band} and \ref{Hyp-V-magn}, with the above
notations, for any $\epsilon\in[0,\epsilon_0]$
with $\epsilon_0>0$ small enough and for any $N\in\mathbb{N}^*$ there exists
an orthogonal projection
$\widetilde{\mathfrak{p}}^\epsilon_N$ such that:
\begin{enumerate}
\item $\widetilde{\mathfrak{p}}^\epsilon_N\ =\ \underset{0\leq m\leq
N-1}{\sum}\epsilon^m\mathfrak{Op}^\epsilon\left(\frac{i}{2\pi }\int_{\mathcal{C}}
r_m(h;\epsilon,\z)\,d\z\right)\,+
\,\epsilon^N X^\epsilon_N(h)$, with $\|X^\epsilon_N(h)\|\leq C_N<\infty$ for any
$\epsilon\in[0,\epsilon_0]$. 
 \item
$\left\|\mathfrak{p}^\epsilon\,-\,\widetilde{\mathfrak{p}}^\epsilon_N\right\|\
\leq\,C_N(h)\epsilon^N$ for some $C_N(h)<\infty$.
 \item
$\left\|\mathfrak{H}^\epsilon\,-\,H^\epsilon\widetilde{
\mathfrak{p}}^\epsilon_N\right\|\
\leq\,C_N(h)\epsilon^N$, $\left\|\mathfrak{H}^\epsilon\,-\,\mathfrak{H}
^\epsilon\widetilde{
\mathfrak{p}}^\epsilon_N\right\|\ \leq\,C_N(h)\epsilon^N$ for some
$C_N(h)<\infty$.
\item 
$\left\|\big[\mathfrak{H}^\epsilon\,,\,\widetilde{\mathfrak{p}}^\epsilon_N\big]
\right\|\leq
C_N\epsilon^{N},\ \left\|\big[H^\epsilon\,,\,\widetilde{\mathfrak{p}}
^\epsilon_N\big]
\right\|\leq C_N\epsilon^{N},\
\ \forall\epsilon\in[0,\epsilon_0)$
\item
$\exists
R_{N}(h)\in S^{-\infty}(\X)$ such that
$H^\epsilon\widetilde{\mathfrak{p}}^\epsilon_N=\mathfrak{H}^\epsilon+
\epsilon^{N}\mathfrak{Op}^\epsilon(R_{N}(h))$.
\end{enumerate}
\end{proposition}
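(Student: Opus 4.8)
The plan is to build $\widetilde{\mathfrak p}^\epsilon_N$ as the exact spectral projection of an explicitly truncated \emph{almost-projection}, following the standard device for promoting an approximate idempotent to a genuine one.

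\emph{Step 1 (the truncated almost-projection).} Using \eqref{iulie6} and the expansion of $r^\epsilon_\z(h)$ from Proposition~\ref{rez-dev}, set
\[
\mathfrak q^\epsilon_N\ :=\ \frac{i}{2\pi}\int_{\mathcal C}r^\epsilon_{\z,N-1}(h)\,d\z\ =\ \underset{0\leq m\leq N-1}{\sum}\epsilon^m\,\mathfrak{Op}^\epsilon\!\left(\frac{i}{2\pi}\int_{\mathcal C}r_m(h;\epsilon,\z)\,d\z\right),
\]
with $r^\epsilon_{\z,N-1}$ as in \eqref{rez-dev-1}. Then $\mathfrak q^\epsilon_N=\mathfrak p^\epsilon-\epsilon^N\mathfrak{Op}^\epsilon\big(R^\epsilon_I(p;N)\big)$, where $R^\epsilon_I(p;N):=\frac{i}{2\pi}\int_{\mathcal C}\widetilde{\widetilde{r^\epsilon}}_{\z,N-1}(h)\,d\z$ is bounded in $S^{-m}_1(\X)$ uniformly in $\epsilon$ (a contour integral of the tail of the $\epsilon$-uniformly convergent series of Proposition~\ref{rez-dev}, cf. Remark~\ref{rem-est-rest}). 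Choosing $\mathcal C$ symmetric with respect to the real axis makes $\mathfrak q^\epsilon_N$ self-adjoint. Since $\mathfrak p^\epsilon=E_I(H^\epsilon)$ is an orthogonal projection, whose symbol is $\sharp^\epsilon$-idempotent of negative order and hence lies in $S^{-\infty}(\X)$ (bootstrap via Corollary~\ref{L.II.1.0}), expanding $(\mathfrak q^\epsilon_N)^2-\mathfrak q^\epsilon_N$ around $\mathfrak p^\epsilon$ gives $(\mathfrak q^\epsilon_N)^2-\mathfrak q^\epsilon_N=\epsilon^N\mathfrak{Op}^\epsilon(\rho^\epsilon_N)$ with $\rho^\epsilon_N$ bounded in $S^{-m}_1(\X)$ uniformly in $\epsilon\in[0,\epsilon_0]$; in particular $\|(\mathfrak q^\epsilon_N)^2-\mathfrak q^\epsilon_N\|_{\mathbb B(L^2(\X))}\leq C\epsilon^N$.

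\emph{Step 2 (the exact projection; points (1)--(2)).} Because $\mathbf 1+4\big((\mathfrak q^\epsilon_N)^2-\mathfrak q^\epsilon_N\big)=(2\mathfrak q^\epsilon_N-\mathbf 1)^2$, for $\epsilon_0$ small enough $2\mathfrak q^\epsilon_N-\mathbf 1$ is boundedly invertible and $\sigma(\mathfrak q^\epsilon_N)$ lies in the union of two $O(\epsilon^N)$-neighbourhoods of $0$ and $1$. Define
\[
\widetilde{\mathfrak p}^\epsilon_N\ :=\ \frac12+\Big(\mathfrak q^\epsilon_N-\frac12\Big)\Big(\mathbf 1+4\big((\mathfrak q^\epsilon_N)^2-\mathfrak q^\epsilon_N\big)\Big)^{-1/2}\ =\ \chi_{(1/2,\infty)}(\mathfrak q^\epsilon_N)\ =\ \frac{i}{2\pi}\oint_{|\zeta-1|=1/2}(\mathfrak q^\epsilon_N-\zeta)^{-1}\,d\zeta .
\]
The first expression is manifestly self-adjoint and, using that $(2\mathfrak q^\epsilon_N-\mathbf 1)^2$ is positive and the identity above, a one-line computation gives $(\widetilde{\mathfrak p}^\epsilon_N)^2=\widetilde{\mathfrak p}^\epsilon_N$, so it is an orthogonal projection; the three displayed forms agree by functional calculus. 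The spectral description together with $\|(\mathfrak q^\epsilon_N)^2-\mathfrak q^\epsilon_N\|\leq C\epsilon^N$ yields $\|\widetilde{\mathfrak p}^\epsilon_N-\mathfrak q^\epsilon_N\|\leq C_N\epsilon^N$; combined with $\|\mathfrak q^\epsilon_N-\mathfrak p^\epsilon\|\leq C_N\epsilon^N$ this is point~(2), and setting $X^\epsilon_N(h):=\epsilon^{-N}(\widetilde{\mathfrak p}^\epsilon_N-\mathfrak q^\epsilon_N)$ gives point~(1). Running the same formula inside the magnetic Moyal algebra — expanding $\big(\mathbf 1+4\epsilon^N\rho^\epsilon_N\big)^{-1/2}$ as a $\sharp^\epsilon$-convergent binomial series and invoking the uniform bilinear continuity of $\sharp^\epsilon$ from Corollary~\ref{L.II.1.0} — shows moreover that $X^\epsilon_N(h)=\mathfrak{Op}^\epsilon\big((S_{\mathfrak q^\epsilon_N}-\tfrac12)\sharp^\epsilon\tau^\epsilon_N\big)$ with $\tau^\epsilon_N$ bounded in a negative-order Hörmander class uniformly in $\epsilon$, which is the refined form used in Theorem~\ref{T-A}(2c).

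\emph{Step 3 (points (3)--(5)).} The key observation is that \emph{both} $\mathfrak p^\epsilon$ and $\widetilde{\mathfrak p}^\epsilon_N$ have symbols in $S^{-\infty}(\X)$: each is an orthogonal projection equal to $\mathfrak{Op}^\epsilon$ of a symbol of negative Hörmander order (for $\mathfrak p^\epsilon$, the symbol $\frac{i}{2\pi}\int_{\mathcal C}r^\epsilon_\z(h)\,d\z\in S^{-m}_1(\X)$; for $\widetilde{\mathfrak p}^\epsilon_N$, $S_{\mathfrak q^\epsilon_N}$ plus $\epsilon^N$ times a term of negative order, hence again in $S^{-m}_1(\X)$), and $\sharp^\epsilon$-idempotency with Corollary~\ref{L.II.1.0} bootstraps negative order to $S^{-\infty}(\X)$. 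Hence $\widetilde{\mathfrak p}^\epsilon_N-\mathfrak p^\epsilon=\epsilon^N\mathfrak{Op}^\epsilon(G^\epsilon_N)$ with $G^\epsilon_N$ bounded in $S^{-\infty}(\X)$. Since $\mathfrak{Op}^\epsilon(G^\epsilon_N)$ maps $L^2(\X)$ into every magnetic Sobolev space, in particular into $\mathcal D(H^\epsilon)$, and $H^\epsilon\mathfrak{Op}^\epsilon(G^\epsilon_N)=\mathfrak{Op}^\epsilon(h\sharp^\epsilon G^\epsilon_N)$, point~(5) follows with $R_N(h):=h\sharp^\epsilon G^\epsilon_N\in S^{-\infty}(\X)$. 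For point~(3): $\mathfrak H^\epsilon-H^\epsilon\widetilde{\mathfrak p}^\epsilon_N=H^\epsilon(\mathfrak p^\epsilon-\widetilde{\mathfrak p}^\epsilon_N)=-\epsilon^N\mathfrak{Op}^\epsilon(h\sharp^\epsilon G^\epsilon_N)$, and since $\mathfrak H^\epsilon=H^\epsilon\mathfrak p^\epsilon$ has an $S^{-\infty}$-symbol with $\mathfrak H^\epsilon\mathfrak p^\epsilon=\mathfrak H^\epsilon$, also $\mathfrak H^\epsilon-\mathfrak H^\epsilon\widetilde{\mathfrak p}^\epsilon_N=\mathfrak H^\epsilon(\mathfrak p^\epsilon-\widetilde{\mathfrak p}^\epsilon_N)=-\epsilon^N\mathfrak{Op}^\epsilon(S_{\mathfrak H^\epsilon}\sharp^\epsilon G^\epsilon_N)$; both symbols being bounded in $S^0_0(\X)$, the $\mathbb B(L^2(\X))$-bound recalled after Corollary~\ref{L.II.1.0} gives the $C_N(h)\epsilon^N$ estimates. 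For point~(4): $[H^\epsilon,\mathfrak p^\epsilon]=[\mathfrak H^\epsilon,\mathfrak p^\epsilon]=0$, so $[H^\epsilon,\widetilde{\mathfrak p}^\epsilon_N]=[H^\epsilon,\widetilde{\mathfrak p}^\epsilon_N-\mathfrak p^\epsilon]$ and $[\mathfrak H^\epsilon,\widetilde{\mathfrak p}^\epsilon_N]=[\mathfrak H^\epsilon,\widetilde{\mathfrak p}^\epsilon_N-\mathfrak p^\epsilon]$, each a difference of two products of the type just estimated, hence $O(\epsilon^N)$.

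\textbf{The main obstacle.} The algebra above (closed-form projection, commutators, composition with $H^\epsilon$) is routine; the real work is to make every estimate \emph{uniform in $\epsilon\in[0,\epsilon_0]$}, and in particular to show that the $S^{-\infty}$ memberships are quantitative — that the families $\rho^\epsilon_N,\ \tau^\epsilon_N,\ G^\epsilon_N,\ R_N(h)$ lie in \emph{bounded} subsets of the relevant symbol spaces as $\epsilon\to0$, not merely in those spaces for each fixed $\epsilon$ — since this is exactly what forces the remainders in (1), (3)--(5) to be genuinely of order $\epsilon^N$. Concretely one must re-run the truncated resolvent expansion, the $\tfrac12$-separation of $\sigma(\mathfrak q^\epsilon_N)$, the Moyal binomial series defining $\widetilde{\mathfrak p}^\epsilon_N$, and the ``negative order $+$ $\sharp^\epsilon$-idempotent $\Rightarrow S^{-\infty}$'' bootstrap, keeping all constants tied to the $\epsilon$-uniform inputs: Proposition~\ref{rez-dev}, the $\epsilon$-uniform bilinear continuity of $\sharp^\epsilon$ for $B_\epsilon=\epsilon B^0_\epsilon$ with $\{B^0_\epsilon\}$ bounded in $BC^\infty$ (Corollary~\ref{L.II.1.0}), and the $\epsilon$-uniform comparability of the $C^*$-norm $\|\cdot\|_{W,\epsilon}$ from \eqref{iulie4} with the Fréchet topologies of the Hörmander classes.
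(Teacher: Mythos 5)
Your construction follows the same skeleton as the paper: truncate the resolvent expansion of Proposition \ref{rez-dev}, integrate over $\mathcal{C}$ to get an almost-projection $\mathfrak{q}^\epsilon_N$ with $\|(\mathfrak{q}^\epsilon_N)^2-\mathfrak{q}^\epsilon_N\|\leq C\epsilon^N$, promote it to an exact projection (your closed formula $\tfrac12+(\mathfrak{q}^\epsilon_N-\tfrac12)(\mathbf{1}+4((\mathfrak{q}^\epsilon_N)^2-\mathfrak{q}^\epsilon_N))^{-1/2}$ is equivalent to the paper's Riesz integral combined with the spectral splitting of \cite{Ne93}), and deduce (1)--(2). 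One caveat already here: you assert that taking $\mathcal{C}$ symmetric about the real axis makes $\mathfrak{q}^\epsilon_N$ self-adjoint. This is in fact true, but it is not automatic: it requires the term-by-term identity $\mathfrak{Op}^\epsilon(r_m(h;\epsilon,\z))^*=\mathfrak{Op}^\epsilon(r_m(h;\epsilon,\bar{\z}))$, which rests on the operator identity $T(\z)^*R^0(\bar{\z})=R^0(\bar{\z})T(\bar{\z})$ for $T(\z)=\mathfrak{Op}^\epsilon\big(r_\epsilon(h,r^0_\z(h))\big)$, $R^0(\z)=\mathfrak{Op}^\epsilon(r^0_\z(h))$; without self-adjointness your formula produces only an idempotent, not an orthogonal projection. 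The paper sidesteps this entirely by replacing the truncated symbol with its real part $\overset{\circ}{S}{}^{(N)}_{\mathfrak{p}^\epsilon}$ before quantizing; either route is fine, but the step needs an argument, not an assertion.

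The genuine gap is in your treatment of points (3)--(5) (and the refined symbol form of (1)). Since $H^\epsilon$ is unbounded, the operator-norm bound $\|\widetilde{\mathfrak{p}}^\epsilon_N-\mathfrak{p}^\epsilon\|\leq C\epsilon^N$ from (2) cannot simply be multiplied by $H^\epsilon$; what is needed is the symbol-level estimate $\|h\sharp^\epsilon(S_{\widetilde{\mathfrak{p}}^\epsilon_N}-S_{\mathfrak{p}^\epsilon})\|_{W,\epsilon}\leq C\epsilon^N$, i.e.\ quantitative control of $\widetilde{\mathfrak{p}}^\epsilon_N-\mathfrak{q}^\epsilon_N$ \emph{inside the symbol algebra}. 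Your two justifications do not supply it. First, ``both $\mathfrak{p}^\epsilon$ and $\widetilde{\mathfrak{p}}^\epsilon_N$ have symbols uniformly in $S^{-\infty}(\X)$, hence their difference is $\epsilon^N\mathfrak{Op}^\epsilon(G^\epsilon_N)$ with $G^\epsilon_N$ bounded in $S^{-\infty}(\X)$'' is a non sequitur: uniform membership of the two symbols in $S^{-\infty}(\X)$ gives no smallness whatsoever of their difference in any symbol seminorm -- at that stage only the operator-norm closeness is known. Second, the binomial series for $(\mathbf{1}+4\epsilon^N\rho^\epsilon_N)^{-1/2}$ cannot be summed in a fixed H\"ormander class by ``the uniform bilinear continuity of $\sharp^\epsilon$ from Corollary \ref{L.II.1.0}'': iterating that continuity estimate escalates the seminorm indices at each step, so no geometric bound in a fixed seminorm results. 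What is really needed is the spectral-invariance machinery of \cite{IMP2} (magnetic Moyal inverses $(\cdot)^-_\epsilon$ staying in the symbol classes, and the commutator characterization of their Fr\'echet topology), and this is exactly how the paper closes the argument: it writes $S_{\widetilde{\mathfrak{p}}^\epsilon_N}$ as a Riesz integral of $(S^{(N-1)}_{\mathfrak{p}^\epsilon}-\z)^-_\epsilon$ in the magnetic Moyal algebra, inserts the resolvent identity, and invokes Remark \ref{rem-est-rest}(2), namely $\|h\sharp^\epsilon\widetilde{r^\epsilon}_{\z,n}\|_{W,\epsilon}\leq C\epsilon^{n+1}$, to obtain \eqref{L-p-epsilon-m} and hence (3)--(5). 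Your ``main obstacle'' paragraph names this uniformity problem correctly, but it is left open, so as written the proposal establishes (1)--(2) (modulo the self-adjointness point above) and not (3)--(5).
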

\begin{proof}
Using once again \eqref{iulie6} and the development of $r^\epsilon_\z(h)$ given in the second point of 
 Proposition \ref{rez-dev} we can write
\beq\label{p-epsilon-0}
S_{\mathfrak{p}^\epsilon}\ =\
\underset{m\in\mathbb{N}}{\sum}\epsilon^mS_{\mathfrak{p}^\epsilon,m},\qquad
S_{\mathfrak{p}^\epsilon,m}:=\frac{i}{2\pi
}\int_{\mathcal{C}}r_m(h,\epsilon,\z)\,d\z.
\eeq
We define for any $N\in\mathbb{N}^*$:
\beq
S_{\mathfrak{p}^\epsilon}^{(N)}:=-\frac{1}{2\pi
i}\int_{\mathcal{C}}r^\epsilon_{\z,N}(h)\,d\z,\qquad
Z^{(N)}_\epsilon:=S_{\mathfrak{p}^\epsilon}-S_{\mathfrak{p}^\epsilon}^{(N)}
\eeq
\beq
\overset{\circ}{S}{}_{\mathfrak{p}^\epsilon}^{(N)}:=(1/2)\Big(S_{\mathfrak{p}
^\epsilon}^{(N)}+\overline{S_{\mathfrak{p}^\epsilon}^{(N)}}\Big)\qquad
\overset{\circ}{Z}{}^{(N)}_\epsilon:=(1/2)\Big(Z^{(N)}_\epsilon+\overline{Z^{(N)
}_\epsilon}\Big)=S_{\mathfrak{p}^\epsilon}-\overset{\circ}{S}{}_{\mathfrak
{p}^\epsilon}^{(N)}.
\eeq
Then
$$
Z^{(N)}_\epsilon=\frac{i}{2\pi
}\int_{\mathcal{C}}\widetilde{r^\epsilon}_{\z,N}(h)\,d\z\in
S^{-m}_1(\X),\qquad\left\|\mathfrak{Op}^\epsilon\big(Z^{(N)}
_\epsilon\big)\right\|\leq C_N\epsilon^{N+1}.
$$

We define
\beq
\mathfrak{q}^\epsilon_N\ :=\
\mathfrak{Op}^\epsilon\big(\overset{\circ}{S}{}_{\mathfrak{p}^\epsilon}^{(N)}
\big),
\eeq
and notice that it is a self-adjoint operator and 
\begin{equation*}
\big(\overset{\circ}{S}{}_{\mathfrak{p}^\epsilon}^{(N)}\big)^{
\sharp^\epsilon2}\,-\,\overset{\circ}{S}{}_{\mathfrak{p}^\epsilon}^{(N)}
\,=\,\big(S_{\mathfrak{p}^\epsilon}-Z^{(N)}_\epsilon\big)^{
\sharp^\epsilon2}\,-\,\big(S_{\mathfrak{p}^\epsilon}-Z^{(N)}_\epsilon\big)\,=\,
Z^{(N)}_\epsilon\,-\,\big(Z^{(N)}_\epsilon
\sharp^\epsilon S_{\mathfrak{p}^\epsilon}+
S_{\mathfrak{p}^\epsilon}\sharp^\epsilon Z^{(N)}_\epsilon\big)\,+\,\big(
Z^{(N)}_\epsilon\big)^{\sharp^\epsilon2}
\end{equation*}
implying the estimate
$\left\|(\mathfrak{q}^\epsilon_N)^2-\mathfrak{q}
^\epsilon_N\right\|\leq C\epsilon^{N+1}$.
Following the
procedure in \cite{Ne93} (Proposition 3) we notice that this estimation
implies that there exists $\epsilon_1\in(0,\epsilon_0]$ small enough
such that $\sigma(\mathfrak{q}^\epsilon)=I_0\cup I_1$ where
$I_0\subset[-\epsilon^{N+1},\epsilon^{N+1}]$ and
$I_1\subset[1-\epsilon^{N+1},1+\epsilon^{N+1}]$ for any
$\epsilon\in[0,\epsilon_1]$. Thus
$(\epsilon^{N+1},1-\epsilon^{N+1})\notin\sigma(\mathfrak{q}^\epsilon_N)$ for any
$\epsilon\in[0,\epsilon_1]$. We can thus find a smooth contour
$\mathcal{C}_1\subset\mathbb{C}$ such that $[1-\epsilon^{N+1},1+\epsilon^{N+1}]$
be in in the interior region and $[-\epsilon^{N+1},\epsilon^{N+1}]$ in its
exterior region, and define
\beq\label{def-p-epsilon-m}
\widetilde{\mathfrak{p}}^\epsilon_N\,:=\,\frac{i}{2\pi
}\int_{\mathcal{C}_1}\big(\mathfrak{q}^\epsilon_{N-1}-
\z\big)^{-1}d\z.
\eeq
By definition, $\widetilde{\mathfrak{p}}^\epsilon_N$ is a self-adjoint projection
that commutes with $\mathfrak{q}^\epsilon_{N-1}$;
moreover it is equal to $E_{I_1}(\mathfrak{q}^\epsilon_{N-1})$. This means
that
$$
\widetilde{\mathfrak{p}}^\epsilon_N-\mathfrak{q}^\epsilon_{N-1}\,=\,\widetilde{
\mathfrak{p}}^\epsilon_N-\mathfrak{q}^\epsilon_{N-1}\big(\widetilde{\mathfrak{p}
}^\epsilon_N+(\bb1-\widetilde{\mathfrak{p}}^\epsilon_N)\big)\,=\,
(\bb1-\mathfrak{q}^\epsilon_{N-1})\widetilde{\mathfrak{p}}^\epsilon_N\,+\,
\mathfrak{q}^\epsilon_{N-1} (\bb1-\widetilde{\mathfrak{p}}^\epsilon_N)
$$ 
satisfies the estimation
$\|\widetilde{\mathfrak{p}}^\epsilon_N-\mathfrak{q}^\epsilon_{N-1}\|\leq
C\epsilon^{N}$.

From \eqref{def-p-epsilon-m} it follows that
\begin{align*}
S_{\widetilde{\mathfrak{p}}^\epsilon_N}\,&=\,-\frac{1}{2\pi
i}\int_{\mathcal{C}_1}\big(S^{(N-1)}_{\mathfrak{p}^\epsilon}-
\z\big)^{-}_\epsilon d\z\\
&=-\,\frac{1}{2\pi
i}\int_{\mathcal{C}_1}\big(S_{\mathfrak{p}^\epsilon}-
\z\big)^{-}_\epsilon d\z\,+\,\frac{1}{2\pi
i}\int_{\mathcal{C}_1}\big(S^{(N-1)}_{\mathfrak{p}^\epsilon}-
\z\big)^{-}_\epsilon \sharp^\epsilon\big(S^{(N-1)}_{\mathfrak{p}^\epsilon}-
S_{\mathfrak{p}^\epsilon}\big)\sharp^\epsilon\big(S_{\mathfrak{p}^\epsilon}-
\z\big)^{-}_\epsilon d\z\\
&=\,S_{\mathfrak{p}^\epsilon}\,+\,\frac{1}{2\pi
i}\int_{\mathcal{C}_1}\big(S^{(N-1)}_{\mathfrak{p}^\epsilon}-
\z\big)^{-}_\epsilon \sharp^\epsilon\left(\frac{1}{2\pi
i}\int_{\mathcal{C}}\widetilde{r^\epsilon}_{\z',N-1}(h)\,
d\z'\right)\sharp^\epsilon
\big(S_{\mathfrak{p}^\epsilon}-\z\big)^{-}_\epsilon d\z\\
&=\,S_{\mathfrak{p}^\epsilon}\,+\,\frac{1}{2\pi
i}\int_{\mathcal{C}_1}\big(S_{\mathfrak{p}^\epsilon}-\z\big)^{-}_\epsilon 
\sharp^\epsilon\left(\frac{1}{2\pi
i}\int_{\mathcal{C}}\widetilde{r^\epsilon}_{\z',N-1}(h)\,
d\z'\right)\sharp^\epsilon\big(S^{(N-1)}_{\mathfrak{p}^\epsilon}-
\z\big)^{-}_\epsilon d\z.
\end{align*}
Using the second point in Remark \ref{rem-est-rest} we easily obtain
the following stronger result:
\beq\label{L-p-epsilon-m}
\exists\epsilon_0>0\,\forall N\in\mathbb{N}^*\,\exists C_N<\infty\qquad
\left\|h\sharp^\epsilon\big(S_{\widetilde{\mathfrak{p}}^\epsilon_N}-S_{\mathfrak
{p}^\epsilon}\big)\right\|_{W,\epsilon}\ \leq\
C_N\epsilon^{N},\qquad\forall\epsilon\in[0,\epsilon_0).
\eeq
We also note that
$
\big[h\,,
\,S_{\widetilde{\mathfrak{p}}^\epsilon_N}\big]_\epsilon=h\sharp^\epsilon\big(S_{\widetilde{\mathfrak{p}}^\epsilon_N}-S_{\mathfrak{p}
^\epsilon}
\big)-\big(S_{\widetilde{\mathfrak{p}}^\epsilon_N}-S_{\mathfrak{p}^\epsilon}
\big)\sharp^\epsilon h.
$
From the bounds:
\beq
\left\|S_{\widetilde{\mathfrak{p}}^\epsilon_N}\,-\,S_{\mathfrak{p}_\epsilon}^{
(N-1)}\right\|_{W,\epsilon}\ =\
\left\|\widetilde{\mathfrak{p}}^\epsilon_N\,-\,\mathfrak{q}^\epsilon_{
N-1}\right\|\ \leq C\epsilon^{N}
\eeq
we conclude that the following estimates hold:
\beq
\left\|h\sharp^\epsilon\big(S_{\widetilde{\mathfrak{p}}^\epsilon_N}-S_{\mathfrak
{p}^\epsilon
}\big)\right\|_{W,\epsilon}\ \leq\
C_N\epsilon^{N},\qquad\forall\epsilon\in[0,\epsilon_0).
\eeq
Since
\beq
\left\|\mathfrak{q}^\epsilon_N\,-\,\mathfrak{p}^\epsilon\right\|\ \leq\ \frac{1}{2\pi
}\int_{\mathcal{C}}\|\widetilde{r^\epsilon}_{\z,N}\|_{W,\epsilon}\,|d\z|
 \leq\ C\epsilon^{N+1},
\eeq
we can write:
\beq\label{iulie7}
\left\|S_{\widetilde{\mathfrak{p}}^\epsilon_N}-S_{\mathfrak{p}^\epsilon}
\right\|_{W,\epsilon}\
\leq\ C\epsilon^{N},  \quad \left\|h\sharp^\epsilon\big(S_{\widetilde{\mathfrak{p}}^\epsilon_N}-S_{\mathfrak
{p}^\epsilon}\big)\right\|_{W,\epsilon}\ \leq\
C\epsilon^{N}.
\eeq
Using the magnetic symbolic calculus \cite{IMP1} we end the proof of the
Proposition.
\end{proof}

\vspace{0.5cm}

\begin{corollary}\label{PO-1-ham}
Under Hypothesis \ref{is-sp-band} and \ref{Hyp-V-magn}, with the above
notations,  for any $N\in\mathbb{N}^*$ there exists $\epsilon_0>0$ such that for any
$\epsilon\in[0,\epsilon_0]$  we have:
\begin{align}\label{T.2}
e^{-itH^\epsilon}E_{I}(H^\epsilon)\,&=\,e^{-it\mathfrak{H}
^\epsilon}\widetilde{\mathfrak{p}}
^\epsilon_N\,+\,\epsilon^N{\mathfrak{R}'}_N^\epsilon(h)\,
\end{align}
with 
$\|{\widetilde{\mathfrak{R}'}_N}^\epsilon(h)\|\leq C_N |t|$ for some $C_N<\infty$ and any
$\epsilon\in[0,\epsilon_0]$.
\end{corollary}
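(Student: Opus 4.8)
The plan is to deduce this directly from Proposition \ref{PO-1-proj}, using one elementary remark about $\mathfrak{H}^\epsilon$. Recall $\mathfrak{p}^\epsilon:=E_I(H^\epsilon)$ and $\mathfrak{H}^\epsilon=H^\epsilon\mathfrak{p}^\epsilon$; since $\mathfrak{p}^\epsilon$ commutes with $H^\epsilon$ and its range is contained in $\mathcal{D}(H^\epsilon)$ (it is the spectral projection of a bounded interval), $\mathfrak{H}^\epsilon$ is an everywhere defined, bounded, self-adjoint operator which reduces along $L^2(\X)=\mathrm{Ran}\,\mathfrak{p}^\epsilon\oplus\mathrm{Ran}\,(\bb1-\mathfrak{p}^\epsilon)$, acting as $H^\epsilon$ on the first summand and as $0$ on the second. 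Hence $e^{-it\mathfrak{H}^\epsilon}$ is unitary, it commutes with $\mathfrak{p}^\epsilon$, and one has the \emph{exact} identity $e^{-itH^\epsilon}\mathfrak{p}^\epsilon=e^{-it\mathfrak{H}^\epsilon}\mathfrak{p}^\epsilon$ for all $t\in\mathbb{R}$. Subtracting $e^{-it\mathfrak{H}^\epsilon}\widetilde{\mathfrak{p}}^\epsilon_N$ we get
$$
e^{-itH^\epsilon}E_I(H^\epsilon)-e^{-it\mathfrak{H}^\epsilon}\widetilde{\mathfrak{p}}^\epsilon_N\ =\ e^{-it\mathfrak{H}^\epsilon}\bigl(\mathfrak{p}^\epsilon-\widetilde{\mathfrak{p}}^\epsilon_N\bigr),
$$
so by the unitarity of $e^{-it\mathfrak{H}^\epsilon}$ and point (2) of Proposition \ref{PO-1-proj} its norm is $\le C_N(h)\epsilon^N$ for $\epsilon\in[0,\epsilon_0]$ with $\epsilon_0$ small enough. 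Setting ${\mathfrak{R}'}^\epsilon_N(h):=\epsilon^{-N}e^{-it\mathfrak{H}^\epsilon}(\mathfrak{p}^\epsilon-\widetilde{\mathfrak{p}}^\epsilon_N)$ then proves \eqref{T.2}, in fact with a bound uniform in $t$, which is at least as strong as the stated $C_N|t|$ estimate.

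If one instead wants to produce the $\epsilon^N|t|$ form directly — the more robust route, which avoids the exact spectral identity and adapts to settings where $\widetilde{\mathfrak{p}}^\epsilon_N$ is only an approximate spectral projection — one compares the two one-parameter groups by Duhamel,
$$
e^{-itH^\epsilon}\widetilde{\mathfrak{p}}^\epsilon_N-e^{-it\mathfrak{H}^\epsilon}\widetilde{\mathfrak{p}}^\epsilon_N\ =\ -i\int_0^t e^{-isH^\epsilon}\bigl(H^\epsilon-\mathfrak{H}^\epsilon\bigr)e^{-i(t-s)\mathfrak{H}^\epsilon}\widetilde{\mathfrak{p}}^\epsilon_N\,ds .
$$
Since $(H^\epsilon-\mathfrak{H}^\epsilon)\mathfrak{p}^\epsilon=H^\epsilon\mathfrak{p}^\epsilon-\mathfrak{H}^\epsilon\mathfrak{p}^\epsilon=0$ and $e^{-i(t-s)\mathfrak{H}^\epsilon}$ commutes with $\mathfrak{p}^\epsilon$, the integrand equals $(H^\epsilon-\mathfrak{H}^\epsilon)e^{-i(t-s)\mathfrak{H}^\epsilon}(\bb1-\mathfrak{p}^\epsilon)\widetilde{\mathfrak{p}}^\epsilon_N$, and here $(\bb1-\mathfrak{p}^\epsilon)\widetilde{\mathfrak{p}}^\epsilon_N=-(\mathfrak{p}^\epsilon-\widetilde{\mathfrak{p}}^\epsilon_N)\widetilde{\mathfrak{p}}^\epsilon_N$ has norm $O(\epsilon^N)$ by Proposition \ref{PO-1-proj}(2), while points (3) and (5) of that proposition show that $H^\epsilon\widetilde{\mathfrak{p}}^\epsilon_N$ is a \emph{bounded} operator lying within $\epsilon^N$ of $\mathfrak{H}^\epsilon$ in norm. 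Combining these, the integrand is bounded in $\mathbb{B}(L^2(\X))$ by $C_N(h)\epsilon^N$ uniformly in $s$, so the integral is $\le C_N(h)|t|\epsilon^N$; adding the $t$-independent contribution $\|e^{-itH^\epsilon}(\mathfrak{p}^\epsilon-\widetilde{\mathfrak{p}}^\epsilon_N)\|\le C_N(h)\epsilon^N$ yields \eqref{T.2} in the stated form.

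The only point requiring care in this second route is the unboundedness of $H^\epsilon$: the formal Duhamel differentiation must be justified by checking that the relevant intermediate vectors stay in $\mathcal{D}(H^\epsilon)$ and — crucially — that $(H^\epsilon-\mathfrak{H}^\epsilon)$ composed with the a priori merely bounded operator $\widetilde{\mathfrak{p}}^\epsilon_N$ is again bounded with norm $O(\epsilon^N)$, which is precisely the content of the identity $H^\epsilon\widetilde{\mathfrak{p}}^\epsilon_N=\mathfrak{H}^\epsilon+\epsilon^N\mathfrak{Op}^\epsilon(R_N(h))$ with $R_N(h)\in S^{-\infty}(\X)$ from Proposition \ref{PO-1-proj}(5). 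Since that proposition is already available, I do not expect any serious obstacle here: the real work has been done, and this corollary is essentially a bookkeeping consequence — the first (clean) argument already settles it in a couple of lines, and the Duhamel variant is only needed if one insists on recording the estimate in the propagated-in-$t$ form.
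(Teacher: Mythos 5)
Your proposal is correct, and your first argument takes a genuinely different (and shorter) route than the paper. The paper's proof is essentially your second variant: it differentiates $\Psi(t)=e^{-itH^\epsilon}\widetilde{\mathfrak{p}}^\epsilon_N e^{it\mathfrak{H}^\epsilon}$ and $\Phi(t)=e^{-it\mathfrak{H}^\epsilon}\widetilde{\mathfrak{p}}^\epsilon_N-\widetilde{\mathfrak{p}}^\epsilon_N e^{-it\mathfrak{H}^\epsilon}$, uses the bounds $\|H^\epsilon\widetilde{\mathfrak{p}}^\epsilon_N-\widetilde{\mathfrak{p}}^\epsilon_N\mathfrak{H}^\epsilon\|\leq C_N\epsilon^N$ and $\|[\mathfrak{H}^\epsilon,\widetilde{\mathfrak{p}}^\epsilon_N]\|\leq C_N\epsilon^N$ from Proposition \ref{PO-1-proj}, and finally adds the $t$-independent error $\|e^{-itH^\epsilon}(\mathfrak{p}^\epsilon-\widetilde{\mathfrak{p}}^\epsilon_N)\|\leq C_N\epsilon^N$; i.e.\ a Duhamel comparison of the two propagators, with point (5) guaranteeing that $H^\epsilon\widetilde{\mathfrak{p}}^\epsilon_N$ is bounded, exactly as you anticipate. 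Your first argument instead exploits that $\mathfrak{H}^\epsilon=H^\epsilon E_I(H^\epsilon)$ is itself a function of $H^\epsilon$, so $e^{-it\mathfrak{H}^\epsilon}E_I(H^\epsilon)=e^{-itH^\epsilon}E_I(H^\epsilon)$ holds exactly and the whole corollary collapses to $e^{-it\mathfrak{H}^\epsilon}(\mathfrak{p}^\epsilon-\widetilde{\mathfrak{p}}^\epsilon_N)$, which Proposition \ref{PO-1-proj}(2) bounds by $C_N\epsilon^N$ uniformly in $t$. That is cleaner and gives a $t$-uniform remainder; what the Duhamel route buys is robustness, since it never uses that the reference dynamics commutes exactly with $H^\epsilon$ and therefore survives if $e^{-it\mathfrak{H}^\epsilon}$ is replaced by the flow of an approximant such as $\mathfrak{Op}^\epsilon(S_\mathfrak{H})$ or $H^\epsilon\widetilde{\mathfrak{p}}^\epsilon_N$.

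Two minor remarks. First, no proof can give the bound literally in the form $C_N|t|\epsilon^N$: at $t=0$ the difference in \eqref{T.2} is $\mathfrak{p}^\epsilon-\widetilde{\mathfrak{p}}^\epsilon_N\neq0$, and the paper's own argument actually yields $C_N(1+|t|)\epsilon^N$; your uniform-in-$t$ bound is consistent with this and stronger for large $|t|$, although your phrase ``at least as strong as $C_N|t|$'' is not literally true for small $t$ — the pure $|t|$ factor in the statement is a slip. Second, in your Duhamel variant the estimate of the integrand is most directly closed by writing $(H^\epsilon-\mathfrak{H}^\epsilon)e^{-i(t-s)\mathfrak{H}^\epsilon}\widetilde{\mathfrak{p}}^\epsilon_N=(H^\epsilon-\mathfrak{H}^\epsilon)(\bb1-\mathfrak{p}^\epsilon)e^{-i(t-s)\mathfrak{H}^\epsilon}\widetilde{\mathfrak{p}}^\epsilon_N=(H^\epsilon-\mathfrak{H}^\epsilon)\widetilde{\mathfrak{p}}^\epsilon_N$, using $(H^\epsilon-\mathfrak{H}^\epsilon)\mathfrak{p}^\epsilon=0$ and $e^{-i\tau\mathfrak{H}^\epsilon}(\bb1-\mathfrak{p}^\epsilon)=\bb1-\mathfrak{p}^\epsilon$, and then $\|(H^\epsilon-\mathfrak{H}^\epsilon)\widetilde{\mathfrak{p}}^\epsilon_N\|\leq\|H^\epsilon\widetilde{\mathfrak{p}}^\epsilon_N-\mathfrak{H}^\epsilon\|+\|\mathfrak{H}^\epsilon(\bb1-\widetilde{\mathfrak{p}}^\epsilon_N)\|\leq C_N\epsilon^N$ by points (5) and (3); bounding only $(\bb1-\mathfrak{p}^\epsilon)\widetilde{\mathfrak{p}}^\epsilon_N$ in norm is not by itself sufficient because the unbounded factor $H^\epsilon-\mathfrak{H}^\epsilon$ still stands to its left. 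You cite the right ingredients, so this is a matter of wording rather than a gap, and in any case your first argument already settles the corollary.
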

\begin{proof}
In order to prove \eqref{T.2} let us define the application:
$$
\mathbb{R}\ni t\mapsto\Psi(t):=e^{-itH^\epsilon}\widetilde{\mathfrak{p}}
^\epsilon_N
e^{it\mathfrak{H}^\epsilon}\in\mathbb{B}\big(L^2(\X)\big)
$$
which due to \eqref{iulie7} is norm differentiable for all $t\in\mathbb{R}$ and has the derivative:
$$
\dot{\Psi}(t)=-ie^{-itH^\epsilon}\left(H^\epsilon\widetilde{\mathfrak{p}}
^\epsilon_N-
\widetilde{\mathfrak{p}}^\epsilon_N\mathfrak{H}^\epsilon
\right)e^{it\mathfrak{H}^\epsilon}\in\mathbb{B}\big(L^2(\X)\big).
$$
From \eqref{iulie7} we get
$
\left\|H^\epsilon\widetilde{\mathfrak{p}}^\epsilon_N-\widetilde{\mathfrak{p}}
^\epsilon_N\mathfrak{H}
^\epsilon\right\|_{\mathbb{B}(L^2(\X))}
\leq C_N\epsilon^N$ and also
$$
\left\|e^{-itH^\epsilon}\widetilde{\mathfrak{p}}^\epsilon_N-\widetilde{\mathfrak
{p}}^\epsilon_N
e^{-it\mathfrak{H}^\epsilon}\right\|_{\mathbb{B}(L^2(\X))}\leq\left |\int_0^{t}\left\|
\dot{\Psi}(s)\right\|_{\mathbb{B}(L^2(\X))}ds\right |\leq C_N(h)|t|\epsilon^N.
$$
If we define
$$
\Phi(t):=e^{-it\mathfrak{H}^\epsilon}\widetilde{\mathfrak{p}}^\epsilon_N-
\widetilde{\mathfrak{p}}^\epsilon_N e^{-it\mathfrak{H}^\epsilon}
$$
and an argument similar to the one above, using now the estimation for
$\big[\mathfrak{H}^\epsilon\,,\,\widetilde{\mathfrak{p}}^\epsilon_N\big]$ allows
us to conclude that
$$
\left\|e^{-it\mathfrak{H}^\epsilon}\widetilde{\mathfrak{p}}
^\epsilon_N-\widetilde{\mathfrak{p}}^\epsilon_N
e^{-it\mathfrak{H}^\epsilon}\right\|_{\mathbb{B}(L^2(\X))}\leq
C_N(h)|t|\epsilon^N.
$$
Hence:
$$
\left\|e^{-itH^\epsilon}\widetilde{\mathfrak{p}}^\epsilon_N-e^{-it\mathfrak{H}^\epsilon}\widetilde{\mathfrak
{p}}^\epsilon_N
\right\|_{\mathbb{B}(L^2(\X))}\leq C_N(h)|t|\epsilon^N,
$$
which together with $|| \widetilde{\mathfrak
{p}}^\epsilon_N-\mathfrak
{p}^\epsilon||\leq C_N(h)\epsilon^N$ it ends the proof of the Proposition.
\end{proof}

\subsection{Proof of Theorem \ref{teoremadoi}}

Having the composite Wannier basis at our disposal (see \eqref{Wannier-def}), we shall 
construct a {\it `magnetic' Wannier basis} for the magnetic band projection
$E_I(H^\epsilon)$ as in  \cite{CHN}, based on the strategy in \cite{Ne-RMP}. 

The integral kernel corresponding to
$E_I(H)$ may be written as
$$K(\mathfrak{p})(x,y)\ =\ \underset{\gamma\in\Gamma}{\sum}\left(\underset{j\in
J_I}{\sum}w_j(x-\gamma)\overline{w_j(y-\gamma)}\right)$$
where the series is absolutely convergent due to the spatial localization of the Wannier functions (see Proposition \ref{iulie30}).

\begin{definition}\label{iulie9}
We introduce the following modified band projection kernel
\beq\label{KepsilonB}
K_\epsilon(\mathfrak{p})(x,y)\ :=\ \underset{\gamma\in\Gamma}{\sum}\left(\underset{j\in
J_I}{\sum}\omega_\gamma^\epsilon(x,y)w_j(x-\gamma)\overline{w_j(y-\gamma)}\right)
,\qquad\omega_\gamma^\epsilon(x,y)\ :=\ e^{i\int_{<\gamma,x,y>}B_\epsilon}
\eeq
and the following bounded self-adjoint operators on $L^2(\X)$:
\begin{enumerate}
 \item
$\mathfrak{q}^\epsilon\,:=\,\mathcal{I}\text{\sf
nt}[\widetilde{\Lambda}^\epsilon K(\mathfrak{p})]=\mathfrak
{Op}^\epsilon\left(S_{\mathfrak{p}}\right)$,
\item $\widetilde{\mathfrak{q}}^\epsilon\,:=\mathcal{I}\text{\sf
nt}[\widetilde{\Lambda}^\epsilon K_\epsilon(\mathfrak{p})]$.
\end{enumerate}
\end{definition}
We remark that 
\begin{align}\label{iulie8}
\widetilde{\Lambda}^\epsilon(x,y)K_\epsilon(\mathfrak{p})(x,y)=
\underset{\gamma\in\Gamma}{\sum}\left(\underset{j\in
J_I}{\sum}\widetilde{\Lambda}^\epsilon(x,\gamma)w_j(x-\gamma)\overline{
\widetilde{\Lambda}^\epsilon(y,\gamma)w_j(y-\gamma)}\right).
\end{align}
\begin{definition}\label{modif-W-func}
Let us define the following {\it modified Wannier} functions
(similar to those used in
\cite{HS1,Ne-RMP, CHN} for the constant magnetic field case)
$$
\widetilde{\mathcal{W}}^\epsilon_{\gamma,j}(x)\ :=\
\widetilde{\Lambda}^\epsilon(x,\gamma)w_j(x-\gamma).
$$
\end{definition}

\begin{lemma}\label{band-est}
 For any $m\in\mathbb{N}$ there exists a constant $C_m<\infty$
such that
$$
\underset{(x,y)\in\X\times\X}{\sup}<x-y>^{m}
\left|K_\epsilon(\mathfrak{p})(x,y)\,-\,K(\mathfrak{p})(x,y)\right|\ \leq
C_m\epsilon.
$$
\end{lemma}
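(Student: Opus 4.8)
The plan is to estimate the difference kernel directly from the definitions. Writing out both kernels, the difference is
\[
K_\epsilon(\mathfrak{p})(x,y)-K(\mathfrak{p})(x,y)\ =\ \underset{\gamma\in\Gamma}{\sum}\ \underset{j\in J_I}{\sum}\ \big(\omega_\gamma^\epsilon(x,y)-1\big)\,w_j(x-\gamma)\overline{w_j(y-\gamma)},
\]
so the whole estimate reduces to controlling $|\omega_\gamma^\epsilon(x,y)-1|$ and summing against the localized products $|w_j(x-\gamma)||w_j(y-\gamma)|$. First I would invoke the estimate on the flux factor: since $\omega_\gamma^\epsilon(x,y)=\exp\{i\int_{<\gamma,x,y>}B_\epsilon\}$ and $B_\epsilon=\epsilon B^0_\epsilon$ with $B^0_\epsilon$ bounded uniformly in $\epsilon$ (Hypothesis \ref{Hyp-V-magn}), the bound in Remark \ref{Omega} (or its immediate variant for a flux through a triangle, $|\Omega^B-1|\le C\|B\|_\infty\cdot(\text{area})$) gives
\[
\big|\omega_\gamma^\epsilon(x,y)-1\big|\ \leq\ C\,\epsilon\,\|B^0_\epsilon\|_\infty\,\big|(x-\gamma)\wedge(y-\gamma)\big|\ \leq\ C'\,\epsilon\,|x-\gamma|\,|y-\gamma|,
\]
the area of the triangle $<\gamma,x,y>$ being controlled by the product of the side lengths $|x-\gamma|$ and $|y-\gamma|$.

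Next I would use the spatial localization of the Wannier functions from Proposition \ref{iulie30}: for every $p\ge 0$ there is $C_p$ with $|w_j(z)|\le C_p\langle z\rangle^{-p}$ for all $j\in J_I$. Combining this with the flux bound, each term of the series is dominated by
\[
C\,\epsilon\,|x-\gamma|\,|y-\gamma|\,\langle x-\gamma\rangle^{-p}\langle y-\gamma\rangle^{-p}\ \leq\ C\,\epsilon\,\langle x-\gamma\rangle^{-p+1}\langle y-\gamma\rangle^{-p+1}.
\]
To get the factor $\langle x-y\rangle^{m}$ out front I would use a Peetre-type inequality, $\langle x-y\rangle^{m}\le 2^{m/2}\langle x-\gamma\rangle^{m}\langle y-\gamma\rangle^{m}$, so that
\[
\langle x-y\rangle^{m}\big|K_\epsilon(\mathfrak{p})(x,y)-K(\mathfrak{p})(x,y)\big|\ \leq\ C\,\epsilon\,\underset{\gamma\in\Gamma}{\sum}\ \langle x-\gamma\rangle^{m+1-p}\langle y-\gamma\rangle^{m+1-p}.
\]
Choosing $p$ large enough (say $p>m+1+d$) makes each factor summable, and one more application of Peetre's inequality in the form $\langle x-\gamma\rangle^{-s}\langle y-\gamma\rangle^{-s}\le \langle x-y\rangle^{s}\langle x-\gamma\rangle^{-2s}$ (or simply bounding one of the two factors by its supremum over $\gamma$ and keeping the other for summability, uniformly in $x,y$ since $\Gamma$ is a lattice) yields a bound $C_m\epsilon$ uniform in $(x,y)\in\X\times\X$, which is the claim.

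The only genuinely delicate point is making the double sum over $\gamma$ converge \emph{uniformly} in both $x$ and $y$ simultaneously: a naive split that fixes one variable does not obviously control the other. The clean way around it is to keep both decaying factors $\langle x-\gamma\rangle^{-s}$ and $\langle y-\gamma\rangle^{-s}$ with $s$ large, bound their product by $\langle x-y\rangle^{s}\min\big(\langle x-\gamma\rangle^{-2s},\langle y-\gamma\rangle^{-2s}\big)$ is not needed — instead use $\langle x-\gamma\rangle^{-s}\le \langle x-y\rangle^{s}\langle y-\gamma\rangle^{-s}$ on one factor whenever the $\langle x-y\rangle^{m}$ prefactor has already been absorbed, leaving $\sum_{\gamma}\langle y-\gamma\rangle^{-2s+\cdots}$ which is a lattice sum bounded independently of $y$. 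Everything else is the routine bookkeeping of choosing the exponent $p$ in terms of $m$ and $d$, so I would state the choice once and let the estimate follow; I expect the flux bound and the Wannier decay, both already available, to do essentially all the work.
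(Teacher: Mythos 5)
Your argument is correct and follows essentially the same route as the paper's proof: the flux estimate of Remark \ref{Omega} with $B_\epsilon=\epsilon B^0_\epsilon$, the rapid decay of the Wannier functions from Proposition \ref{iulie30} to absorb the factor $|x-\gamma|\,|y-\gamma|$, a Peetre-type inequality to extract $\langle x-y\rangle^m$, and a lattice sum $\sum_{\gamma}\langle x-\gamma\rangle^{-p}\langle y-\gamma\rangle^{-p}$ with $p$ large, bounded uniformly in $(x,y)$. The point you flag as delicate is in fact routine (bound one negative-power factor by $1$ and sum the other over $\Gamma$, uniformly by periodicity), which is exactly how the paper concludes.
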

\begin{proof}
First we fix some $\gamma\in\Gamma$ and notice that for any $(N_1,N_2)\in\mathbb{N}^2$
there exists a constant $C<\infty$ such that
\beq
\underset{(x,y)\in\X\times\X}{\sup}<x-\gamma>^{N_1}<y-\gamma>^{N_2}
\left|\omega_\gamma^\epsilon(x,
y)-1\right||w_j(x-\gamma)||w_j(y-\gamma)|\ \leq C\epsilon.
\eeq
This allows us to conclude that for any $m\in\mathbb{N}$ and any
$p\in\mathbb{N}$
\begin{align*}
<x-y>^{m}
\left|K_\epsilon(\mathfrak{p})(x,y)\,-\,K(\mathfrak{p})(x,y)\right|\  
\leq\ 
\epsilon
C_{m,p}\underset{(x,y)\in\X\times\X}{\sup}\;\underset{\gamma\in\Gamma}{\sum}<x-\gamma>^{
-p}<y-\gamma>^{-p},
\end{align*}
which ends the proof.
\end{proof}

\vspace{0.5cm}

\begin{proposition}\label{tildaq-p}
There exists some $C>0$
such that
$$
\left\|\widetilde{\mathfrak{q}}^\epsilon\,-\,\mathfrak{p}^\epsilon\right\|\,
\leq\,\epsilon C,\qquad\forall\epsilon\in[0,\epsilon_0].
$$
\end{proposition}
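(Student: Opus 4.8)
The plan is to bound $\widetilde{\mathfrak{q}}^\epsilon-\mathfrak{p}^\epsilon$ by inserting the operator $\mathfrak{q}^\epsilon=\mathcal{I}\text{\sf nt}[\widetilde{\Lambda}^\epsilon K(\mathfrak{p})]=\mathfrak{Op}^\epsilon(S_\mathfrak{p})$ of Definition \ref{iulie9} as an intermediate term, and to estimate $\|\widetilde{\mathfrak{q}}^\epsilon-\mathfrak{q}^\epsilon\|$ and $\|\mathfrak{q}^\epsilon-\mathfrak{p}^\epsilon\|$ separately; each will turn out to be $O(\epsilon)$, and the triangle inequality then finishes the argument. Neither ingredient requires new analytic work: the first will follow from Lemma \ref{band-est}, the second from the resolvent expansion of Proposition \ref{rez-dev}.

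For the first estimate I would note that $\widetilde{\mathfrak{q}}^\epsilon-\mathfrak{q}^\epsilon$ is, by Definition \ref{iulie9} and \eqref{iulie8}, the integral operator with kernel $\widetilde{\Lambda}^\epsilon(x,y)\big(K_\epsilon(\mathfrak{p})(x,y)-K(\mathfrak{p})(x,y)\big)$. Since $|\widetilde{\Lambda}^\epsilon(x,y)|=1$, Lemma \ref{band-est} bounds this kernel pointwise by $C_m\,\epsilon\,\langle x-y\rangle^{-m}$ for every $m\in\mathbb{N}$. Choosing $m>d$, so that $\langle\cdot\rangle^{-m}$ is integrable on $\X$, and applying the Schur test then gives $\|\widetilde{\mathfrak{q}}^\epsilon-\mathfrak{q}^\epsilon\|\leq C\epsilon$, uniformly in $\epsilon\in[0,\epsilon_0]$.

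For the second estimate I would use that $S_\mathfrak{p}$ is the (real-valued) Weyl symbol of $\mathfrak{p}=E_I(H)=\tfrac{i}{2\pi}\int_\mathcal{C}(H-\z)^{-1}d\z$, hence $S_\mathfrak{p}=\tfrac{i}{2\pi}\int_\mathcal{C}r^0_\z(h)\,d\z$ and $\mathfrak{q}^\epsilon=\tfrac{i}{2\pi}\int_\mathcal{C}\mathfrak{Op}^\epsilon\big(r^0_\z(h)\big)\,d\z$, to be compared with $\mathfrak{p}^\epsilon=\tfrac{i}{2\pi}\int_\mathcal{C}\mathfrak{Op}^\epsilon\big(r^\epsilon_\z(h)\big)\,d\z$ from \eqref{iulie6}. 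By Proposition \ref{rez-dev}, together with the notation \eqref{rez-dev-1} and Remark \ref{rem-est-rest}, one has $r^\epsilon_\z(h)-r^0_\z(h)=\widetilde{r^\epsilon}_{\z,0}(h)$ with $\|\widetilde{r^\epsilon}_{\z,0}(h)\|_{W,\epsilon}\leq C\epsilon$, uniformly for $\z$ in a compact neighbourhood of $\mathcal{C}$; integrating over the compact contour $\mathcal{C}$ yields $\|\mathfrak{q}^\epsilon-\mathfrak{p}^\epsilon\|\leq C\epsilon$. (Equivalently, this is the $N=0$ instance of the bound $\|\mathfrak{q}^\epsilon_N-\mathfrak{p}^\epsilon\|\leq C\epsilon^{N+1}$ established inside the proof of Proposition \ref{PO-1-proj}, once one observes that the reality of $S_\mathfrak{p}$ forces $\mathfrak{q}^\epsilon=\mathfrak{q}^\epsilon_0$.) There is no genuine obstacle here: the substance is already contained in Lemma \ref{band-est} and Proposition \ref{rez-dev}; the only point deserving a moment's care is precisely this identification of $\mathfrak{q}^\epsilon$ with the leading term of the symbol expansion, together with the bookkeeping that the magnetic phase $\widetilde{\Lambda}^\epsilon$ does not spoil the Schur bound.
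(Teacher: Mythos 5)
Your proposal is correct and follows essentially the same route as the paper: the paper also inserts $\mathfrak{q}^\epsilon=\mathfrak{Op}^\epsilon(S_{\mathfrak{p}})$ as the intermediate operator, bounds $\|\widetilde{\mathfrak{q}}^\epsilon-\mathfrak{q}^\epsilon\|$ by Lemma \ref{band-est}, bounds $\|\mathfrak{q}^\epsilon-\mathfrak{p}^\epsilon\|$ by the leading-order resolvent expansion (quoted via Proposition \ref{PO-1-proj} with $N=1$, which is exactly the contour-integral estimate you carry out directly from Proposition \ref{rez-dev}), and concludes by the triangle inequality. Your identification $\mathfrak{q}^\epsilon=\mathfrak{q}^\epsilon_0$ via the reality of $S_{\mathfrak{p}}$ and the Schur-test step are just the details the paper leaves implicit.
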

\begin{proof}
If we replace $N=1$ in Proposition \ref{PO-1-proj} (1) and (2), and using Definition \ref{iulie9} we obtain
\beq
\|\mathfrak{p}^\epsilon-
\mathfrak{q}^\epsilon\|\leq C\; \epsilon.
\eeq
From our Lemma \ref{band-est} we get that
$\left\|\widetilde{\mathfrak{q}}^\epsilon\,-\,\mathfrak{q}^\epsilon\right\|\,
\leq\,\epsilon C$ and the triangle inequality finishes the proof.
\end{proof}

\vspace{0.5cm}

Since 
$\widetilde{\Lambda}^\epsilon(Q,\gamma)\tau_{-\gamma}$ is
unitary, for every fixed $\gamma\in\Gamma$ the set $\{\widetilde{W}^\epsilon_{\gamma,j}\}_{j\in J_I}$
 is orthonormal, but this is no longer true when
considering pairs
$(\widetilde{\mathcal{W}}^\epsilon_{\alpha,j},\widetilde{\mathcal{W}}^\epsilon_{\beta,k})$
with $\alpha\ne\beta$. We shall apply an orthogonalization procedure in order to obtain an orthonormal basis for the range of $\widetilde{\mathfrak{q}}^\epsilon$.
\begin{lemma}\label{Gramm-Wannier}
 With the above notations, by writing 
$\left\langle\widetilde{\mathcal{W}}^\epsilon_{\alpha,j},\widetilde{\mathcal{W}}
^\epsilon_{\beta,k}\right\rangle_{L^2(\X)}\,=\,\delta_{\alpha\beta}\delta_{jk}\,
+\,\epsilon\big[\mathbb{X}^\epsilon_{\alpha,\beta}\big]_{j,k}$,
then for any $m\in\mathbb{N}$ there exists $C_m>0$ such that  for any $\epsilon\in[0,\epsilon_0]$
$$|\alpha-\beta|^m\left|\big[\mathbb{X}^\epsilon_{\alpha,\beta}\big]_{j,k}
\right|\leq C_m.$$ 
\end{lemma}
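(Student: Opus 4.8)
### Proof plan for Lemma \ref{Gramm-Wannier}

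The plan is to reduce the estimate to a pointwise bound on the difference between the magnetic phases and $1$, then exploit the rapid spatial decay of the Wannier functions established in Proposition \ref{iulie30}. First I would write out the inner product explicitly: since $\widetilde{\mathcal{W}}^\epsilon_{\alpha,j}(x)=\widetilde{\Lambda}^\epsilon(x,\alpha)w_j(x-\alpha)$, we have
\beq
\left\langle\widetilde{\mathcal{W}}^\epsilon_{\alpha,j},\widetilde{\mathcal{W}}^\epsilon_{\beta,k}\right\rangle_{L^2(\X)}\ =\ \int_\X\overline{\widetilde{\Lambda}^\epsilon(x,\alpha)}\,\widetilde{\Lambda}^\epsilon(x,\beta)\,\overline{w_j(x-\alpha)}\,w_k(x-\beta)\,dx.
\eeq
The key observation is that $\overline{\widetilde{\Lambda}^\epsilon(x,\alpha)}\,\widetilde{\Lambda}^\epsilon(x,\beta)=\widetilde{\Lambda}^\epsilon(\alpha,x)\widetilde{\Lambda}^\epsilon(x,\beta)$, which differs from $\widetilde{\Lambda}^\epsilon(\alpha,\beta)$ precisely by the flux factor $\Omega^{B_\epsilon}(\alpha,\beta,x)=\exp\{-i\int_{<\alpha,\beta,x>}B_\epsilon\}$ of Remark \ref{Omega}. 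When $\alpha=\beta$ this flux is trivially $1$ and one recovers orthonormality of $\{w_j\}$, so the whole deviation $\epsilon[\mathbb{X}^\epsilon_{\alpha,\beta}]_{j,k}$ is carried by the term in which $\Omega^{B_\epsilon}(\alpha,\beta,x)-1$ appears (for $\alpha=\beta$, $j\ne k$ it vanishes, for $\alpha=\beta$, $j=k$ it is $\|w_j\|^2-1=0$).

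Next I would invoke the estimate from Remark \ref{Omega}, namely $|\Omega^{B_\epsilon}(\alpha,\beta,x)-1|\leq C\|B_\epsilon\|_\infty|(\beta-\alpha)\wedge(x-\alpha)|$, together with $\|B_\epsilon\|_\infty\leq\epsilon\,\|B^0_\epsilon\|_\infty\leq C\epsilon$ from Hypothesis \ref{Hyp-V-magn}. The wedge product is bounded by $|\alpha-\beta|\,|x-\alpha|$ and also, symmetrically, by $|\alpha-\beta|\,|x-\beta|$; more usefully one can bound it by $|\alpha-\beta|\cdot\min\{|x-\alpha|,|x-\beta|\}$, or simply by a product of low powers of $\langle x-\alpha\rangle$ and $\langle x-\beta\rangle$. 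Plugging this in,
\beq
\left|\big[\mathbb{X}^\epsilon_{\alpha,\beta}\big]_{j,k}\right|\ \leq\ C\,|\alpha-\beta|\int_\X |x-\alpha|\,|w_j(x-\alpha)|\,|w_k(x-\beta)|\,dx,
\eeq
and then I would multiply by $|\alpha-\beta|^m$ and use that $|\alpha-\beta|\leq\langle x-\alpha\rangle+\langle x-\beta\rangle$ to distribute the extra powers between the two factors. Since Proposition \ref{iulie30} gives $\langle x\rangle^p w_j(x)\in L^\infty(\X)$ (and in $L^2$) for every $p\geq 0$, the product $|x-\alpha|\,|w_j(x-\alpha)|\,|w_k(x-\beta)|$ decays faster than any polynomial in both $\langle x-\alpha\rangle$ and $\langle x-\beta\rangle$; choosing the decay exponents large enough relative to $m$ and the dimension $d$, the integral converges and is bounded uniformly in $\alpha,\beta$ (and in $\epsilon$). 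This yields $|\alpha-\beta|^m|[\mathbb{X}^\epsilon_{\alpha,\beta}]_{j,k}|\leq C_m$ as claimed.

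The main obstacle is purely bookkeeping: one must arrange the polynomial weights so that after extracting the factor $|\alpha-\beta|^{m+1}$ (the $+1$ coming from the linear growth in the flux estimate) there are still enough spare powers of $\langle x-\alpha\rangle^{-1}$ and $\langle x-\beta\rangle^{-1}$ to make the convolution-type integral $\int_\X\langle x-\alpha\rangle^{-p}\langle x-\beta\rangle^{-p}\,dx$ finite and $(\alpha-\beta)$-uniformly bounded, which requires $p>d$ and $p\geq m+1$. There is a minor subtlety in that the crude bound $|(\beta-\alpha)\wedge(x-\alpha)|\leq|\alpha-\beta||x-\alpha|$ places all the $|x-\alpha|$ growth on one factor; to symmetrize I would instead note $|(\beta-\alpha)\wedge(x-\alpha)|=|(\beta-\alpha)\wedge(x-\beta)|$ and take the geometric mean of the two bounds, or simply estimate $|x-\alpha|\leq\langle x-\alpha\rangle$ and absorb it into the (more than) polynomial decay of $w_j$. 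None of this is delicate; the localization of the Wannier functions does all the real work.
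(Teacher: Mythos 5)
Your proposal is correct and follows essentially the same route as the paper: factor the two phases into $\widetilde{\Lambda}^\epsilon(\alpha,\beta)$ times the flux factor $\Omega^\epsilon$, reduce the deviation to $\int(\Omega^\epsilon-1)\overline{w_j(x-\alpha)}w_k(x-\beta)\,dx$ via orthonormality of the non-magnetic Wannier basis, and then combine the estimate of Remark \ref{Omega} (giving the factor $\epsilon\,|\alpha-\beta|$ up to polynomial weights in $\langle x-\alpha\rangle$, $\langle x-\beta\rangle$) with the rapid decay of the $w_j$ from Proposition \ref{iulie30}. The bookkeeping with the weights $\langle x-\alpha\rangle^{m+1}\langle x-\beta\rangle^{m+1}$ is exactly what the paper does.
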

\begin{proof}
\beq
\left\langle\widetilde{\mathcal{W}}^\epsilon_{\alpha,j},\widetilde{\mathcal{W}}
^\epsilon_{\beta,k}\right\rangle=
\int_\X\widetilde {\Lambda}
^\epsilon(\alpha,x)\overline{w_j(x-\alpha)}
\,\widetilde {\Lambda}
^\epsilon(x,\beta)w_k(x-\beta)\,dx\,=\,\delta_{\alpha\beta}\delta_{jk}\,+
\eeq
$$
+\,\widetilde
{\Lambda}^\epsilon(\alpha,\beta)\int_X\Omega^\epsilon(x,\alpha,\beta)
\overline{w_j(x-\alpha)}w_k(x-\beta)\,dx
$$
where for any $m\in\mathbb{N}$
$$
|\alpha-\beta|^m\left|\int_X\Omega^\epsilon(x,\alpha,\beta)
\overline{w_j(x-\alpha)}w_k(x-\beta)\,dx\right|\leq
$$
\beq
\leq\epsilon\,
C_\epsilon\int_\X<x-\alpha>^{
m+1}\left|w_j(x-\alpha)\right|<x-\beta>^{m+1}\left|w_k(x
-\beta)\right|\,dx\leq\epsilon\,C'_{m,\epsilon}
\eeq
for some $C'_{m,\epsilon}>0$ uniformly bounded for $\epsilon\in[0,\epsilon_0]$.
\end{proof}

\vspace{0.5cm}

We shall follow \cite{Ne-RMP} replacing the one-dimensional situation considered
there with our $N$-dimensional situation. Our family of modified Wannier functions can be seen as elements in $l^\infty(\Gamma;L^2(\X;\mathbb{C}^N))$. 

Lemma \ref{Gramm-Wannier} can be restated as 
$
\mathbb{G}^\epsilon\ =\ \bb1\ +\
\epsilon\mathbb{X}^\epsilon$, where $\big[\mathbb{G}^\epsilon_{\alpha,\beta}\big]_{j,k}\,:=\,\left\langle\widetilde{
\mathcal{W}}^\epsilon_{\alpha,j},\widetilde{\mathcal{W}}
^\epsilon_{\beta,k}\right\rangle_{L^2(\X)}$ and 
$\mathbb{G}^\epsilon_{\alpha,\beta}\in\mathbb{B}(\mathbb{C}^N)$ for
$(\alpha,\beta)\in\Gamma\times\Gamma$.
Both families $\mathbb{G}^\epsilon$ and $\mathbb{X}^\epsilon$ define bounded
operators both on $l^2\big(\Gamma;\mathbb{C}^N\big)$ and on
$l^\infty\big(\Gamma;\mathbb{C}^N\big)$, uniformly for
$\epsilon\in[0,\epsilon_0]$. Thus, as bounded operators on 
$l^2\big(\Gamma;\mathbb{C}^N\big)$, for $\epsilon_0>0$
small enough, we can define $\big(\mathbb{G}^\epsilon\big)^{-1/2}$ as
norm-convergent power series in $\epsilon\in[0,\epsilon_0]$. 

\begin{lemma}\label{G-lemma}
 We can write
\beq
\big(\mathbb{G}^\epsilon\big)^{-1/2}\ =\ \bb1\ +\ \epsilon\mathbb{Y}^\epsilon
\eeq
with $\mathbb{Y}^\epsilon_{\alpha,\beta}\in\mathbb{B}(\mathbb{C}^N)$ such that for any $m\in\mathbb{N}$ there exists $C_m>0$ such that for any $\epsilon\in[0,\epsilon_0]$:
$$|\alpha-\beta|^m\left\|\mathbb{Y}^\epsilon_{\alpha,\beta}
\right\|_{\mathbb{B}(\mathbb{C}^N)}\leq C_m.$$  
Thus
$\big(\mathbb{G}^\epsilon\big)^{-1/2}$ can be extended to a bounded operator on
$l^\infty\big(\Gamma;\mathbb{C}^N\big)$, uniformly for
$\epsilon\in[0,\epsilon_0]$.
\end{lemma}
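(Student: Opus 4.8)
The plan is to feed the off-diagonal decay estimate of Lemma \ref{Gramm-Wannier} into the binomial (Neumann) series
$$
\big(\bb1+\epsilon\mathbb{X}^\epsilon\big)^{-1/2}\ =\ \bb1\ +\ \sum_{n\geq 1}\binom{-1/2}{n}\epsilon^n\big(\mathbb{X}^\epsilon\big)^n,\qquad\text{i.e.}\qquad \mathbb{Y}^\epsilon\ =\ \sum_{n\geq 1}\binom{-1/2}{n}\epsilon^{n-1}\big(\mathbb{X}^\epsilon\big)^n .
$$
Its convergence in $\mathbb{B}\big(l^2(\Gamma;\mathbb{C}^N)\big)$ for $\epsilon\in[0,\epsilon_0]$ with $\epsilon_0$ small was already recorded just before the statement, so the only new content is to show that each $(\mathbb{X}^\epsilon)^n$ still has rapidly decaying off-diagonal matrix elements, with constants growing at most geometrically in $n$ and — this is the essential point — with a geometric ratio that does \emph{not} depend on the decay order $m$.

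First I would set up the relevant class of matrices. For an operator-valued $\Gamma\times\Gamma$ matrix $T=(T_{\alpha\beta})$ put $|T|_k:=\sup_{\alpha,\beta}<\alpha-\beta>^k\|T_{\alpha\beta}\|_{\mathbb{B}(\mathbb{C}^N)}$; by Lemma \ref{Gramm-Wannier} we have $|\mathbb{X}^\epsilon|_k\leq C_k$ for every $k\in\mathbb{N}$, uniformly in $\epsilon\in[0,\epsilon_0]$. Fix once and for all $s:=d+1$ and the constant $\Lambda_d:=C_s\sum_{\gamma\in\Gamma}<\gamma>^{-s}<\infty$, which depends only on $d$ (and on the fixed data), not on $\epsilon$ nor on any ``propagation order''; here we used $\Gamma\cong\mathbb{Z}^d$ and $s>d$. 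Since $\|(TS)_{\alpha\gamma}\|\leq\sum_{\beta}\|T_{\alpha\beta}\|\,\|S_{\beta\gamma}\|$, an iterated product $(\mathbb{X}^\epsilon)^n_{\alpha\beta}=\sum_{\mu_1,\dots,\mu_{n-1}}\mathbb{X}^\epsilon_{\mu_0\mu_1}\cdots\mathbb{X}^\epsilon_{\mu_{n-1}\mu_n}$ (with $\mu_0:=\alpha,\ \mu_n:=\beta$) is dominated by an $n$-fold lattice convolution of $\gamma\mapsto C_s<\gamma>^{-s}$, whose $\ell^\infty$ norm is $\leq\Lambda_d^{\,n-1}$ by Young's inequality.

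The step that actually requires care is propagating a \emph{fixed} exponent $m$ without shrinking the radius of convergence. Distributing the weight $<\alpha-\beta>^m$ across all $n$ factors (via submultiplicativity of $<\cdot>^m$) would cost a factor $\sim C^{mn}$ and force an $m$-dependent $\epsilon_0$; instead I would use the elementary bound $<\alpha-\beta>\leq\sqrt2\sum_{i=0}^{n-1}<\mu_i-\mu_{i+1}>$, whence $<\alpha-\beta>^m\leq C_m\,n^m\sum_{i=0}^{n-1}<\mu_i-\mu_{i+1}>^m$, so that in each of the $n$ resulting summands the whole weight sits on a \emph{single} jump. Bounding that distinguished factor by $<\mu_i-\mu_{i+1}>^m\|\mathbb{X}^\epsilon_{\mu_i\mu_{i+1}}\|\leq C_{m+s}<\mu_i-\mu_{i+1}>^{-s}$ and the remaining $n-1$ factors by $C_s<\cdot>^{-s}$, the sum over $\mu_1,\dots,\mu_{n-1}$ is again an $n$-fold convolution (one copy better-weighted) and is $\leq C_{m+s}\Lambda_d^{\,n-1}$. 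Summing over the $n$ choices of the distinguished jump gives
$$
<\alpha-\beta>^m\,\big\|(\mathbb{X}^\epsilon)^n_{\alpha\beta}\big\|\ \leq\ C'_{m}\,n^{\,m+1}\,\Lambda_d^{\,n-1},\qquad\forall\,n\geq1,
$$
with $C'_m$ independent of $\epsilon$ and $\Lambda_d$ independent of $m$. Since $\big|\binom{-1/2}{n}\big|=\binom{2n}{n}4^{-n}\leq1$, summing the series yields $<\alpha-\beta>^m\|\mathbb{Y}^\epsilon_{\alpha\beta}\|\leq C'_m\sum_{n\geq1}n^{m+1}(\epsilon\Lambda_d)^{n-1}=:C_m<\infty$, uniformly in $\epsilon$, as soon as $\epsilon_0<\Lambda_d^{-1}$ — a constraint on $\epsilon_0$ involving only $d$, hence compatible with the finitely many earlier smallness requirements. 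This is exactly the claimed bound $|\alpha-\beta|^m\|\mathbb{Y}^\epsilon_{\alpha\beta}\|\leq C_m$.

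The $l^\infty$ statement is then immediate: any matrix $T$ with $|T|_{d+1}<\infty$ is bounded on $l^\infty(\Gamma;\mathbb{C}^N)$ with $\|T\|\leq\sup_\alpha\sum_\beta\|T_{\alpha\beta}\|\leq|T|_{d+1}\sum_{\gamma}<\gamma>^{-(d+1)}<\infty$; applied to $T=\mathbb{Y}^\epsilon$ with the uniform bound $|\mathbb{Y}^\epsilon|_{d+1}\leq C_{d+1}$ just obtained, it shows $(\mathbb{G}^\epsilon)^{-1/2}=\bb1+\epsilon\mathbb{Y}^\epsilon$ extends to $l^\infty(\Gamma;\mathbb{C}^N)$ uniformly for $\epsilon\in[0,\epsilon_0]$. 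The main obstacle is thus the one isolated above — keeping $\epsilon_0$ independent of $m$ — which is resolved by letting a single largest jump absorb the polynomial weight while controlling every other factor by the fixed $d$-dependent convolution constant. (Alternatively one could invoke inverse-closedness in $\mathbb{B}(l^2)$ of Jaffard-type matrix algebras with polynomial off-diagonal decay together with a holomorphic functional calculus for $z\mapsto z^{-1/2}$, but the elementary series argument delivers the uniformity in $\epsilon$ directly.)
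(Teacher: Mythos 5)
Your argument is correct, and it follows the same overall route as the paper: expand $\big(\mathbb{G}^\epsilon\big)^{-1/2}=\big(\bb1+\epsilon\mathbb{X}^\epsilon\big)^{-1/2}$ in the (binomial) power series and propagate the off-diagonal decay of Lemma \ref{Gramm-Wannier} through the powers $\big(\mathbb{X}^\epsilon\big)^n$. The difference lies precisely in the step you isolate. The paper's proof handles the weight by the iterated Peetre inequality
$\langle\alpha-\beta\rangle^m\leq c_m^{\,n-1}\langle\alpha-\gamma_1\rangle^m\cdots\langle\gamma_{n-1}-\beta\rangle^m$,
i.e.\ it spreads the exponent $m$ over \emph{all} $n$ factors; taken literally, this makes the $n$-th term of the series grow like $\big(c_m C_{m+d+1}\big)^{\,n}$, so the resulting geometric ratio (and hence the admissible $\epsilon_0$ for the order-$m$ estimate) depends on $m$. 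Your device of bounding $\langle\alpha-\beta\rangle^m\leq C_m n^m\sum_i\langle\mu_i-\mu_{i+1}\rangle^m$ and letting a \emph{single} distinguished jump absorb the whole weight, while every other factor is controlled by the fixed $(d+1)$-decay and Young's inequality for lattice convolutions, yields the term bound $C'_m n^{m+1}\Lambda_d^{\,n-1}$ with $\Lambda_d$ independent of $m$, hence one smallness condition $\epsilon_0<\Lambda_d^{-1}$ valid for all $m$ simultaneously. This is a genuine refinement of the two-line argument in the paper: it buys the uniformity in $m$ of $\epsilon_0$ that the statement implicitly asserts (the lemma quantifies ``for any $m$'' inside a fixed $\epsilon_0$), whereas the paper's submultiplicative splitting would, without further comment, only give each decay order on an $m$-dependent range of $\epsilon$ (or force the reader to accept rapidly growing constants). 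Your remaining steps --- identification of $(\mathbb{G}^\epsilon)^{-1/2}$ with the series for small $\epsilon$, the bound $\big|\binom{-1/2}{n}\big|\leq1$, and the Schur-type $l^\infty$ bound $\sup_\alpha\sum_\beta\|\mathbb{Y}^\epsilon_{\alpha\beta}\|<\infty$ from the order-$(d+1)$ decay --- are all standard and correctly executed; the only cosmetic slip is that the unweighted convolution bound should carry an extra factor $C_{d+1}$ (i.e.\ $\leq C_{d+1}\Lambda_d^{\,n-1}$), which you never actually use in the final estimate.
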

\begin{proof}
We just have to use the norm convergent series expansion of the operator $\big(\mathbb{G}^\epsilon\big)^{-1/2}$
and notice that for each term of order $k\geq2$ we can write
$$
<\alpha-\beta>^m\leq c_m^{k-1}<\alpha-\gamma_1>^m<\gamma_1-\gamma_2>^m\ldots<\gamma_{k-2}-\gamma_{k-1}>^m<\gamma_{k-1}-\beta>^m.
$$
\end{proof}

 We define the family
\beq\label{on-msyst}
\overset{\circ}{\mathbb{W}}{}^\epsilon\ :=\
\big(\mathbb{G}^\epsilon\big)^{-1/2}\,\widetilde{\mathbb{W}}^\epsilon\in
 L^\infty\big(\Gamma;\big[L^2(\X)\big]^N\big).
\eeq
Its components
$\{\overset{\circ}{\mathcal{W}}{}^\epsilon_{\gamma,j}\}_{(\gamma,
j)\in\Gamma\times
J_I}$ form an
orthonormal family and
$\|\overset{\circ}{\mathcal{W}}{}^\epsilon_{\gamma,j}-\widetilde{\mathcal{W}}
^\epsilon_{\gamma,j}\|\leq C\epsilon$
uniformly in $(\gamma,j)\in\Gamma\times J_I$ and $\epsilon\in[0,\epsilon_0]$.
Moreover, for any $\gamma\in\Gamma$ and $m\in\mathbb{N}$ there exists
$C_m>0$ such that
\beq\label{approx-Wannier-f}
\underset{x\in\X}{\sup}<x-\gamma>^m\left|\mathcal{W}^\epsilon_{\gamma,j}(x)-\widetilde{\mathcal{W}}
^\epsilon_{\gamma,j}(x)\right|\leq C_m\epsilon
\eeq
uniformly in $(\gamma,j)\in\Gamma\times J_I$ and $\epsilon\in[0,\epsilon_0]$.
\begin{definition}\label{D-ptildaepsilon}
Let us define the orthogonal projection
$$
\widetilde{\mathfrak{p}}^\epsilon\ :=\
\underset{\gamma\in\Gamma}{\sum}\left(\underset{j\in
J_I}{\sum}|\overset{\circ}{\mathcal{W}}{}^\epsilon_{\gamma,j}
\rangle\langle\overset{\circ}{\mathcal{W}}{}^\epsilon_{\gamma,j}|\right).
$$
\end{definition}

\begin{proposition}~
\begin{itemize}
\item There exists $C>0$ such that
$\|\widetilde{\mathfrak{p}}^\epsilon\,-\,\mathfrak{p}
^\epsilon\|\leq\epsilon C$ for any
$\epsilon\in[0,\epsilon_0]$.
\item For any $\epsilon\in[0,\epsilon_0]$ the operator 
$
\mathscr{U}^\epsilon:=\left[\bb1\,-\,\big(\mathfrak{p}
^\epsilon\,-\,\widetilde{\mathfrak{p}}^\epsilon\big)^2\right]^{-1/2}
\left[\mathfrak{p}
^\epsilon\,\widetilde{\mathfrak{p}}^\epsilon
\,+\,\big(\bb1-\mathfrak{p}
^\epsilon\big)\big(\bb1-\widetilde{\mathfrak{p}}^\epsilon\big)\right]
$
is a unitary operator satisfying the intertwining property
$\mathscr{U}^\epsilon\widetilde{\mathfrak{p}}^\epsilon=\mathfrak{p}
^\epsilon\mathscr{U}^\epsilon$. Thus the family
$\{\mathscr{U}^\epsilon\overset{\circ}{\mathcal{W}}{}^\epsilon_{\gamma,j}\}_{
(\gamma,j)\in\Gamma\times
J_I}$ is an orthonormal basis of $\mathfrak{p}^\epsilon\mathcal{H}$ which is an
invariant subspace of $H^\epsilon$.
\item We have the following norm-convergent power series expansion:
$$
\mathscr{U}^\epsilon\ =\
\bb1\,+\,\underset{m\in\mathbb{N}^*}{\sum}\epsilon^m\mathbf{T}_m(\epsilon),
$$
where all the operators $\mathbf{T}_m(\epsilon)$ are uniformly bounded in
$\epsilon\in[0,\epsilon_0]$.
\end{itemize}
\end{proposition}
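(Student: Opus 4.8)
The plan is to establish the three bullets in order, since each one feeds into the next. For the first bullet, I would start from the estimate $\|\widetilde{\mathfrak q}^\epsilon-\mathfrak p^\epsilon\|\leq\epsilon C$ already proved in Proposition \ref{tildaq-p}, and compare $\widetilde{\mathfrak p}^\epsilon$ with $\widetilde{\mathfrak q}^\epsilon$. The point is that both operators have ranges spanned by the same family: $\widetilde{\mathfrak q}^\epsilon$ is the (non-orthogonal) "Gram operator" version built directly on $\{\widetilde{\mathcal W}^\epsilon_{\gamma,j}\}$ while $\widetilde{\mathfrak p}^\epsilon$ is the genuine orthogonal projection onto that same closed span, obtained via the $\mathbb{G}^\epsilon$-symmetrization. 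Using Lemma \ref{G-lemma}, i.e. $(\mathbb G^\epsilon)^{-1/2}=\bb1+\epsilon\mathbb Y^\epsilon$ with $\mathbb Y^\epsilon$ bounded on $l^2(\Gamma;\mathbb{C}^N)$ uniformly in $\epsilon$, one gets $\|\overset{\circ}{\mathcal W}{}^\epsilon_{\gamma,j}-\widetilde{\mathcal W}^\epsilon_{\gamma,j}\|\leq C\epsilon$ uniformly; summing the rank-one differences $|\overset{\circ}{\mathcal W}\rangle\langle\overset{\circ}{\mathcal W}|-|\widetilde{\mathcal W}\rangle\langle\widetilde{\mathcal W}|$ with the $l^1(\Gamma)$-control coming from the rapid off-diagonal decay in Lemmas \ref{band-est} and \ref{Gramm-Wannier} yields $\|\widetilde{\mathfrak p}^\epsilon-\widetilde{\mathfrak q}^\epsilon\|\leq\epsilon C$, and then the triangle inequality finishes the first bullet.

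For the second bullet I would invoke the Nagy-Kato intertwining (Sz.-Nagy's formula): given two orthogonal projections $\mathfrak p,\widetilde{\mathfrak p}$ with $\|\mathfrak p-\widetilde{\mathfrak p}\|<1$, the operator $\mathscr U^\epsilon:=[\bb1-(\mathfrak p^\epsilon-\widetilde{\mathfrak p}^\epsilon)^2]^{-1/2}[\mathfrak p^\epsilon\widetilde{\mathfrak p}^\epsilon+(\bb1-\mathfrak p^\epsilon)(\bb1-\widetilde{\mathfrak p}^\epsilon)]$ is well defined (the inverse square root exists by the first bullet, since $\|\mathfrak p^\epsilon-\widetilde{\mathfrak p}^\epsilon\|\leq\epsilon C<1$ for $\epsilon_0$ small), is unitary, and satisfies $\mathscr U^\epsilon\widetilde{\mathfrak p}^\epsilon=\mathfrak p^\epsilon\mathscr U^\epsilon$. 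I would either cite this as standard or check it quickly: unitarity follows because $\mathscr U^\epsilon(\mathscr U^\epsilon)^*=(\mathscr U^\epsilon)^*\mathscr U^\epsilon=\bb1$ after noting $[\mathfrak p^\epsilon\widetilde{\mathfrak p}^\epsilon+(\bb1-\mathfrak p^\epsilon)(\bb1-\widetilde{\mathfrak p}^\epsilon)]$ commutes with $(\mathfrak p^\epsilon-\widetilde{\mathfrak p}^\epsilon)^2$ and computing its square; the intertwining is a direct algebraic manipulation using $\mathfrak p^\epsilon A=A\widetilde{\mathfrak p}^\epsilon$ for $A$ the bracketed factor. Since $\{\overset{\circ}{\mathcal W}{}^\epsilon_{\gamma,j}\}$ is an orthonormal basis of $\widetilde{\mathfrak p}^\epsilon\mathcal H$ (established right before Definition \ref{D-ptildaepsilon}), its image under $\mathscr U^\epsilon$ is an orthonormal basis of $\mathfrak p^\epsilon\mathcal H=E_I(H^\epsilon)\mathcal H$, which is $H^\epsilon$-invariant by the spectral calculus.

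For the third bullet I would expand $\mathscr U^\epsilon$ as a power series. Write $\mathfrak p^\epsilon-\widetilde{\mathfrak p}^\epsilon=\epsilon\,\mathfrak D^\epsilon$ with $\mathfrak D^\epsilon$ uniformly bounded; then $[\bb1-\epsilon^2(\mathfrak D^\epsilon)^2]^{-1/2}=\bb1+\sum_{m\geq1}\epsilon^{2m}c_m(\mathfrak D^\epsilon)^{2m}$ is a norm-convergent series with uniformly bounded coefficients for $\epsilon_0$ small. For the bracketed factor $\mathfrak p^\epsilon\widetilde{\mathfrak p}^\epsilon+(\bb1-\mathfrak p^\epsilon)(\bb1-\widetilde{\mathfrak p}^\epsilon)$, substitute $\widetilde{\mathfrak p}^\epsilon=\mathfrak p^\epsilon-\epsilon\mathfrak D^\epsilon$ to get $\bb1-2\mathfrak p^\epsilon+2(\mathfrak p^\epsilon)^2-\epsilon(2\mathfrak p^\epsilon-\bb1)\mathfrak D^\epsilon=\bb1-\epsilon(2\mathfrak p^\epsilon-\bb1)\mathfrak D^\epsilon$ (using $(\mathfrak p^\epsilon)^2=\mathfrak p^\epsilon$), so the factor is $\bb1+\epsilon\mathfrak E^\epsilon$ with $\mathfrak E^\epsilon$ uniformly bounded. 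Multiplying the two series and collecting powers of $\epsilon$ gives $\mathscr U^\epsilon=\bb1+\sum_{m\geq1}\epsilon^m\mathbf T_m(\epsilon)$ with each $\mathbf T_m(\epsilon)$ a finite sum of products of $\mathfrak D^\epsilon$ and $\mathfrak E^\epsilon$ (and the scalar coefficients), hence uniformly bounded in $\epsilon\in[0,\epsilon_0]$.

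I expect the main obstacle to be the first bullet, specifically getting the passage from the $l^2(\Gamma)$-level estimates (where $(\mathbb G^\epsilon)^{-1/2}$ is easily controlled) down to uniform pointwise/operator-norm bounds on $L^2(\X)$. The rapid off-diagonal decay $|\alpha-\beta|^m\|\mathbb Y^\epsilon_{\alpha,\beta}\|\leq C_m$ from Lemma \ref{G-lemma}, combined with the spatial localization $\langle x-\gamma\rangle^m|w_j(x-\gamma)|\in L^\infty$ from Proposition \ref{iulie30}, is exactly what makes the double sum defining $\widetilde{\mathfrak p}^\epsilon-\widetilde{\mathfrak q}^\epsilon$ converge with an $O(\epsilon)$ bound via a Schur-test / Cotlar-Stein type argument on the lattice, but assembling these pieces cleanly is the only genuinely non-routine part; bullets two and three are then essentially algebra plus the elementary power-series expansion.
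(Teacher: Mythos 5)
Your proposal is correct and follows essentially the same route as the paper: the paper's own (sketched) proof likewise bounds $\|\widetilde{\mathfrak{p}}^\epsilon-\widetilde{\mathfrak{q}}^\epsilon\|$ by $C\epsilon$ using the localization estimate \eqref{approx-Wannier-f}, combines this with Proposition \ref{tildaq-p} and the triangle inequality, and then invokes the Sz.-Nagy intertwining formula, the power series for $\mathscr{U}^\epsilon$ coming from the same binomial/Neumann expansion you write out. Your text merely makes explicit the lattice Schur/Cotlar--Stein summation behind the first bullet and the algebra behind the third, which the paper leaves implicit.
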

\begin{proof} 
We use \eqref{approx-Wannier-f} in order to 
estimate the operator norm of 
$\widetilde{\mathfrak{p}}^\epsilon\,-\,\widetilde{\mathfrak{q}}^\epsilon$ and show that it goes to zero like $\epsilon$. Then we use Proposition \ref{tildaq-p} and the Sz. Nagy formula  of the unitary intertwining operator (see
Remark II.4.4 in \cite{Ka}). 
\end{proof}

\vspace{0.5cm}

 We are now ready to define the {\it `magnetic Wannier' functions} as the following family
indexed by $(\gamma,j)\in\Gamma\times J_I$:
 \begin{equation*}
\mathcal{W}^\epsilon_{\gamma,j}\ :=\
\mathscr{U}^\epsilon\overset{\circ}{\mathcal{W}}{}^\epsilon_{\gamma,j},\quad \mathcal{W}^\epsilon_{\gamma,j}=E_I(H^\epsilon)\mathcal{W}^\epsilon_{\gamma,j}.
\end{equation*}

\subsubsection{Concluding the proof of Theorem \ref{teoremadoi}}

The first two points of the Theorem are direct consequences of the definitions
and arguments above from which we also conclude that for any $m\in\mathbb{N}$
there exists a finite positive constant $C_m$ such that
$$
\underset{(\alpha,\beta)\in\Gamma\times\Gamma}{\sup}\
<\alpha-\beta>^m\left|\left\langle\mathcal{W}^\epsilon_{\alpha,l},H
^\epsilon \mathcal{W}^\epsilon
_{\beta,m}\right\rangle_{L^2(\X)}\ -\
\left\langle\widetilde{\mathcal{W}}^\epsilon_{\alpha,l},\mathfrak{H}
^\epsilon\widetilde{\mathcal{W}}^\epsilon
_{\beta,m}\right\rangle_{L^2(\X)}\right|\ \leq\ C_m\epsilon
$$
which is independent of the indices
$(\alpha,\beta)\in\Gamma\times\Gamma$ and $(j,k)\in J_I\times J_I$.
Now we compute:
\begin{align*}
&\left\langle\widetilde{\mathcal{W}}^\epsilon_{\alpha,j},
\mathfrak{H}
^\epsilon\widetilde{\mathcal{W}}^\epsilon
_{\beta,k}\right\rangle_{L^2(\X)}\ \\
&=\ \int_\X\int_\X\overline{w_j(x-\alpha)}
\widetilde{\Lambda}^\epsilon(x,\alpha)\widetilde{\Lambda}^\epsilon(x,
x')\widetilde{\Lambda}^\epsilon(\beta,x')K_{\mathfrak{H}}(x,x')w_k(x'-\beta)dx\,
dx'\\
&=\
\widetilde{\Lambda}^\epsilon(\alpha,
\beta)\int_\X\int_\X\Omega^\epsilon(\alpha,x,x')\Omega^\epsilon(\beta,x,
x')K_{\mathfrak{H}}(x,x')\overline {w_j(x-\alpha)}w_k(x'-\beta)dx\,
dx'\\
&=\
\widetilde{\Lambda}^\epsilon(\alpha,\beta)\widehat{\mu_{jk}}_{\alpha-\beta}\ +\
(X_{j,k}^\epsilon)_{\alpha\beta}=\mathfrak{Op}^\epsilon_\Gamma(\widetilde{\mu}_{jk})_{\alpha,\beta}+(X_{j,k}^\epsilon)_{\alpha\beta}
\end{align*}
where in the last equality we used \eqref{iulie20}. Using Remark \ref{Omega} and the fast decay of the
Wannier functions $\{w_j(x)\}_{j\in J_I}$, we obtain for any $m\in\mathbb{N}$ the
estimate
\beq
\underset{(\alpha,\beta)\in\Gamma\times\Gamma}{\sup}\
<\alpha-\beta>^m\left|(X_{j,k}^\epsilon)_{\alpha\beta}\right|\ \leq\ C_m\epsilon
\eeq
uniformly in $(j,k)\in J_I\times J_I$ and $\epsilon\in[0,\epsilon_0]$. This implies that for every $m\geq 1$ there exists a constant $C_m>0$ such that 
\beq\label{iulie12}
\left |\left\langle
\mathcal{W}^\epsilon_{\alpha,j},H^\epsilon\mathcal{W}^\epsilon
_{\beta,k}\right\rangle_{L^2(\X)}\ -
\widetilde{\Lambda}^\epsilon(\alpha,\beta)\big(\widehat{\mu_{jk}}
\big)_{\alpha-\beta}\right | \leq C_m \epsilon <\alpha-\beta>^{-m}.
\eeq
In particular, this estimate proves the third point of Theorem \ref{teoremadoi} and the spectral result given in Corollary \ref{iulie33}.

\subsubsection{An alternative form for the effective Hamiltonian.}

We shall cast now the result in Theorem \ref{teoremadoi} in a form that generalizes formulas \eqref{PO-matrix} to the case of a small, smooth magnetic field. 

We consider the Hilbert space $\mathcal{H}_N:=L^2(\X)^N\cong\mathcal{H}\otimes\mathbb{C}^N$, with $N\in\mathbb{N}^*$ from Hypothesis \ref{is-sp-band}. We denote by $\bb1_N$ the identity operator on $\mathbb{C}^N$.

We notice that due to the decay properties of the 'magnetic' Wannier functions introduced above, using the Cotlar-Stein Lemma allows us to prove that for any pair $(j,k)\in J_I\times J_I$ the following series are strongly convergent in $\mathbb{B}(\mathcal{H})$ defining bounded operators (with good estimations on the norm):
\beq
\Pi^\epsilon_{jk}:=\underset{\gamma\in\Gamma}{\sum}\left|\mathcal{W}^\epsilon_{\gamma,j}\right\rangle\left\langle\mathcal{W}^
\epsilon_{\gamma,k}\right|.
\eeq
It is easy to see that $\Pi^\epsilon_{jj}$ are orthogonal projections for any $j\in J_I$ and we have the relations:
\beq
\big[\Pi^\epsilon_{jk}\big]^*=\Pi^\epsilon_{kj},\quad\big[\Pi^\epsilon_{jk}\big]^*\Pi^\epsilon_{jk}=\Pi^\epsilon_{jj},\quad \Pi^\epsilon_{jk}\big[\Pi^\epsilon_{jk}\big]^*=\Pi^\epsilon_{kk}.
\eeq
Let us define the operators $\widetilde{\Pi}^\epsilon\in\mathbb{B}(\mathcal{H}_N)$ given by the $\mathbb{B}(\mathcal{H})$-valued $N\times N$ matrix with entries $\Pi^\epsilon_{jk}$ and $\mathfrak{Op}^\epsilon(\widetilde{\mu})$ given by the $\mathbb{B}(\mathcal{H})$-valued $N\times N$ matrix with entries $\mathfrak{Op}^\epsilon(\widetilde{\mu}_{jk})$ for $(j,k)\in J_I\times J_I$.

\begin{proposition}
Under the Hypothesis of Theorem \ref{teoremadoi}, there exists some constant $C\in\mathbb{R}_+$ such that for some small enough $\epsilon_0>0$, we have for any $\epsilon\in[0,\epsilon_0]$ that:
$$
\left\|\mathfrak{H}^\epsilon\otimes\bb1_N\ -\ \widetilde{\Pi}^\epsilon\mathfrak{Op}^\epsilon(\widetilde{\mu})
\big[\widetilde{\Pi}^\epsilon\big]^*\right\|_{\mathbb{B}(\mathcal{H}_N)}\ \leq\ C\epsilon.
$$
\end{proposition}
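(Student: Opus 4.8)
The plan is to expand both operators in the statement in the orthonormal magnetic Wannier basis $\{\mathcal{W}^\epsilon_{\gamma,j}\}_{(\gamma,j)\in\Gamma\times J_I}$ and to reduce the whole estimate to a single scalar bound on matrix elements. Since $\mathfrak{p}^\epsilon=E_I(H^\epsilon)$ commutes with $H^\epsilon$ and $\{\mathcal{W}^\epsilon_{\gamma,j}\}$ is an orthonormal basis of its range (Theorem \ref{teoremadoi}(1)), one has $\mathfrak{H}^\epsilon=\mathfrak{p}^\epsilon H^\epsilon\mathfrak{p}^\epsilon=\sum_{(\alpha,j),(\beta,k)}\langle\mathcal{W}^\epsilon_{\alpha,j},H^\epsilon\mathcal{W}^\epsilon_{\beta,k}\rangle\,|\mathcal{W}^\epsilon_{\alpha,j}\rangle\langle\mathcal{W}^\epsilon_{\beta,k}|$ (the series converging as in the proof of Theorem \ref{teoremadoi}), so the $(l,m)$ block of $\mathfrak{H}^\epsilon\otimes\bb1_N$ is $\delta_{lm}$ times this double sum; while, inserting the definition of $\Pi^\epsilon_{jk}$ and carrying out a direct computation, the $(l,m)$ block of $\widetilde{\Pi}^\epsilon\mathfrak{Op}^\epsilon(\widetilde{\mu})[\widetilde{\Pi}^\epsilon]^*$ equals $\sum_{j,k\in J_I}\sum_{\alpha,\beta\in\Gamma}\langle\mathcal{W}^\epsilon_{\alpha,l},\mathfrak{Op}^\epsilon(\widetilde{\mu}_{jk})\mathcal{W}^\epsilon_{\beta,m}\rangle\,|\mathcal{W}^\epsilon_{\alpha,j}\rangle\langle\mathcal{W}^\epsilon_{\beta,k}|$. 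Subtracting, the $(l,m)$ block of the difference is $\sum_{j,k}\sum_{\alpha,\beta}\Delta^\epsilon_{(\alpha,l),(\beta,m)}(j,k)\,|\mathcal{W}^\epsilon_{\alpha,j}\rangle\langle\mathcal{W}^\epsilon_{\beta,k}|$, with $\Delta^\epsilon_{(\alpha,l),(\beta,m)}(j,k):=\langle\mathcal{W}^\epsilon_{\alpha,l},\mathfrak{Op}^\epsilon(\widetilde{\mu}_{jk})\mathcal{W}^\epsilon_{\beta,m}\rangle-\delta_{lm}\langle\mathcal{W}^\epsilon_{\alpha,j},H^\epsilon\mathcal{W}^\epsilon_{\beta,k}\rangle$. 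Thus it suffices to show that for every $p\in\mathbb{N}$ there is $C_p<\infty$ with $|\Delta^\epsilon_{(\alpha,l),(\beta,m)}(j,k)|\le C_p\,\epsilon\,\langle\alpha-\beta\rangle^{-p}$, uniformly in all indices and in $\epsilon\in[0,\epsilon_0]$: once $p>d$, the Schur test (using that $\{\mathcal{W}^\epsilon_{\gamma,j}\}_\gamma$ is orthonormal for each $j$, and that $J_I$ is finite) bounds the operator norm of each of the $N^2$ blocks by $C\epsilon$, hence $\|\mathfrak{H}^\epsilon\otimes\bb1_N-\widetilde{\Pi}^\epsilon\mathfrak{Op}^\epsilon(\widetilde{\mu})[\widetilde{\Pi}^\epsilon]^*\|_{\mathbb{B}(\mathcal{H}_N)}\le C\epsilon$.

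The heart of the argument is therefore the matrix-element estimate, and in view of \eqref{iulie12} it is enough to prove $\langle\mathcal{W}^\epsilon_{\alpha,l},\mathfrak{Op}^\epsilon(\widetilde{\mu}_{jk})\mathcal{W}^\epsilon_{\beta,m}\rangle=\delta_{lm}\,\widetilde{\Lambda}^\epsilon(\alpha,\beta)\,(\widehat{\mu_{jk}})_{\alpha-\beta}+O(\epsilon\langle\alpha-\beta\rangle^{-p})$, since the leading term is exactly the one \eqref{iulie12} attributes to $\langle\mathcal{W}^\epsilon_{\alpha,j},H^\epsilon\mathcal{W}^\epsilon_{\beta,k}\rangle$ (the Kronecker $\delta_{lm}$ accounting for the fact that the basis index carried by the Wannier function must agree on both sides). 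First I would replace $\mathcal{W}^\epsilon_{\gamma,\cdot}$ by the modified Wannier functions $\widetilde{\mathcal{W}}^\epsilon_{\gamma,j}=\widetilde{\Lambda}^\epsilon(\cdot,\gamma)w_j(\cdot-\gamma)$: by \eqref{approx-Wannier-f}, the rapid spatial localization of the $w_j$ (Proposition \ref{iulie30}), and the fact that by \eqref{magn-quant} the kernel of $\mathfrak{Op}^\epsilon(\widetilde{\mu}_{jk})$ is $\widetilde{\Lambda}^\epsilon$ times a lattice convolution with the rapidly decreasing sequence $(\widehat{\mu_{jk}})_\gamma$, this change costs only $O(\epsilon\langle\alpha-\beta\rangle^{-p})$. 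Next I would use the Fourier expansion $\widetilde{\mu}_{jk}=\sum_{\gamma\in\Gamma}(\widehat{\mu_{jk}})_\gamma\,\widetilde{\sigma_\gamma}$, convergent in $S^0_0(\X)$ by the rapid decay of its coefficients, together with the continuity of $\mathfrak{Op}^\epsilon:S^0_0(\X)\to\mathbb{B}(L^2(\X))$ and the identity, immediate from \eqref{OpA}, that $\mathfrak{Op}^\epsilon(\widetilde{\sigma_\gamma})$ is the unitary ``magnetic translation'' $(\mathfrak{Op}^\epsilon(\widetilde{\sigma_\gamma})u)(x)=\widetilde{\Lambda}^\epsilon(x,x-\gamma)u(x-\gamma)$, so that $\mathfrak{Op}^\epsilon(\widetilde{\mu}_{jk})=\sum_\gamma(\widehat{\mu_{jk}})_\gamma\,\mathfrak{Op}^\epsilon(\widetilde{\sigma_\gamma})$ in operator norm. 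For each $\gamma$ the resulting integral
\[
\langle\widetilde{\mathcal{W}}^\epsilon_{\alpha,l},\mathfrak{Op}^\epsilon(\widetilde{\sigma_\gamma})\widetilde{\mathcal{W}}^\epsilon_{\beta,m}\rangle=\int_\X\widetilde{\Lambda}^\epsilon(\alpha,x)\,\widetilde{\Lambda}^\epsilon(x,x-\gamma)\,\widetilde{\Lambda}^\epsilon(x-\gamma,\beta)\,\overline{w_l(x-\alpha)}\,w_m(x-\gamma-\beta)\,dx
\]
has a phase which, after two applications of the flux cocycle identity $\widetilde{\Lambda}^\epsilon(a,b)\widetilde{\Lambda}^\epsilon(b,c)=\Omega^\epsilon(a,b,c)\,\widetilde{\Lambda}^\epsilon(a,c)$ of Remark \ref{Omega}, factors as $\widetilde{\Lambda}^\epsilon(\alpha,\beta)$ times $\Omega^\epsilon(\alpha,x,x-\gamma)\,\Omega^\epsilon(\alpha,x-\gamma,\beta)$. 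At $\epsilon=0$ both fluxes are $1$ and orthonormality of $\{\mathcal{W}_{\gamma,j}\}$ collapses the integral to $\delta_{\gamma,\alpha-\beta}\delta_{lm}$; for $\epsilon>0$, Remark \ref{Omega} bounds each flux minus $1$ by $\epsilon$ times the area of the corresponding triangle, which—after using the localization of $w_l$ and $w_m$—is controlled by $\epsilon(1+|\gamma|)^2$ against a factor decaying faster than any power of $|\alpha-\gamma-\beta|$. Multiplying by $(\widehat{\mu_{jk}})_\gamma$ and summing over $\gamma$ (a convolution of rapidly decreasing sequences) gives $\langle\widetilde{\mathcal{W}}^\epsilon_{\alpha,l},\mathfrak{Op}^\epsilon(\widetilde{\mu}_{jk})\widetilde{\mathcal{W}}^\epsilon_{\beta,m}\rangle=\delta_{lm}\,\widetilde{\Lambda}^\epsilon(\alpha,\beta)(\widehat{\mu_{jk}})_{\alpha-\beta}+O(\epsilon\langle\alpha-\beta\rangle^{-p})$, and combining with the first step and with \eqref{iulie12} yields the required bound on $\Delta^\epsilon$.

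Inserting this bound into the block expansion of the first paragraph, the leading terms reassemble precisely into $\delta_{lm}\mathfrak{H}^\epsilon$ and the remainder is controlled by the Schur test as described, which concludes the proof. The step I expect to be the real obstacle is the matrix-element estimate: one has to bookkeep the three nested parallel-transport phases so that exactly the factor $\widetilde{\Lambda}^\epsilon(\alpha,\beta)$ is extracted—so that it cancels against the identical factor produced by $H^\epsilon$ in \eqref{iulie12}—and one has to verify that the residual $\Omega^\epsilon$-fluxes, once summed against the Fourier coefficients $(\widehat{\mu_{jk}})_\gamma$, retain polynomial decay of arbitrary order in $|\alpha-\beta|$; by contrast, the passage from $\mathcal{W}^\epsilon$ to $\widetilde{\mathcal{W}}^\epsilon$, the Fourier-series manipulation, and the final Schur estimate are routine given the tools already established.
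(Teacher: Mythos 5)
Your strategy is essentially the paper's own: the authors deduce this Proposition from Theorem \ref{teoremadoi}(3) together with Proposition \ref{P-PO-W}, i.e.\ by comparing, entry by entry in the magnetic Wannier basis, the matrix elements of $\mathfrak{Op}^\epsilon(\widetilde{\mu}_{jk})$ with the Hofstadter coefficients $\widetilde{\Lambda}^\epsilon(\alpha,\beta)\big(\widehat{\mu_{jk}}\big)_{\alpha-\beta}$, and your second paragraph (Fourier expansion of $\widetilde{\mu}_{jk}$ into magnetic translations, the cocycle identity, the flux bound of Remark \ref{Omega}, rapid decay of the $w_j$ and of $\widehat{\mu}_\gamma$) is exactly the computation in the proof of Proposition \ref{P-PO-W}. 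Two details you add are genuinely needed and are glossed over in the paper: the factor $\delta_{lm}$ (absent from the statement of Proposition \ref{P-PO-W}, although it is what orthonormality of the composite Wannier basis produces in the leading term), and the $\langle\alpha-\beta\rangle^{-p}$ decay of the error, without which a bound on matrix elements that is merely uniform in $(\alpha,\beta)$ does not yield an operator-norm bound; your Schur-test reduction is the correct way to close that step.

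The one place where your argument is not the ``direct computation'' you claim is the block expansion of $\widetilde{\Pi}^\epsilon\mathfrak{Op}^\epsilon(\widetilde{\mu})\big[\widetilde{\Pi}^\epsilon\big]^*$. Reading the definition literally, $(\widetilde{\Pi}^\epsilon)_{lj}=\Pi^\epsilon_{lj}=\sum_\alpha|\mathcal{W}^\epsilon_{\alpha,l}\rangle\langle\mathcal{W}^\epsilon_{\alpha,j}|$, the $(l,m)$ block is $\sum_{j,k}\sum_{\alpha,\beta}\langle\mathcal{W}^\epsilon_{\alpha,j},\mathfrak{Op}^\epsilon(\widetilde{\mu}_{jk})\mathcal{W}^\epsilon_{\beta,k}\rangle\,|\mathcal{W}^\epsilon_{\alpha,l}\rangle\langle\mathcal{W}^\epsilon_{\beta,m}|$, i.e.\ the matrix element carries the summed indices and the rank-one operators the fixed block indices --- the opposite of what you wrote. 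With your own matrix-element estimate (which forces $j=k$ in the leading term), that block is $O(\epsilon)$-close to the Hofstadter-type quantization of ${\rm tr}\,\mu$ sandwiched between $|\mathcal{W}^\epsilon_{\alpha,l}\rangle\langle\mathcal{W}^\epsilon_{\beta,m}|$, not to $\delta_{lm}\mathfrak{H}^\epsilon$; already at $\epsilon=0$ and $N\geq 2$ the off-diagonal blocks are then not $O(\epsilon)$, so under that literal reading the claimed inequality itself fails. Your expansion is the one produced by the transposed convention $(\widetilde{\Pi}^\epsilon)_{lj}=\Pi^\epsilon_{jl}$, which is in fact the only reading under which the Proposition is true (and is consistent with the transposed convention $\pi_{jk}(\theta)=|\psi_k(\theta)\rangle\langle\psi_j(\theta)|$ of Definition \ref{D-M}). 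You should state this convention and carry out the block multiplication explicitly rather than assert it, since this is precisely where the bookkeeping silently decides whether the statement is true or false; once that is made explicit, the rest of your proof is sound.
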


This Proposition follows imediately from our Theorem \ref{teoremadoi} and the following statement.

\begin{proposition}\label{P-PO-W}
 For an isolated spectral band $I\subset\mathbb{R}$ that also satisfies
Hypothesis \ref{H.II.8.1} and for the linearly independent system
$\{\widetilde{\mathcal{W}}^\epsilon_{\alpha,j}\}_{(\alpha,j)\in\Gamma\times
J_I}$ defined above, the following relation holds:
$$
\left\langle\widetilde{\mathcal{W}}^\epsilon_{\alpha,l},\mathfrak{Op}
^\epsilon(\widetilde{\mu}_{lm})
\widetilde{\mathcal{W}}^\epsilon
_{\beta,m}\right\rangle_{L^2(\X)}\ =\
\widetilde{\Lambda}^\epsilon(\alpha,\beta)\big(\widehat{\mu_{lm}}\big)_{
\alpha-\beta}\,+\epsilon\,C_\epsilon
$$
for any $(\alpha,\beta)\in\Gamma\times\Gamma$ and any $(l,m)\in J_I\times
J_I$,
with some positive constant $C_\epsilon$ that is uniformly bounded for
$\epsilon\in[0,\epsilon_0]$.
\end{proposition}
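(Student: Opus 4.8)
The plan is to repeat, essentially verbatim, the computation already carried out for the third point of Theorem~\ref{teoremadoi}, but with the band Hamiltonian kernel $K_{\mathfrak{H}}$ replaced by the integral kernel of the non-magnetic operator $\mathfrak{Op}(\widetilde{\mu}_{lm})$. First I would use \eqref{magn-quant} to write $\mathfrak{Op}^\epsilon(\widetilde{\mu}_{lm})=\mathcal{I}\text{\sf nt}\big(\widetilde{\Lambda}^\epsilon K^{(0)}_{lm}\big)$ with $K^{(0)}_{lm}:=\mathfrak{W}\widetilde{\mu}_{lm}$, and record that, since $\widetilde{\mu}_{lm}$ is $\Gamma_*$-periodic and depends only on $\xi$, the kernel $K^{(0)}_{lm}(x,y)$ depends only on $x-y$ and is a lattice sum of Dirac masses $\sum_{\gamma\in\Gamma}c^{lm}_\gamma\,\delta(x-y-\gamma)$ whose coefficients $c^{lm}_\gamma$ are, up to a fixed constant, the Fourier coefficients $\widehat{\mu_{lm}}_\gamma$; by Definition~\ref{D-M} and Hypothesis~\ref{H.II.8.1} the map $\theta\mapsto\mu_{lm}(\theta)$ is smooth on $\mathbb{T}_*$, so $\{c^{lm}_\gamma\}_\gamma$ decays faster than any inverse power of $\langle\gamma\rangle$. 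This also shows that $\mathfrak{Op}^\epsilon(\widetilde{\mu}_{lm})$ is bounded (its symbol lies in $S^0_0(\X)$) and that the matrix element in the statement is absolutely convergent.

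Next I would plug $\widetilde{\mathcal{W}}^\epsilon_{\alpha,l}(x)=\widetilde{\Lambda}^\epsilon(x,\alpha)w_l(x-\alpha)$ and $\widetilde{\mathcal{W}}^\epsilon_{\beta,m}(y)=\widetilde{\Lambda}^\epsilon(y,\beta)w_m(y-\beta)$ into the inner product and expand $K^{(0)}_{lm}$ into its Dirac masses; the term with shift $\gamma$ becomes an $x$-integral along the diagonal $y=x-\gamma$ carrying the phase $\widetilde{\Lambda}^\epsilon(\alpha,x)\,\widetilde{\Lambda}^\epsilon(x,x-\gamma)\,\widetilde{\Lambda}^\epsilon(x-\gamma,\beta)$. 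Exactly as in the proof of Theorem~\ref{teoremadoi}, and using Remark~\ref{Omega}, I would rewrite this phase as $\widetilde{\Lambda}^\epsilon(\alpha,\beta)$ times a holonomy of $B_\epsilon$ around the quadrilateral with vertices $\alpha,x,x-\gamma,\beta$, split into a product $\Omega^\epsilon(\cdot)\,\Omega^\epsilon(\cdot)$ of two triangular holonomies. By the inequality in Remark~\ref{Omega} together with $\|B_\epsilon\|_\infty=\epsilon\|B^0_\epsilon\|_\infty\le C\epsilon$, this product equals $1$ up to an error bounded by $\epsilon\,P\big(|x-\alpha|,|x-\beta|,|\gamma|\big)$ for a fixed polynomial $P$.

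Keeping only the ``$1$'' from the holonomy leaves $\widetilde{\Lambda}^\epsilon(\alpha,\beta)\,\big\langle\mathcal{W}_{\alpha,l},\mathfrak{Op}(\widetilde{\mu}_{lm})\mathcal{W}_{\beta,m}\big\rangle$, and a short Bloch--Floquet computation (Proposition~\ref{rho}, using that $\mathscr{U}_\Gamma\mathcal{W}_{\gamma,j}$ acts in $\mathscr{F}_\theta$ as $e^{i\langle\theta,\gamma\rangle}\psi_j(\theta)$) identifies this with the non-magnetic matrix element $\widetilde{\Lambda}^\epsilon(\alpha,\beta)\,\widehat{\mu_{lm}}_{\alpha-\beta}$, in agreement with the formulas of Proposition~\ref{Band-operators}. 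The leftover is at most $\epsilon$ times
$$
\sum_{\gamma\in\Gamma}\big|\widehat{\mu_{lm}}_\gamma\big|\int_{\X}P\big(|x-\alpha|,|x-\beta|,|\gamma|\big)\,\big|w_l(x-\alpha)\big|\,\big|w_m(x-\gamma-\beta)\big|\,dx ,
$$
which, thanks to the super-polynomial decay of the composite Wannier functions $w_l,w_m$ (Proposition~\ref{iulie30}: $\langle x\rangle^N w_j\in L^\infty(\X)$ for all $N$) and of the coefficients $\widehat{\mu_{lm}}_\gamma$, is finite and bounded uniformly in $(\alpha,\beta)\in\Gamma\times\Gamma$ and in $\epsilon\in[0,\epsilon_0]$. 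This yields the claimed identity with a uniformly bounded $C_\epsilon$; one even obtains an extra factor $\langle\alpha-\beta\rangle^{-m}$ in the error for every $m$, which is not needed here.

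The only point requiring a little care --- and the single difference with the $\mathfrak{H}^\epsilon$ computation already done in the proof of Theorem~\ref{teoremadoi} --- is that here the ``kernel'' $K^{(0)}_{lm}$ is a distribution (a weighted lattice sum of Dirac masses), not a rapidly decaying function, so the double integral must be read as a $\gamma$-series of ordinary integrals; its absolute convergence, which is exactly what legitimates rearranging the phases and interchanging sum and integral, is supplied by the smoothness of $\theta\mapsto\mu_{lm}(\theta)$ on $\mathbb{T}_*$. I expect no genuine obstacle beyond this bookkeeping.
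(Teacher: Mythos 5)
Your proposal follows essentially the same route as the paper's own proof: you expand $\mathfrak{Op}^\epsilon(\widetilde{\mu}_{lm})$ into a $\Gamma$-sum of magnetically twisted lattice translations weighted by the rapidly decaying Fourier coefficients $\widehat{\mu_{lm}}_\gamma$, recombine the three phase factors into $\widetilde{\Lambda}^\epsilon(\alpha,\beta)$ times two triangular holonomies, and control the deviation from $1$ via Remark \ref{Omega} together with the decay of the Wannier functions, exactly as in the paper. The only cosmetic difference is that you identify the leading term through the Bloch--Floquet picture (Proposition \ref{rho}) instead of the paper's direct position-space evaluation, which changes nothing of substance.
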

\begin{proof}
 It is evidently enough to compute the following scalar products:
 \begin{align*}
 \left\langle\widetilde{\mathcal{W}}^\epsilon_{\alpha,l},\mathfrak{Op}
^\epsilon(\widetilde{
\mu}_{lm})\widetilde{\mathcal{W}}^\epsilon_{\beta,m}\right\rangle_{
L^2(\X)}&=\\
 =\underset{\gamma\in\Gamma}{\sum}\big(\widehat{\mu_{lm}}\big)_{-\gamma}&\int_\X
 \widetilde{\Lambda}^\epsilon(\alpha,x)\overline{\psi_l(x-\alpha)}
 \widetilde{\Lambda}^\epsilon(x,x+\gamma)\big(\tau_{\gamma}\widetilde{\Lambda}
^\epsilon(Q,\beta)\tau_{-\beta}\psi_m\big)(x)\,dx\\
 =\underset{\gamma\in\Gamma}{\sum}\big(\widehat{\mu_{lm}}\big)_{-\gamma}&\int_\X
 \widetilde{\Lambda}^\epsilon(\alpha,x)\overline{\psi_l(x-\alpha)}
 \widetilde{\Lambda}^\epsilon(x,x+\gamma)\widetilde{\Lambda}
^\epsilon(x+\gamma,\beta)\psi_m(x-\beta+\gamma)\,dx
\end{align*}
\begin{align*}
 =\widetilde{\Lambda}^\epsilon(\alpha,\beta)\big(\widehat{\mu_{lm}}\big)_{
\alpha-\beta}\,&+
 \\
 +\,\underset{\gamma\in\Gamma}{\sum}\big(\widehat{\mu_{lm}}\big)_{
-\gamma}&\int_\X\left[\Omega^\epsilon(\alpha,x,
x+\gamma)\Omega^\epsilon(x,x+\gamma,\beta)\,-\,1\right]
\overline{\psi_l(x-\alpha)}\psi_m(x-\beta+\gamma)\,dx.
 \end{align*}
 In order to estimate the last contribution above we notice that
 $$
 \underset{\gamma\in\Gamma}{\sum}\left|\big(\widehat{\mu_{lm}}\big)_{
-\gamma}\right|\int_\X\left|\Omega^\epsilon(\alpha,x,
x+\gamma)\Omega^\epsilon(x,x+\gamma,\beta)\,-\,1\right|
|\overline{\psi_l(x-\alpha)}|\,|\psi_m(x-\beta+\gamma)|\,dx\leq
 $$
 $$
 \leq\epsilon\,C_\epsilon\underset{\gamma\in\Gamma}{\sum}\left|\big(\widehat{
\mu_{lm}}\big)_{
-\gamma}\right|<\gamma>^2\int_\X<x-\alpha>|\psi_l(x-\alpha)|\,<x-\beta+\gamma>
|\psi_m(x-\beta+\gamma)|\,dx\leq\epsilon\,C'_\epsilon.
 $$
\end{proof}

\subsection{Proof of Theorem \ref{teorema3}}

Let us recall that when composite Wannier functions exist, the non-magnetic band Hamiltonian $HE_I(H)$ is 
unitarily equivalent with a matrix in $l^2(\Gamma)^N$ (see Proposition \ref{Band-operators}) given by 
$\widehat{\mu_{jk}}_{\alpha-\beta}$, where $\alpha,\beta\in \Gamma$ and $1\leq j,k\leq N$.

We now investigate what happens when the magnetic field perturbation $B_\epsilon$ is not just globally small but it also has slow variation, i.e. it is generated by a vector potential $A_\epsilon(x)=A(\epsilon x)$, where $A$ has bounded first order  derivatives. From Corollary \ref{iulie33} we know that the magnetic matrix  $\left\langle
\mathcal{W}^\epsilon_{\alpha,l},H^\epsilon\mathcal{W}^\epsilon
_{\beta,m}\right\rangle_{L^2(\X)}$ is (up to an error of order $\epsilon$ in the norm topology) unitarily equivalent with:
\begin{equation}\label{iulie15}
\exp\left\{-i\int_{[\alpha,\beta]}A_\epsilon\right\}\big(\widehat{\mu_{lm}}
\big)_{\alpha-\beta}.
\end{equation}
We compute:
\begin{align*}
\int_{[\alpha,\beta]}A_\epsilon&=\underset{1\leq j\leq d}{\sum}(\beta-\alpha)_j\int_{-1/2}^{1/2}dt\,A_j
\big(\epsilon(\alpha+\beta)/2+t\epsilon(\beta-\alpha)\big)\\
&=\left\langle A_\epsilon\big((\alpha+\beta)/2\big),\beta-\alpha\right\rangle\,+\,\epsilon\underset{j,k}{\sum}(\beta-\alpha)_j(\beta-\alpha)_k
\int_{0}^{1}ds\int_{-1/2}^{1/2}tdt\big(\partial_kA_j\big)\big(\epsilon st(\beta-\alpha)\big).
\end{align*}
Taking into account the hypothesis on the derivatives of the vector potential we obtain:
$$\exp\left\{-i\int_{[\alpha,\beta]}A_\epsilon\right\}-\exp\left\{-i\left\langle A_\epsilon\big((\alpha+\beta)/2\big),\beta-\alpha\right\rangle\right\}=\mathcal{O}\left (\epsilon <\alpha-\beta>^2\right).$$

Using the rapid decay of $\big(\widehat{\mu_{lm}}
\big)_\gamma$ with respect to $\gamma\in\Gamma$, we conclude that the spectrum of the matrix in \eqref{iulie15} is at a Hausdorff distance of order $\epsilon$ from the spectrum of  
\begin{equation}\label{iulie16} 
\exp\left\{-i\left\langle A_\epsilon((\alpha+\beta)/2),\beta-\alpha\right\rangle\right\}|E_*|^{-1}\int_{E_*}
\exp\{i\langle \theta,\alpha-\beta\rangle\}\mu_{lm}(\theta)d\theta \quad {\rm in}\quad l^2(\Gamma)^N.
\end{equation}
The above matrix can be identified with a bounded operator in $L^2(\X)^N\sim [l^2(\Gamma)\otimes L^2(E)]^N$ by taking the tensor product with the identity operator in $L^2(E)$. Its $N\times N$ matrix-valued integral kernel is then given by: 
\begin{equation}\label{iulie17} 
K^\epsilon([x]+\hat{x},[x']+\hat{x'}):=\delta(\hat{x}-\hat{x'})
\exp\left\{-i\left\langle A_\epsilon(([x]+[x'])/2),[x']-[x]\right\rangle\right\}\widehat{\mu}([x]-[x']).
\end{equation}
This operator is isospectral with the matrix in \eqref{iulie16}, hence its spectrum lies at a Hausdorff distance of order $\epsilon$ from the spectrum of $H^\epsilon E_I(H^\epsilon)$. 

Next we compute the integral kernel of the operator $\mathfrak{Op}(\mu^\epsilon)$ where $\mu^\epsilon(x,\xi)=\tilde{\mu}(\xi-A_\epsilon(x))$, and we shall compare it with $K^\epsilon$. The symbol $\mu^\epsilon(x,\xi)$ is $\Gamma_*$-periodic in the $\xi$ variable. 
The $N\times N$ matrix valued integral kernel of its Weyl quantization acting on $L^2(\X)^N$ is given by:
\begin{align}\label{novem1}
C^\epsilon([x]+\hat{x},[x']+\hat{x'}):=(2\pi)^{-d}\int_{\X^*} \mu^\epsilon((x+x')/2,\xi) e^{i\langle \xi,x-x'\rangle}d\xi.
\end{align}
Each $\xi\in \X^*$ can be uniquely written as $\xi=[\xi]+\theta$ with $[\xi]\in \Gamma_*$ and $\theta\in E_*$. We have $\mu^{\epsilon}(x,\xi)=\mu^\epsilon(x,\theta)$ and using $e^{i\langle \gamma^*,[x]-[x']\rangle}=1$ together with the completeness relation 
in $L^2(E)$
$$\sum_{\gamma^*\in \Gamma_*}\frac{(2\pi)^{d}}{|E_*|}
e^{i\langle \gamma^*,\hat{x}-\hat{x'}\rangle}=\delta(\hat{x}-\hat{x'})$$ we have:
\begin{align}\label{novem2}
C^\epsilon(x,x')&=\int_{E_*}d\theta (2\pi)^{-d}\sum_{\gamma^*\in \Gamma_*}
e^{ i\langle \gamma^*,x-x'\rangle }e^{ i\langle \theta,\hat{x}+[x]-\hat{x'}-[x']\rangle }\mu^\epsilon((\hat{x}+\hat{x'})/2+([x]+[x'])/2,\theta)\nonumber \\
&=\delta(\hat{x}-\hat{x'})|E_*|^{-1}\int_{E^*}d\theta e^{ i\langle \theta,[x]-[x']\rangle}
\widetilde{\mu}(\theta-A_\epsilon(\hat{x}+([x]+[x'])/2))\nonumber \\
&=\delta(\hat{x}-\hat{x'})
\exp\left\{-i\left\langle A_\epsilon(\hat{x}+([x]+[x'])/2),[x']-[x]\right\rangle\right\}\widehat{\mu}_{[x]-[x']}.
\end{align}
Because $E$ is bounded and $A$ has all its first order derivatives bounded we obtain the estimate:
$$\left\langle A_\epsilon(\hat{x}+([x]+[x'])/2),[x']-[x]\right\rangle-\left\langle A_\epsilon(([x]+[x'])/2),[x']-[x]\right\rangle=\mathcal{O}(\epsilon <[x']-[x]>).$$
Due to the localization properties of $\widehat{\mu}_\gamma$, we can replace  $A_\epsilon(\hat{x}+([x]+[x'])/2)$ with $A_\epsilon(([x]+[x'])/2)$ in \eqref{novem2} with the price of an error of order $\epsilon$ in the norm topology. But then we obtain the kernel in \eqref{iulie15} and the proof is over.

\vspace{0.5cm}

\noindent {\bf Acknowledgements}. Horia D. Cornean was supported by Grant 11-106598 of the Danish Council for Independent Research | Natural Sciences, and a Bitdefender Invited Professor Scholarship with IMAR, Bucharest. Radu Purice acknowledges the partial support of a grant of the Romanian National Authority for Scientific Research, CNCS-UEFISCDI, project number PN-II-ID-PCE-2011-3-0131 and thanks the University of Aalborg for its kind hospitality during part of the elaboration of this work.

\end{document}